\newtheorem{thm}{Theorem}
\newenvironment{thmnb}[1]
  {%
   \addtocounter{thm}{-1}%
   \begin{thm}}
  {\end{thm}}
\newtheorem{prop}{Proposition}
\newenvironment{propnb}[1]
  {%
   \addtocounter{prop}{-1}%
   \begin{prop}}
  {\end{prop}}
\newtheorem{lem}{Lemma}
\newtheorem{claim}{Claim}
\newtheorem{asm}{Assumption}
\newtheorem{ex}{Example}
\newtheorem*{remark}{Remark}
\newtheorem{coro}{Corollary}
\theoremstyle{definition}
\newtheorem{dfn}{Definition}
\newcommand{\bA}{\mathbf{A}}
\newcommand{\bR}{\mathbf{R}}
\newcommand{\stg}{\sigma}
\newcommand{\stgp}{\Sigma}
\newcommand{\hthd}{f}
\newcommand{\thd}{\hat{f}}
\newcommand{\acc}{A}
\newcommand{\err}{I}
\newcommand{\wos}{k} 
\newcommand{\Wos}{K} 
\newcommand{\Wosset}{\mathcal{W}} 
\newcommand{\Wosrv}{W} 
\newcommand{\sig}{m} 
\newcommand{\Sig}{M} 
\newcommand{\Sigset}{\mathcal{S}} 
\newcommand{\Sigrv}{S} 
\newcommand{\ut}{u} 
\newcommand{\vt}{v} 
\newcommand{\agf}{\alpha_F}
\newcommand{\agc}{\alpha_C}
\newcommand{\agu}{\alpha_U}
\newcommand{\hagf}{\hat{\alpha}_F}
\newcommand{\hagc}{\hat{\alpha}_C}
\newcommand{\hagu}{\hat{\alpha}_U}
\newcommand{\aga}{\alpha^{\bA}}
\newcommand{\agr}{\alpha^{\bR}}
\newcommand{\agset}{N}
\newcommand{\ag}{N}
\newcommand{\sag}{n}
\newcommand{\Thd}{\mu}
\newcommand{\Low}{L}
\newcommand{\High}{H}
\newcommand{\Lowset}{\mathcal{L}}
\newcommand{\Highset}{\mathcal{H}}
\newcommand{\lp}{\lambda}
\newcommand{\bp}{\beta}
\newcommand{\tf}{N_F}
\newcommand{\tc}{N_C}
\newcommand{\tu}{N_U}
\newcommand{\negT}{\mathcal{N}}
\newcommand{\xrv}{X}
\newcommand{\yrv}{Y}
\newcommand{\vart}{s_N}
\newcommand{\inst}{\mathcal{I}}
\newcommand{\exshare}{excess expected vote share}
\newcommand{\exshares}{excess expected vote shares}
\title{The Wisdom of Strategic Voting}
\author{Qishen Han}
\email{hnickc2017@gmail.com}
\affiliation{%
  \institution{Rensselaer Polytechnic Institute}\city{Troy}\state{NY}\country{USA}}
\author{Grant Schoenebeck}
\email{schoeneb@umich.edu}
\affiliation{%
  \institution{University of Michigan}\city{Ann Arbor}\state{MI}\country{USA}}
\author{Biaoshuai Tao}
\email{bstao@sjtu.edu.cn}
\affiliation{%
  \institution{Shanghai Jiao Tong University}\city{Beijing}\country{China}}
\author{Lirong Xia}
\email{xialirong@gmail.com}
\affiliation{%
  \institution{Rensselaer Polytechnic Institute}\city{Troy}\state{NY}\country{USA}}
\begin{abstract}
We study the voting game where agents' preferences are endogenous decided by the information they receive, and they can collaborate in a group. We show that strategic voting behaviors have a positive impact on leading to the ``correct'' decision, outperforming the common non-strategic behavior of informative voting and sincere voting. Our results give merit to strategic voting for making good decisions. 

To this end, we investigate a natural model, where voters' preferences between two alternatives depend on a discrete state variable that is not directly observable. Each voter receives a private signal that is correlated with the state variable. 
We reveal a surprising equilibrium between a strategy profile being a strong equilibrium and leading to the decision favored by the majority of agents conditioned on them knowing the ground truth (referred to as the {\em informed majority decision})
: as the size of the vote goes to infinity, every $\varepsilon$-strong Bayes Nash Equilibrium with $\varepsilon$ converging to $0$ formed by strategic agents leads to the informed majority decision with probability converging to $1$. On the other hand, we show that informative voting leads to the informed majority decision only under unbiased instances, and sincere voting leads to the informed majority decision only when it also forms an equilibrium.
\end{abstract}
\begin{document}


\maketitle


\section{Introduction}
{
Today,  voting is used to make an array of binary decisions permeating nearly every corner of life including in recall/run-off public elections, adoption of decrees by religious institutions, decisions by corporate boards on whether or not to pursue a new strategy/acquisition/etc, hiring and by-law decisions at university, and public entertainments like talent shows.  In most cases, the voting is attempting to aggregate both the agents' preferences and knowledge. A key aspect of this setting is that agents have preferences over outcomes contingent on some underlying state that they cannot directly observe, and the goal is to make a ``good'' decision that reflects the real preferences of the agents.

\begin{ex}
\label{ex:motive}
Suppose the voters vote to decide the policy towards the COVID-19 pandemic. The two choices are to accept the more-restrictive policy (Accept) and to keep the status quo (Reject). The consequence of the policy depends on the fact that the COVID virus is of high or low risk, and more people tend to accept the policy when COVID is of high risk than when COVID is of low risk. The voters do not know the risk level of the virus directly. Instead, every voter forms a private judgment on the risk level based on his/her own information sources. voters may have different opinions on whether to accept the policy which may or may not depend on the risk level. Can the voters achieve a good decision via the majority vote?
\end{ex}

Three different lines of work aim to address this problem under different models and with different goals. The first line of work is axiomatic social choice~\citep{Arrow1951:social,Plott76:Axiomatic}, where agents' preferences are exogenously given, and the goal is to design voting rules that satisfy desiderata, often called {\em axioms}, especially when agents {\em sincerely} report their preferences. The second line of work is along the extensions of the Condorcet Jury Theorem~\citep{Condorcet1785:Essai}, where agents' preferences are endogenous and depend on the information structure and the signals they receive. The goal is to design mechanisms to reveal the true state of the world, especially when agents vote {\em informatively}, i.e., their votes honestly reflect the private signals they receive. The survey by~\citet{Nitzan17:Collective} provides a comprehensive overview. The third line of work originated from~\citet{Feddersen97:Voting}, where agents' preferences are endogenous as in the second line of work, yet the goal is different. Instead of revealing the true state of the world, the goal is to achieve {\em informed majority decision}, which  is the decision favored by the majority of the agents if the world state were known to them. Our paper is along the third line of work.


\subsection{Strategic Behaviors}


Previous work shows that a good decision can be reached when agents follow sincere or informative behaviors. However, when agents are {\em strategic}, they may have incentives to deviate from sincere or informative voting to achieve a preferred result with a higher probability. 
This is not a problem for axiomatic social choice, as strategic agents will always vote for their preferred alternative in binary voting~\citep{Barbera1991:commitee}. However, when agents have preferences decided by uncertain world states, the surprising result by~\citet{Austen96:Information} shows that even in binary voting, informative voting may fail to form a Nash equilibrium. The key insight is that an agent's vote makes a difference only when all other votes form a tie, which means that when an agent strategically thinks about his/her vote, effectively he/she gains more information about the ground truth (by assuming that other votes are tied). This is illustrated in the following example.

\begin{ex}
   Consider an instance of the COVID policy problem, where the utility of the agents and signal distribution of different risk levels are shown in the tables below. 

\vspace{0.3cm}
\begin{minipage}[c]{0.5\textwidth}
\centering
\begin{tabular}{@{}ccc@{}}
\toprule
State    & High Signal & Low Signal \\ \midrule
High Risk & 0.9         & 0.1        \\
Low Risk  & 0.4         & 0.6        \\ \bottomrule
\end{tabular}
\captionof{table}{Signal distributions.}
\end{minipage}
\begin{minipage}[c]{0.5\textwidth}
\centering
\begin{tabular}{@{}ccc@{}}
\toprule
State    & Accept & Reject  \\ \midrule
High Risk & 1         & 0       \\
Low Risk  & 0         & 1       \\ \bottomrule
\end{tabular}
\captionof{table}{Agents' utilities.}
\end{minipage} 
  
    
    Suppose all but one agents are informative and the remaining agent is strategic. Informative agents vote for Accept when they receive a high signal and Reject when they receive a low signal. The strategic agent only cares about the pivotal case where exactly half of the informative agents vote for Accept. However, given that agents receive high signals with a probability of 0.9 given the risk level being high, the pivotal case implies a high probability that the risk level is low, and the strategic agent will vote for Reject even after receiving a high signal. 
\end{ex}

The above deviation of strategic binary voting with preferences endogenously affected by the unobservable world state sharply contrasts with the axiomatic social choice where preferences are exogenously given. In the latter case, the majority rule is strategy-proof while the former case attracts a large literature to study the binary voting problem under game theoretical contexts, studying the impact of strategic behavior. For the truth-revealing goal, \citet{Wit98:Rational} and \citet{Myerson98:Extended} show that a selected equilibrium with mixed strategy reveals the world state with high probability. \citet{Feddersen1998:innocent} show the existence of such equilibrium in any 
non-unanimous voting, while in unanimous voting strategic voting has a constant probability to make a mistake. And for the informed majority decision, \citet{Feddersen97:Voting} adopt a model with continuous world states and an asymptotically large number of agents whose preferences are drawn from a distribution with full support on a continuum and show that the equilibrium is unique and always leads to the informed majority decision with high probability. \citet{Schoenebeck21:wisdom} proposes a mechanism incentivizing informative voting from agents and leading to the informed majority decision with high probability. 

Nevertheless, there are two aspects not addressed by previous works. 
Firstly, previous works (except for \citet{Schoenebeck21:wisdom}) focus on Nash equilibrium which allows only individual manipulation. In real-world scenarios, on the other hand, such strategic manipulation often occurs in a {\em coalition} of agents. Coalitional manipulation is more powerful than individual manipulation as it allows multiple agents to coordinate and deviate at the same time.
The following example shows that a Nash equilibrium is still prone to a group of manipulators in binary voting. 

\begin{ex}
\label{ex:motive3}
    Consider an instance with three agents, whose utility is shown as follows. 
    
\begin{minipage}{\linewidth}
\vspace{0.3cm}
\centering
\begin{tabular}{@{}ccccc@{}}
\toprule
Agent & (High, Accept) & (High, Reject) & (Low, Accept) & (Low, Reject) \\\midrule
1     & 1            & 0            & 1           & 0           \\
2     & 1            & 0            & 0           & 1           \\
3     & 0            & 1            & 0           & 1 \\\bottomrule
\end{tabular}
\captionof{table}{The utility of three agents under different states and decisions.}
\end{minipage}

The following strategy profile is a Nash equilibrium: agent 1 always votes for Accept, agent 2 votes informatively, and agent 3 always votes for Reject. Here, agent 1 and 3 play their dominant strategies, and, consequently, informative voting is the best strategy for agent 2. However, this strategy profile is dominated by the profile where all agents vote informatively. Under informative voting, the decision in accord with the state (Accept in High state, and Reject in Low state) is selected with a larger probability, and the utility of agent 2 increases. On the other hand, the overall probability of choosing to Accept or Reject does not change, so agent 1 and 3's utilities remain the same. 
\end{ex}

Secondly, previous works focus on the existence of certain equilibria that achieves the goal (revealing the world state or reaching the informed majority decision). However, the existence of multiple equilibria~\citep{Wit98:Rational}, including ``bad equilibria'' that do not lead to the goal, makes the behavior of strategic agents unpredictable, as it is uncertain which equilibrium agents will play. 
One response to multiple equilibria is to select an equilibrium that is more ``natural'' or ``reasonable'' than others, named {\em equilibrium selection}. However, equilibrium selection cannot guarantee that agents will play the selected equilibrium, as it is unclear which equilibrium is more ``natural'' or ``reasonable'' in many scenarios, and agents may not agree on a "more natural" equilibrium even if it exists.

As a consequence, the following research question remains unanswered: {\bf does binary voting always lead to the informed majority decision with coalitional strategic agents?}

\subsection{Our contribution}
We give a surprising confirmative answer to this question under mild conditions. We show that coalitional strategic behaviors positively impact achieving the informed majority decision and outperform non-strategic voting. We show that every equilibrium leads to the informed majority decision, and every voting profile that leads to the informed majority decision is an equilibrium. 
On the contrary, non-strategic behaviors lead to the informed majority decision only under certain conditions. Our results give merit to strategic behaviors and extend Feddersen and Pesendorfer's results to settings with coalitional strategic agents.  

We study the solution concept of {\em $\varepsilon$-strong Bayes Nash Equilibrium}, which precludes groups of agents from reaching higher expected utilities by coordinating. We show the equivalence of a strategy profile being ``good'' (leading to the informed majority decision with high probability, or, equivalently, of high {\em fidelity}) and being an $\varepsilon$-strong Bayes Nash Equilibrium with $\varepsilon = o(1)$ (Theorem~\ref{thm:acc}). We also guarantee the existence of an $\varepsilon$-strong Bayes Nash with $\varepsilon = o(1)$ in any instance (Theorem~\ref{thm:deviate}). 

On the other hand, we characterize the conditions where strategy profiles succeed and fail to achieve the informed majority decision. Applying these results, we study two common non-strategic behavior -- \emph{informative voting}, where agents honestly reflect their private information in their votes, and \emph{sincere voting}, where agents vote as if they are the only decision-maker. We show that (1) informative voting leads to the informed majority decision only when the majority vote threshold is unbiased compared with the signal distribution (Corollary~\ref{coro:informative}), and (2) sincere voting leads to the informed majority decision only when it is also an equilibrium (Corollary~\ref{coro:sincere}). These observations indicate that strategic behavior ``prevails'' over non-strategic behaviors in binary voting!

The technical key for the probability analysis is to compute the \emph{\exshare}, i.e., the amount of expected vote share an alternative attracts that exceeds the threshold, and to upper (or lower) bound the fidelity given different cases of \exshares. A strategy profile has high fidelity if and only if its \exshare{} is strictly positive (Theorem~\ref{thm:arbitrary}). 



We follow the setting in \citet{schoenebeck21wisdom}, which is an extension of the setting in \citet{Austen96:Information}, and consider agents with preferences contingent on underlying world states in a single framework. 
Also, as in Example~\ref{ex:motive} and previous work, we assume that various constraints in the real world prevent discussion after agents see their signals. Such constraints can be of a time aspect (a quick decision must be made and there is no time for discussion), a procedural aspect (a formal conference that prohibits participants from discussing privately before voting), and/or a societal aspect (it is socially unsuitable to discuss some preferences), etc.  Therefore, we consider an {\em ex-ante} setting where the expected utilities are computed before agents receive their signals.

\subsection{Related Work}
The famous Jury Theorem from \citet{Condorcet1785:Essai} has ``formed the basis
for the development of social choice and collective decision-making as modern research
fields''~\citep{Nitzan17:Collective}. The theorem states that a group of decision-makers could reveal the correct world state with a higher probability than any individual in the group, and such probability converges to 1 as the number of group members increases. A large literature on collective decision-making has followed Condorcet's path trying to extend the result into more general models \citep{Miller86:Information, Grofman83:Thirteen, Owen89:Proving, Boland89:Modelling}. 

The game-theoretical study of the Condorcet Jury Theorem starts from \citet{Austen96:Information}. Austen-Smith and Banks study a collaborative voting game where each agent shares the same preference and receives a binary signal correlated with an unknown binary state of the world. However, even in this case, they showed that sincere voting and informative voting do not always form a Nash Equilibrium. As a consequence, the following works focus on the effect of strategic behavior in the majority vote and propose equilibria that reveal the ground truth~\citep{ Myerson98:Extended, Wit98:Rational, Duggan01:Bayesian,meirowitz2002informative,Feddersen1998:innocent}. \citet{Feddersen97:Voting} adopt a similar information structure with the game theoretical study of Condorcet Jury Theorem but aim to achieve a different goal of informed majority decision. We distinguish our work from Feddersen and Pesendorfer's in Appendix~\ref{apx:fp}.
Other generalizations of the Condorcet Jury Theorem include dependent agents~\citep{Nitzan84:Significance, Shapley84:Optimizing, Kaniovski10:Aggregation}, agents with different competencies~\citep{Nitzan80:Investment,Gradstein87:Organizational, Ben11:Condorcet}, and voting with more than two alternatives~\citep{Young88:Condorcet, Goertz14:Condorcet}. 

Another line of work related to collective decision-making focuses on designing mechanisms that lead to the correct decision. Recent work shows the reliability of the ``surprisingly popular'' answer when agents are sincere~\citep{prelec2017solution,Hosseini2021suprisingly} and strategic~\citep{schoenebeck21wisdom}. In particular, \citet{schoenebeck21wisdom} adopt the ``surprisingly popular'' technique into a social choice context with strategic agents, and propose a truthful mechanism to aggregate information. They show that even in a setting where agents have subjective preferences contingent on an objective underlying state, their mechanism reveals the informed majority decision with high probability and is an (ex-ante) $\varepsilon$-strong Bayes Nash Equilibrium with $\varepsilon$ converging to 0 at an exponential rate. Our work follows the setting in Schoenebeck and Tao's work, but our work is different in that the aggregation happens implicitly because agents are acting strategically rather than because a mechanism explicitly selects a surprisingly popular answer. 

Our work is also related to {\em information elicitation}, which aims to collect truthful and high-quality information from agents under a noisy information structure. Information elicitation is well developed with multiple lines of research focusing on different aspects of the problem, including scoring rules~\citep{bickel2007some,Gneiting2007:Scoring}, peer prediction mechanisms~\citep{Miller05:Eliciting,schoenebeck2021:learning,schoenebeck2021information}, Bayesian Truth Serum~\citep{Prelec04:BTS, Witkowski12:Peer}, and prediction markets~\citep{Miller88:Markets,Wolfers2004market}. Unfortunately, information elicitation is incompatible with the voting scenario in our paper for two reasons. Firstly, information elicitation requires agents to be indifferent to the outcome, while agents are incentivized by the outcome of the vote. Secondly, information elicitation uses payments to reward the agents, while voting does not have monetary rewards. 

}

\section{Models and Preliminaries}
{\label{sec:setting}
We first present our model and results with binary world states and binary private signals, which convey the main ideas of this work while also hiding much of the complexity. The general extension into the non-binary setting is in Section~\ref{sec:nb}. We follow the setting in \citet{schoenebeck21wisdom} and consider agents with subjective preferences contingent on an objective underlying state in one framework.  

\paragraph{Alternatives and World States.} $\ag$ agents vote for two alternatives $\bA$ (standing for ``accept'') and $\bR$ (standing for ``reject''). There are $\Wos = 2$ possible world states $\Wosset = \{L, H\}$ (standing for ``low risk'' and ``high risk'' respectively), where $\bA$ is more preferred in $H$, and $\bR$ is more preferred in $L$. We use $\wos$ to denote a generic world state. The world state is not directly observable by the agents. Let $P_{H} = \Pr[\Wosrv = H]$ and $P_{L} = \Pr[\Wosrv = L]$ be the common prior of the world states. We assume $P_H > 0$ and $P_L > 0$. 

\paragraph{Private Signals.} Every agent receives a signal in $\Sigset = \{l, h\}$. We use $\sig$ to denote a generic signal, and $\Sigrv_\sag$ to denote the random variable representing the signal that agent $\sag$ receives. We assume the signals agents receive are independent and have identical distributions conditioned on the world state. Let $P_{\sig\wos} = \Pr[\Sigrv_{\sag} = \sig \mid \Wosrv = \wos]$ be the probability that an agent receives signal $\sig$ under world state $\wos$. The signal distributions $((P_{hH}, P_{lH}), (P_{hL}, P_{lL}))$ are also common knowledge.  We assume that the signals are positively correlated to the world states. Specifically, we have $P_{hH} > P_{hL}$ and $P_{lH} < P_{lL}$. On the other hand, we allow biased signals and DO NOT assume $P_{hH} > P_{lH}$ or $P_{hL} < P_{lL}$. 

\paragraph{Majority Vote.} This paper considers the majority vote with threshold $\Thd$. Each  agent $\sag$  votes for $\bA$ or $\bR$. If at least $\Thd\cdot \ag$ agents vote for $\bA$, $\bA$ is announced to be the winner; otherwise, $\bR$ is announced to be the winner.

\paragraph{Utility and Types of Agents.} Each agent $\sag$ has a utility which is a function of the true world state and the outcome of the vote. Formally, we have  $\vt_{\sag}: \Wosset \times \{\bA, \bR\}\to \{0,1,\ldots, B\}$, where $B$ is the positive integer upper bound. We assume that $\bA$ is more preferable in $H$ than in $L$, and $\bR$ is the opposite: for every agent $\sag$, $\vt_{\sag}(H, \bA) > \vt_{\sag}(L, \bA)$ and $\vt_{\sag}(H, \bR) < \vt_{\sag}(L, \bR)$. 

The different endogenous preferences of agents are reflected by different utility functions. {\em Predetermined} agents always prefer the same alternative, and {\em contingent} agents have preferences depending on the world state. Predetermined agents can be further divided into {\em friendly} and {\em unfriendly} agents based on the alternative they prefer. For an agent $\sag$, if $\sag$ is a friendly agent, $\vt_{\sag}(H,\bA) > \vt_{\sag}(L, \bA) > \vt_{\sag}(L, \bR) > \vt_{\sag}(H, \bR)$; if $\sag$ is an unfriendly agent, $\vt_{\sag}(L, \bR) > \vt_{\sag}(H, \bR) > \vt_{\sag}(H,\bA) > \vt_{\sag}(L, \bA)$; and if $\sag$ is a contingent agent, $\vt_{\sag}(H, \bA) > \vt_{\sag}(H, \bR)$ and $\vt_{\sag}(L,\bR) > \vt_{\sag}(L, \bA)$. 

Let $\agf, \agu$, and $\agc$ be the approximated fraction of each type of agent. Formally, given $\ag$ agents, $\tf = \lfloor \agf\cdot \ag \rfloor$ is the number of friendly agents, $\tu = \lfloor \agu\cdot \ag \rfloor$ is the number of unfriendly agents, and $\tc = \ag - \tf - \tu$ is the number of contingent agents.  $\agf, \agu$, and $\agc$ are common knowledge and do not depend on $\ag$.



\paragraph{Informed Majority Decision} The goal of the voting is to output the informed majority decision, which is the alternative favored by the majority of the agents if the world state were known. The informed majority decision shares the same threshold $\Thd$ as the majority vote threshold. If $\bA$ is preferred by at least $\mu\cdot \ag$ agents, then $\bA$ is the informed majority decision; otherwise, $\bR$ is the informed majority decision. 

In this paper, we assume that neither friendly agents nor unfriendly agents can dominate the vote. Otherwise, the informed majority decision does not depend on the state and one coalition can always enact it via a dominant strategy.
As a result, $\bA$ is the informed majority decision when the world state is $H$, and $\bR$ is the informed majority decision when the world state is $L$.

\begin{ex}
\label{ex:setting} Consider the COVID policy-making scenario. $\ag=20$ voters decide whether to accept (denoted as $\bA$) or reject (denote as $\bR$) the more-restrictive policy. The world state $\{L, H\}$ describes the real risk level of the virus. $\Wosrv = H$ means high risk level, and $\Wosrv = L$ means low risk level. The voters' beliefs form a common prior based on some preliminary reports. Suppose $P_H = 0.4$ and $P_L = 0.6$, which means the risk level has a prior probability of 0.4 to be high. 

Every voter receives a private signal $l$ or $h$ from his/her information sources. The signals somehow reflect the risk level but are noisy. Suppose this is a biased scenario (for example, there has been a boost of positive cases in the past week), and members are always more likely to receive the high signal. For example, $P_{hH} =0.8$ and $P_{hL} = 0.6$, i.e., a voter will receive an $h$ signal with probability 0.8 if the risk level is high and receive an $h$ signal with probability 0.6 if the risk level is low. 

\begin{minipage}{\linewidth}
\vspace{0.3cm}
\centering
\begin{tabular}{@{}ccccc@{}}
\toprule
(Winner, World State)           & $(\bA, H)$         & $(\bA, L)$         & $(\bR, H)$         & $(\bR, L)$         \\ \midrule
Friendly agent   & 8           & 6           & 2           & 4           \\
Unfriendly agent & 3           & 1           & 5           & 8           \\
Contingent agent & 3           & 2           & 1           & 8           \\ \bottomrule
\end{tabular}
\captionof{table}{Utility of agents in Example~\ref{ex:setting}.\label{tab:util}}
\end{minipage}

The majority vote threshold is $\Thd = 0.6$. Therefore, $\bA$ is the winner if and only if at least 12 voters vote for it. There are 4 friendly voters, 6 unfriendly voters, and 10 contingent voters. The informed majority decision depends on the world state: ``accept'' is the informed majority decision if the world state is $H$, and ``reject'' is the informed majority decision if the world state is $L$. 

We assume that agents of the same type share the same utility function (which may not be true in general) shown in Table~\ref{tab:util}.

\end{ex}

\paragraph{Strategy.} A (mixed) strategy is a mapping from the agent's private signal to a distribution on $\{\bA, \bR\}$. For a set $S$, let $\Delta(S)$ be the set of all possible distributions on $S$. Formally, an agent $\sag$'s strategy $\stg_{\sag}: \Sigset \to \Delta(\{\bA, \bR\})$. A strategy can be represented as a vector $\stg  = (\beta_l, \beta_h)$, where $\beta_{\sig}$ is the probability that the agent votes for $\bA$ when receiving signal $\sig$. A strategy profile is the vector of strategies of all agents. $\stgp = (\stg_1, \stg_2, \ldots, \stg_{\ag})$. We call a strategy profile $\stgp$ {\em a symmetric strategy profile induced by strategy $\stg$} if all agents play the same strategy $\stg$ in $\stgp$. 

\begin{dfn}
    \label{dfn: informative}
    An informative strategy is $\stg = (0,1)$, i.e. voting for $\bA$ when receiving $h$ and voting for $\bR$ when receiving $l$. A strategy profile is informative when every agent votes informatively. 
\end{dfn}

In this paper, we focus on {\em regular} strategy profiles. 
\begin{dfn}
\label{dfn:reg}
A strategy profile $\stgp$ is {\em regular} if all friendly agents always vote for $\bA$, and all unfriendly agents always vote for $\bR$ in $\stgp$. 
\end{dfn}
We believe this restriction is mild and natural since ``always vote for $\bA$'' is the dominant strategy for a friendly agent, and ``always vote for $\bR$'' is the dominant strategy for an unfriendly agent in the majority vote.

\paragraph{Fidelity and Expected Utility.} 
Given a strategy profile $\stgp$, let $\lp_{\wos}^{\bA}(\stgp)$ ($\lp_{\wos}^{\bR}(\stgp)$, respectively) be the (ex-ante, before agents receiving their signals) probability that $\bA$ ($\bR$, respectively) becomes the winner when the world state is $\wos$. 

\begin{dfn}[{\bf Fidelity}{}] 
Fidelity is the likelihood that the informed majority decision is reached. In our setting, the fidelity when agents play strategy profile $\stgp$ is 
\begin{align*}
    \acc(\stgp) = & P_{L}\cdot \lp_{L}^{\bR}(\stgp) + P_{H} \cdot \lp_{H}^{\bA}(\stgp). 
\end{align*}
\end{dfn}
We use the word {\em fidelity} to distinguish the notion from {\em accuracy}, which usually denotes the likelihood that the correct world state is revealed. 

The (ex-ante) expected utility of an agent $\sag$ exclusively depends on $\lp_{\wos}^{\bA}(\stgp)$ and $\lp_{\wos}^{\bR}(\stgp)$:
\begin{equation*}
    \ut_{\sag}(\stgp) =P_{L}(\lp_{L}^{\bA}(\stgp)\cdot\vt_{\sag}(L, \bA) + \lp_{L}^{\bR}(\stgp)\cdot \vt_{\sag}(L, \bR)) +  P_{H}(\lp_{H}^{\bA}(\stgp)\cdot\vt_{\sag}(H, \bA) + \lp_{H}^{\bR}(\stgp)\cdot \vt_{\sag}(H, \bR)). 
\end{equation*}

\paragraph{Instance and Sequence of Strategy Profiles.} We define an instance $\inst$ of a voting game on the agent number $\ag$, the majority vote threshold $\Thd$, the world state prior distribution $(P_L, P_H)$, the signal distributions $((P_{hH}, P_{lH}), (P_{hL}, P_{lL}))$, the utility functions of all the agents $\{\vt_\sag\}_{\sag=1}^\ag$, and the approximated fraction of each type $(\agf, \agu, \agc)$. Let $\{\inst_\ag\}_{\ag=1}^{\infty}$ (or $\{\inst_\ag\}$ for short) be a sequence of instances, where each $\inst_\ag$ is an instance of $\ag$ agents.
The instances in a sequence share the same parameters $\{\mu, (P_L, P_H), ((P_{hH}, P_{lH}), (P_{hL}, P_{lL})), (\agf, \agu, \agc)\}$. We do not regard agents in different instances as related and have no additional assumption on the utility functions of agents.

We define a sequence of strategy profiles $\{\stgp_\ag\}_{\ag=1}^{\infty}$ on an instance sequence $\{\inst_\ag\}$. Similarly, we do not have additional assumptions about the agents. Therefore, for different instances in the sequence, the strategies and utility functions of agents can be drastically different. A strategy profile sequence $\{\stgp_{\ag}\}$ is symmetric and induced by strategy $\stg$ if every strategy profile $\stgp_\ag$ in the sequence is a symmetric strategy profile induced by $\stg$. A sequence of strategy profiles is regular if every strategy profile in the sequence is a regular profile. 

\paragraph{$\varepsilon$-strong Bayes Nash Equilibrium} In this paper, we use the solution concept of $\varepsilon$-strong Bayes Nash Equilibrium, an approximation of strong Bayes Nash Equilibrium where no group of agents can increase their utilities by more than $\varepsilon$ through deviation.  A strategy profile $\stgp = (\stg_1, \stg_2,\cdots, \stg_\ag)$ is an {\em $\varepsilon$-strong Bayes Nash Equilibrium} ($\varepsilon$-strong BNE) if there does not exist a subset of agents $D$ and a strategy profile $\stgp' = (\stg_1', \stg_2',\cdots, \stg_\ag')$ such that
\begin{enumerate}
    \item $\stg_\sag = \stg_\sag'$ for all $\sag\not\in D$; 
    \item $\ut_{\sag}(\stgp') \ge \ut_{\sag}(\stgp)$ for all $\sag\in D$; and 
    \item there exists $\sag\in D$ such that $\ut_{\sag}(\stgp') > \ut_{\sag}(\stgp) +\varepsilon$. 
\end{enumerate}

By definition, when $\varepsilon=0$, the equilibrium is a strong Bayes Nash Equilibrium where no group of agents can strictly increase their utilities through deviation. Unfortunately, a strong BNE does not always exist, as shown in the following theorem. Therefore, we seek $\varepsilon$-strong BNE as an approximation. 

\begin{thm}
\label{prop:no_BNE}
For any $N_0 \in \mathbb{N}$, there exists an instance of $N > N_0$ agents, in which a strong Bayes Nash Equilibrium does not exist. 
\end{thm}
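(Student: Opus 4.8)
The plan is to construct, for each $N_0$, a small explicit instance with $N > N_0$ agents in which every regular strategy profile admits a profitable coalitional deviation, so that no strong Bayes Nash Equilibrium exists. The idea is to engineer a tension between contingent agents, who want the outcome to track the state, and a group of predetermined agents, who want a fixed outcome — a tension that cannot be resolved at any profile. A natural template is the three-agent instance of Example~\ref{ex:motive3}, padded with extra friendly/unfriendly agents to reach any desired $N$ while keeping the fractions $(\agf,\agu,\agc)$ fixed and keeping both predetermined groups from dominating the vote (so the informed majority decision still depends on the state). I would choose the threshold $\Thd$, the prior $(P_L,P_H)$, and the signal distribution so that a single contingent ``swing'' block is pivotal, and so that the contingent agents' utilities strictly reward matching the state.

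The core of the argument is a case analysis over all regular strategy profiles on this instance. Since friendly agents vote $\bA$ and unfriendly agents vote $\bR$ by regularity, a profile is determined by what the contingent agents do. First I would show that if the contingent agents' votes make the outcome state-independent (e.g.\ $\bA$ always wins, or $\bR$ always wins, or the outcome is a fixed lottery uncorrelated with $\wos$), then the coalition of all contingent agents can strictly improve by switching to informative voting, which introduces positive correlation between the outcome and the state and raises each contingent agent's expected utility while (by the pivotality structure) not changing the predetermined agents' payoffs — exactly the mechanism in Example~\ref{ex:motive3}. Second, if the contingent agents do vote in a state-correlated way, I would show that the predetermined coalition (say the friendly agents together with enough contingent agents, or by tuning the instance so that a small predetermined group plus the pivot threshold interact badly) can deviate to tilt the outcome toward its preferred alternative and strictly gain; the key is that whenever informativeness is ``active,'' there is slack for a predetermined block to push the fixed outcome and benefit. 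Handling mixed strategies requires a short continuity/limiting observation: the sets of achievable $(\lp_\wos^{\bA})$ vectors are determined by the $\beta$-parameters, and the two kinds of deviations cover the boundary as well as the interior, so no profile is immune.

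One should also dispose of non-regular profiles, but the paper restricts attention to regular profiles and argues regularity is without loss (dominant strategies for predetermined agents), so it suffices to rule out strong BNE among regular profiles; alternatively I would note directly that a friendly agent not voting $\bA$, or an unfriendly agent not voting $\bR$, has a strict unilateral improvement, so such profiles are not even Nash, let alone strong BNE. Finally, I would verify the padding is legitimate: for every $N > N_0$ one can pick integers $\tf,\tu,\tc$ with the prescribed ratios (rounding as in the model, $\tf=\lfloor\agf N\rfloor$ etc.), with $\tc\ge 1$ so the contingent coalition is nonempty, and with $\tf,\tu$ both strictly below the pivot thresholds on either side so neither predetermined group alone decides the election.

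The main obstacle I anticipate is the case analysis for \emph{mixed} contingent strategies: it is easy to rule out the two pure ``state-independent'' profiles and the pure informative profile, but a general mixed profile $\stg=(\beta_l,\beta_h)$ produces a two-parameter family of win probabilities, and one must show that for \emph{every} such profile either a contingent-coalition deviation or a predetermined-coalition deviation is strictly profitable with no agent in the coalition losing. I expect this to reduce to showing that the expected-utility function of at least one coalition is never simultaneously at a ``local peak'' in the relevant direction — i.e.\ the two coalitions' ideal directions in the space of outcome probabilities are never aligned — which I would prove by an explicit choice of utilities making the relevant gradients point in opposite or transverse directions everywhere on the feasible region.
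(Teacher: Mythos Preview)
Your plan has a genuine gap at the most delicate profile: the regular profile at which the contingent agents are already playing their joint best response (call it the $C$-optimal regular profile). At that profile, a contingent-only coalition cannot improve (by definition of $C$-optimality), and a predetermined-only coalition cannot improve either, because friendly/unfriendly agents are already playing their dominant strategies and so cannot ``tilt the outcome toward their preferred alternative'' any further. Your proposed mixed-coalition mechanism has the direction backwards: at a regular profile, predetermined agents are already maximally tilted, so any deviation involving them must move \emph{away} from their preferred alternative, not toward it. It is not at all obvious why a friendly agent would ever join such a coalition, and your sketch gives no mechanism for this. Relatedly, your dismissal of non-regular profiles is fine for ruling them out as equilibria, but you seem to be implicitly restricting the space of \emph{deviation targets} to regular profiles as well, which is where the argument breaks down.

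The paper's proof confronts exactly this obstacle and resolves it by engineering the utilities very carefully. In their instance (with $N_0+1$ friendly, two contingent, $N_0$ unfriendly, and specific numerical utilities), the $C$-optimal regular profile $\stgp_3$ is blocked by a coalition consisting of one friendly agent together with both contingent agents, deviating to a \emph{non-regular} profile $\stgp_1$ in which that friendly agent votes informatively rather than always-$\bA$. The friendly agent accepts a dominated individual strategy because the contingent agents' simultaneous change (one of them switching from always-$\bR$ to informative) more than compensates: the friendly agent's utility actually rises (from $50.3$ to $50.396$ in their numbers). This produces a three-profile cycle $\stgp_1\to\stgp_2\to\stgp_3\to\stgp_1$, and the remainder of the proof is an explicit optimization over the two contingent agents' strategies showing that every other profile is blocked by a deviation into one of these three. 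Your proposal identifies neither this cycle structure nor the key device of a predetermined agent abandoning its dominant strategy as part of a coalition, and the Example~\ref{ex:motive3} template you start from does not obviously scale to furnish it.
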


\begin{proof}[Proof Sketch]
For any $N_0\in \mathbb{N}$, we construct an instance of $N = 2N_0+3$ agents. The agents consist of three parts: $F$ is a set of $N_0+1$ friendly agents. $C$ is a set of two contingent agents. And $U$ is a set of $N_0$ unfriendly agents. Agents in the same set share the same utility, which is shown in Table~\ref{tbl:noBNE_vt}. The threshold is $\Thd =0.5$. The prior distribution is $P_L = P_H = 0.5$. The signal distribution is $P_{hH} = P_{lL}=0.8$ and $P_{lH} = P_{hL} = 0.2$.

\vspace{0.3cm}
\begin{minipage}{0.5\linewidth}
\centering
\begin{tabular}{@{}ccccc@{}}
\toprule
Agents & $\vt(H, \bA)$ & $\vt(L, \bA)$ & $\vt(L, \bR)$ & $\vt(H, \bR)$ \\ \midrule
$F$    & 100           & 99            & 1             & 0            \\
$C$ & 90            & 0             & 100           & 0            \\ 
$U$ & 1 & 0 & 100 & 99\\
\bottomrule
\end{tabular}
\captionof{table}{Utility of three groups\label{tbl:noBNE_vt}}
\end{minipage}
\begin{minipage}{0.5\linewidth}
\centering
    \begin{tabular}{@{}cccc@{}}
\toprule
Agents & $\stgp_1$ & $\stgp_2$ & $\stgp_3$ \\ \midrule
$F$    & 50.396    & 66.14     & 50.3      \\
$C$ & 85.12     & 75.2      & 76        \\ 
$U$ & 50.396 & 34.46 & 50.3\\
\bottomrule
\end{tabular}
\captionof{table}{Expected utility under three profiles\label{tbl:noBNE_ut}}
\end{minipage}

Consider the following three strategy profiles, under which the expected utility of each group is shown in Table~\ref{tbl:noBNE_ut}. 
\begin{itemize}
    \item $\stgp_1$: $N_0$ agents in $F$ always vote for $\bA$, and one agent votes informatively. $C$ vote informatively. $U$ always vote for $\bR$. 
    \item $\stgp_2$: $F$ always vote for $\bA$.  $C$ vote informatively. $U$ always vote for $\bR$. 
    \item $\stgp_3$: $F$ always vote for $\bA$.  One $C$ agent votes informatively, and the other always votes for $\bR$. $U$ always vote for $\bR$. 
\end{itemize}
These three strategy profiles form a cycle $\stgp_1\to \stgp_2\to \stgp_3\to \stgp_1$ of deviation, where a group of agents has incentives to deviate to the next profile. 

For any other strategy profile $\stgp$, there exists a group of agents with incentives to deviate to one of the three profiles. Firstly, $F$ agents and $U$ agents would like to deviate from their dominant strategy of always voting for $\bA$ ($\bR$, respectively) whenever it can increase the probability that their preferred candidate wins. Given $F$ and $U$ agents play dominant strategies, the best strategy for two $C$ agents is to play the strategy in $\stgp_3$ (one agent votes informatively, the other votes for $\bR$). Then we know that an $F$ agent and two $C$ agents have incentives to deviate from $\stgp_3$ to $\stgp_1$. Therefore, there does not exist a strong Bayes Nash in this instance. The full proof is in Appendix~\ref{apx:no_BNE}.
\end{proof}

}

\section{Equivalence between High Fidelity and Strong Equilibrium}
{
In this section, we show that strategic behaviors indeed have a positive impact on leading to the informed majority decision. 
Theorem~\ref{thm:acc} states that if the fidelity of a regular (Definition~\ref{dfn:reg}) strategy profile sequence $\{\stgp_\ag\}_{\ag=1}^{\infty}$, i.e.,   $\acc(\stgp_{\ag})$, converges to 1 as $\ag$ goes to infinity, every $\stgp_\ag$ in the sequence will be an $\varepsilon$-strong Bayes Nash Equilibrium where $\varepsilon$ converges to 0. On the other hand, if $\acc(\stgp_{\ag})$ does not converge to 1, then we can find infinitely many $\stgp_{\ag}$ that are not $\varepsilon$-strong BNE with a constant $\varepsilon$. Moreover, Theorem~\ref{thm:deviate} guarantees that there always exists a regular strategy profile whose fidelity converges to 1, which leads to an $\varepsilon$-strong BNE with $\varepsilon = o(1)$. The two theorems together indicate that strategic voting leads to the informed majority decision in any sequence of instances. 

\begin{thm}
\label{thm:acc}  Given an arbitrary sequence of instances and an arbitrary regular strategy profile sequence $\{\stgp_{\ag}\}_{\ag=1}^{\infty}$, let $\{\acc(\stgp_{\ag})\}_{\ag=1}^{\infty}$ be the sequence of the fidelities of $\stgp_\ag$. 
\begin{itemize}
    \item If $\lim_{\ag\to \infty} \acc(\stgp_{\ag}) = 1$, then for every $\ag$, $\stgp_{\ag}$ is an $\varepsilon$-strong BNE with $\varepsilon = o(1)$. 
    \item If $\lim_{\ag\to \infty} \acc(\stgp_{\ag}) = 1$ does not hold, then there exist infinitely many $\ag$ such that $\stgp_{\ag}$ is NOT an $\varepsilon$-strong BNE for some constant $\varepsilon$. 
    
\end{itemize}
\end{thm}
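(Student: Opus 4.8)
Since $\lambda^{\bA}_w(\Sigma)+\lambda^{\bR}_w(\Sigma)=1$ for each $w\in\{L,H\}$, the expected utility of any agent $n$ rewrites affinely as $u_n(\Sigma)=c_n+P_H\,\delta^H_n\,\lambda^{\bA}_H(\Sigma)+P_L\,\delta^L_n\,\lambda^{\bA}_L(\Sigma)$, where $\delta^w_n:=v_n(w,\bA)-v_n(w,\bR)$ and $c_n$ does not depend on $\Sigma$; likewise $A(\Sigma)=1-\big(P_H\lambda^{\bR}_H(\Sigma)+P_L\lambda^{\bA}_L(\Sigma)\big)$, so $A(\Sigma_N)\to1$ is equivalent to $\lambda^{\bA}_H(\Sigma_N)\to1$ and $\lambda^{\bA}_L(\Sigma_N)\to0$ holding together. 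The type conditions give the sign/size facts that drive everything: a friendly agent has $0<\delta^L_n<\delta^H_n$; an unfriendly agent has $\delta^H_n<\delta^L_n<0$ with $|\delta^L_n|>|\delta^H_n|$; a contingent agent has $\delta^L_n<0<\delta^H_n$; and since utilities are integers in $\{0,\dots,B\}$, $1\le|\delta^H_n|,|\delta^L_n|\le B$ for every $n$.

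\textbf{First bullet.} Suppose toward a contradiction that $\Sigma_N$ is not an $\varepsilon$-strong BNE with $\varepsilon=o(1)$; then there are a constant $\varepsilon>0$ and infinitely many $N$ admitting a deviation $\Sigma'$ and a coalition $D$ in which every member weakly gains and some $n^\star\in D$ gains more than $\varepsilon$. Put $a:=\lambda^{\bA}_H(\Sigma')-\lambda^{\bA}_H(\Sigma_N)$ and $b:=\lambda^{\bA}_L(\Sigma')-\lambda^{\bA}_L(\Sigma_N)$; by the equivalence above $a\le o(1)$ and $b\ge-o(1)$, and the gain of any agent $n$ is exactly $P_H\delta^H_n a+P_L\delta^L_n b$. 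Plugging in the signs shows a contingent agent gains at most $o(1)$, so $n^\star$ must be friendly or unfriendly; by the symmetry interchanging the two alternatives, the two states and the two signals (which swaps the friendly and unfriendly roles) assume $n^\star$ is friendly. Then $a\le o(1)$ forces $b\ge c$ for a constant $c>0$, which makes every contingent agent's gain strictly negative, so $D\subseteq F\cup U$. If $D$ contained a friendly agent $F'$ and an unfriendly agent $U'$, their weak-improvement inequalities (note $U'$'s inequality together with $b>0$ forces $a<0$) combine to $\delta^L_{F'}/\delta^H_{F'}\ge|\delta^L_{U'}|/|\delta^H_{U'}|$, impossible since the left side is $<1$ and the right side is $>1$. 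Hence $D$ consists of friendly agents only; but friendly agents play ``always $\bA$'' in a regular profile, so coupling on the signal realizations their deviation can only lower the $\bA$-vote count in every realization, giving $\lambda^{\bA}_L(\Sigma')\le\lambda^{\bA}_L(\Sigma_N)$, i.e.\ $b\le0$ --- contradicting $b\ge c>0$. (Notably this direction needs no concentration bound, only the affine form, the size inequalities, and monotonicity of the winner in the vote count.)

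\textbf{Second bullet.} By Theorem~\ref{thm:deviate} every instance $\inst_N$ admits a regular profile $\Sigma^\ast_N$ with $A(\Sigma^\ast_N)\to1$; since all regular profiles agree on the friendly agents (always $\bA$) and the unfriendly agents (always $\bR$), $\Sigma^\ast_N$ differs from $\Sigma_N$ only in the strategies of the contingent agents, of which there is at least one --- otherwise a regular profile would make the winner state-independent, contradicting that $\bA$ is the informed majority decision in $H$ and $\bR$ in $L$. Assume $A(\Sigma_N)\not\to1$ and fix $\eta>0$ with $1-A(\Sigma_N)\ge\eta$ for infinitely many $N$. For such $N$, let $D$ be the set of all contingent agents and have them deviate to their $\Sigma^\ast_N$ strategies; using the affine form together with $\lambda^{\bA}_H(\Sigma^\ast_N)\to1$ and $\lambda^{\bA}_L(\Sigma^\ast_N)\to0$, the gain of each contingent agent $n$ equals $P_H\delta^H_n\lambda^{\bR}_H(\Sigma_N)+P_L|\delta^L_n|\lambda^{\bA}_L(\Sigma_N)-o(1)\ge P_H\lambda^{\bR}_H(\Sigma_N)+P_L\lambda^{\bA}_L(\Sigma_N)-o(1)=\big(1-A(\Sigma_N)\big)-o(1)>\eta/2$ for $N$ large. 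So every member of $D$ strictly gains by at least $\eta/2$, hence $\Sigma_N$ is not an $\varepsilon$-strong BNE for the constant $\varepsilon=\eta/4$, for infinitely many $N$.

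\textbf{Main obstacle.} The crux is the combinatorial step in the first bullet that rules out a coalition trading away state-$H$ accuracy to raise $\lambda^{\bA}_L$. This is exactly where the size comparisons $\delta^H_n>\delta^L_n>0$ for friendly agents and $|\delta^L_n|>|\delta^H_n|>0$ for unfriendly agents --- the subtler halves of the type definitions, coming from $v(H,\bA)>v(L,\bA)$ and $v(L,\bR)>v(H,\bR)$ --- are indispensable: they make a coalition containing both a friendly and an unfriendly mover impossible, collapsing $D$ to friendly movers whose effect is controlled by monotonicity. I would also track that all hidden constants (in the $o(1)$'s and in $c,\eta$) depend only on $P_H,P_L,B$ and on $\varepsilon$ (resp.\ $\eta$), never on $N$, so that the quantifiers ``$\varepsilon=o(1)$'' and ``a fixed $\varepsilon$ for infinitely many $N$'' come out as stated.
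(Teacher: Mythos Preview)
Your proof is correct and follows the paper's approach: for the first bullet, the same four-step structure (contingent agents cannot be the $>\varepsilon$ gainer; a friendly big gainer forces $b\ge c>0$; a mixed friendly/unfriendly coalition is impossible; a pure friendly coalition gives $b\le 0$), and for the second bullet, the same appeal to Theorem~\ref{thm:deviate} with the contingent agents as the deviating group. Your one-line ratio contradiction $\delta^L_{F'}/\delta^H_{F'}\ge|\delta^L_{U'}|/|\delta^H_{U'}|$ for the mixed-coalition step is cleaner than the paper's longer substitute-and-bound computation for the same claim, and your contradiction framing sidesteps the paper's explicit case split (in Lemma~\ref{lem:nash}) on whether $1-A(\Sigma')$ is small or large---the trade-off being that the paper's packaging yields the quantitative bound $\varepsilon=2B(B+1)(1-A(\Sigma_N))$ whereas yours only delivers $\varepsilon=o(1)$, which is all the statement asks for.
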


\begin{thm}
    \label{thm:deviate} Given any arbitrary sequence of instances, there always exists a sequence of regular strategy profiles $\{\stgp_{\ag}'\}_{\ag=1}^{\infty}$ such that $\acc(\stgp_{\ag}')$ converges to 1. 
\end{thm}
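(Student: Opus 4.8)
The plan is to explicitly construct, for each $\ag$, a regular symmetric strategy profile $\stgp_\ag'$ whose fidelity tends to $1$, by having the contingent agents play an informative-like strategy that is "tilted'' just enough to correct for the bias in the signal distribution and the asymmetry caused by the fixed mass of friendly/unfriendly agents. The key quantity to control is the \exshare{} of the profile in each world state; by Theorem~\ref{thm:arbitrary}, it suffices to produce a regular profile whose \exshare{} is strictly positive in both states, and then a standard concentration argument finishes the job.

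Concretely, I would proceed as follows. First, since $\agf$ and $\agu$ are fixed constants with neither friendly nor unfriendly agents able to dominate the vote, the friendly agents contribute a deterministic fraction $\agf$ to $\bA$'s vote share, and the unfriendly agents contribute $0$; so $\bA$'s expected vote share in state $\wos$ is $\agf + \agc \cdot q_\wos$, where $q_\wos$ is the probability a contingent agent votes $\bA$ under the chosen strategy $\stg = (\bp_l,\bp_h)$ in state $\wos$, namely $q_\wos = \bp_l P_{l\wos} + \bp_h P_{h\wos}$. I want to choose $(\bp_l,\bp_h)\in[0,1]^2$ so that $\agf + \agc q_H > \Thd$ and $\agf + \agc q_L < \Thd$ — i.e., strictly on the correct side of the threshold in each state. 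The second step is to verify such a choice exists: because $P_{hH} > P_{hL}$ and $P_{lH} < P_{lL}$, the signal is genuinely informative, so the map $(\bp_l,\bp_h)\mapsto(q_L,q_H)$ has a two-dimensional image, and in particular $q_H - q_L = (\bp_h-\bp_l)(P_{hH}-P_{hL})$ can be made positive and the pair $(q_L,q_H)$ can be slid so that the threshold $\Thd$ (which, by the no-domination assumption applied to the informed-majority rule, satisfies $\agf < \Thd$ and $\agf+\agc > \Thd$, i.e., $\Thd-\agf \in (0,\agc)$) falls strictly between $\agf+\agc q_L$ and $\agf+\agc q_H$. I would make this precise by, e.g., first setting $\bp_l = 0$ and increasing $\bp_h$ until $\agf + \agc \bp_h P_{hL} < \Thd < \agf + \agc \bp_h P_{hH}$ if the gap $P_{hH}/P_{hL}$ is large enough, and otherwise also lifting $\bp_l$ slightly; a short case analysis on where $\Thd-\agf$ sits relative to $\agc P_{h\cdot}$ and $\agc P_{l\cdot}$ covers all regular instances. (This is essentially the content of the characterization that the \exshare{} can be made positive, so I would invoke or mirror the argument behind Theorem~\ref{thm:arbitrary} here.)

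The third and final step is the probabilistic one: fix the strategy $\stg$ from step two, let $\stgp_\ag'$ be the regular profile in which friendly agents vote $\bA$, unfriendly agents vote $\bR$, and all $\tc = \ag - \tf - \tu$ contingent agents play $\stg$. Conditioned on state $H$, the number of $\bA$-votes among contingent agents is a sum of $\tc$ i.i.d.\ Bernoulli$(q_H)$ variables, and $\tf + \tc q_H = \ag(\agf + \agc q_H) + O(1) > \Thd \ag + \Omega(\ag)$ by the strict inequality from step two; a Hoeffding/Chernoff bound then gives $\lp_H^{\bA}(\stgp_\ag') \ge 1 - e^{-\Omega(\ag)}$, and symmetrically $\lp_L^{\bR}(\stgp_\ag') \ge 1 - e^{-\Omega(\ag)}$. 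Hence $\acc(\stgp_\ag') = P_L\lp_L^{\bR} + P_H\lp_H^{\bA} \to 1$, in fact exponentially fast. The main obstacle is step two: ensuring the existence of a single feasible $(\bp_l,\bp_h)$ in \emph{every} regular instance, because biased signals ($P_{hH}$ and $P_{lH}$ need not straddle $1/2$) and an arbitrary threshold $\Thd$ interact, so the case analysis must carefully use the no-domination hypotheses $\agf < \Thd < \agf + \agc$ together with $P_{hH}>P_{hL}$, $P_{lH}<P_{lL}$ to guarantee the target interval for $q$ is always hit; once feasibility is established the concentration step is routine.
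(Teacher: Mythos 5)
Your proposal is correct and follows essentially the same route as the paper: pick a regular symmetric profile in which contingent agents play a fixed $(\bp_l,\bp_h)$ making the \exshare{} strictly positive in both states, then conclude with a Hoeffding bound. The one step you flag as the ``main obstacle'' (feasibility of such $(\bp_l,\bp_h)$ in every instance, where your $\bp_l=0$ starting point alone can fail) is closed in the paper without case analysis by starting from the constant strategy $\beta^{\ast}=(\Thd-\agf)/\agc\in(0,1)$, which puts the expected $\bA$-share exactly at $\Thd$ in both states, and then tilting to $\bp_l=\beta^{\ast}-\delta_l$, $\bp_h=\beta^{\ast}+\delta_h$ with $\delta_h=\delta_l\,P_{lH}/P_{hH}$ followed by a slight increase of $\delta_h$, which gives strict positivity in state $H$ and strict negativity in state $L$ using $P_{hH}P_{lL}>P_{hL}P_{lH}$.
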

We first give a concrete example to illustrate Theorem~\ref{thm:acc}, in which we show an instance for each case in the theorem. 

\begin{ex}
\label{ex:acc}
 We follow the setting of Example~\ref{ex:setting} except for two differences.  First, there is a series of $\ag = 20, 30, \ldots, 500$. For each $N$, the ratio of friendly, unfriendly, and contingent agents is fixed at $2:3:5$. Second, we consider two different cases of signal distributions that fall into different cases of Theorem~\ref{thm:acc}. They share the same signal distribution in world state $L$: $P_{lL} = 0.8, P_{hL} = 0.2$, but the signal distribution for $H$ is different. In case (1), $P_{lH} = 0.1, P_{hH} = 0.9$; and in case (2), $P_{lH} = 0.25, P_{hH} = 0.75$.

 We focus on regular strategy profiles where all contingent agents vote informatively (Definition~\ref{dfn: informative}). 
  For case (2), we also consider another series of regular strategy profiles $\stgp_{\ag}'$ where contingent agents play $\stg' = (0.48, 0.96)$. In Example~\ref{ex:f} and Theorem~\ref{thm:arbitrary} later, we verify that the fidelity of the regular informative voting converges to 1 in case (1) but does not converge to 1 in case (2). On the other hand, the fidelity of the deviating strategy profile $\stgp_\ag'$ in case (2) converges to 1. Figure~\ref{fig:acc1} illustrates these trends of fidelity. 

\begin{figure}[htbp]
\centering
\subfigure[Fidelity]{
\label{fig:acc1}
\begin{minipage}[t]{0.48\linewidth}
\centering
\includegraphics[width=0.99\linewidth]{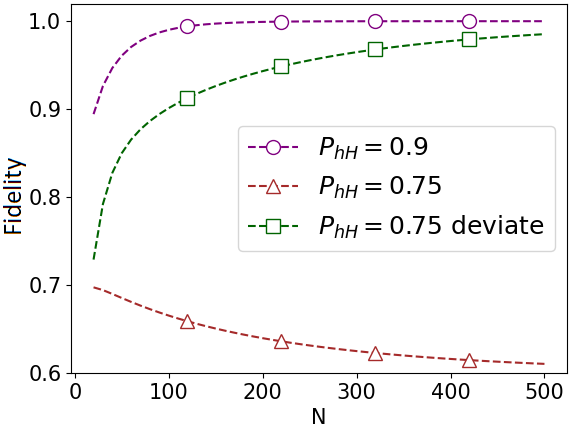}
\end{minipage}%
}%
\subfigure[Expected utility for contingent agents]{
\label{fig:acc2}
\begin{minipage}[t]{0.48\linewidth}
\centering
\includegraphics[width=\linewidth]{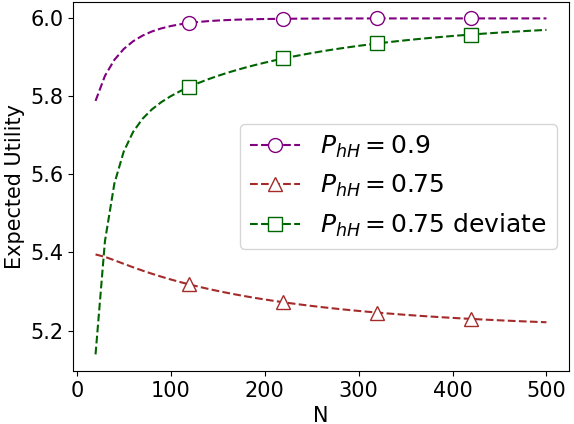}
\end{minipage}%
}%
\caption{Fidelity and expected utilities of informative voting. \label{fig:acc}}
\centering
\end{figure}

The expected utilities of contingent agents in different cases and strategies are shown in Figure~\ref{fig:acc2}. Note the maximum expected utility that a contingent agent can get is $0.4\times 3 + 0.6\times 8 = 6.0$. In accordance with the fidelity, the expected utility of $\stgp_{\ag}$ in case (1) converges to the maximum. In case (2), on the other hand,  $\stgp_{\ag}$ is dominated by $\stgp_{\ag}'$ by a utility gain of at least 0.4. Therefore, the group of contingent agents has no incentive to deviate in case (1) but has an incentive to deviate to $\stgp_{\ag}'$ in case (2). 
\end{ex}

\subsection{ Proof Sketch of Theorem~\ref{thm:acc}}

To show the relationship between $\acc(\stgp_{\ag})$ and $\varepsilon$, we have the following lemma.
\begin{lem}
\label{lem:nash}
For every $\ag$, a regular strategy profile $\stgp_N$ is an $\varepsilon$-strong BNE with $\varepsilon = 2B(B+1)(1 - \acc(\stgp_N))$, where $B$ is the upper bound of utility function $\vt_{\sag}$.
\end{lem}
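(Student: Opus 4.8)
The plan is to express each agent's gain from a coalitional deviation in terms of how the deviation moves the two ``$\bA$ wins'' probabilities, and then to argue that for a regular profile with fidelity $\acc(\stgp_N)=1-\eta$ these movements are either $O(\eta)$ in the relevant direction or make some member of the deviating coalition strictly worse off. First I would rewrite the expected utility using $\lp_w^{\bR}(\stgp)=1-\lp_w^{\bA}(\stgp)$, so that for any two profiles $\stgp_N,\stgp'$ and any agent $\sag$,
\[
\ut_{\sag}(\stgp')-\ut_{\sag}(\stgp_N)=P_L\,\delta_L\,d_L^{\sag}+P_H\,\delta_H\,d_H^{\sag},
\]
where $\delta_w:=\lp_w^{\bA}(\stgp')-\lp_w^{\bA}(\stgp_N)$ and $d_w^{\sag}:=\vt_{\sag}(w,\bA)-\vt_{\sag}(w,\bR)$ is an integer in $\{-B,\dots,B\}$. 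The only facts I need about $d_w^{\sag}$ come from the type orderings: a friendly agent has $0<d_L^{\sag}<d_H^{\sag}$, an unfriendly agent has $d_L^{\sag}<d_H^{\sag}<0$ with $|d_L^{\sag}|>|d_H^{\sag}|$, and a contingent agent has $d_L^{\sag}<0<d_H^{\sag}$. Writing $\eta:=1-\acc(\stgp_N)$ and using $P_L+P_H=1$, one has $\eta=P_L\lp_L^{\bA}(\stgp_N)+P_H\lp_H^{\bR}(\stgp_N)$, hence the one-sided bounds $P_L\delta_L\ge-\eta$ (since $\delta_L\ge-\lp_L^{\bA}(\stgp_N)$) and $P_H\delta_H\le\eta$ (since $\delta_H\le\lp_H^{\bR}(\stgp_N)$).

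Next I would record the monotonicity that regularity buys: in $\stgp_N$ every friendly agent plays $(1,1)$ and every unfriendly agent plays $(0,0)$, and conditioned on the state the votes are independent, so any deviation confined to friendly agents can only (weakly) decrease every $\lp_w^{\bA}$, and any deviation confined to unfriendly agents can only (weakly) increase every $\lp_w^{\bA}$. In particular, if $D$ is the deviating set, then $\delta_L>0$ forces $D$ to contain a contingent or an unfriendly agent, and $\delta_H<0$ forces $D$ to contain a contingent or a friendly agent.

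The core is a case analysis on the signs of $\delta_L$ and $\delta_H$, always assuming the deviation leaves every member of $D$ weakly better off, and showing that then each member gains at most $2B(B+1)\eta$. If $\delta_L\le0$ and $\delta_H\ge0$ then $|P_L\delta_L|\le\eta$ and $|P_H\delta_H|\le\eta$, so the gain is at most $2B\eta$. If $\delta_L>0$: should $D$ contain a contingent agent $j$, then $d_L^j\le-1$ together with $P_H\delta_H d_H^j\le\eta B$ forces, via $j$'s weak-improvement inequality, both $\delta_H\ge0$ and $P_L\delta_L\le\eta B$, after which every member's gain is at most $\eta B^2+\eta B$; should $D$ contain no contingent agent, it must contain an unfriendly agent (to raise $\lp_L^{\bA}$), and that agent is strictly worse off unless $\delta_H<0$, in which case $D$ must also contain a friendly agent, and the two weak-improvement inequalities give $P_L\delta_L>P_H|\delta_H|$ (from the friendly member, using $d_H^{\mathrm{fr}}>d_L^{\mathrm{fr}}$) and $P_H|\delta_H|>P_L\delta_L$ (from the unfriendly member, using $|d_L^{\mathrm{unfr}}|>|d_H^{\mathrm{unfr}}|$), a contradiction. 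The remaining case $\delta_L\le0,\delta_H<0$ is symmetric: any friendly member of $D$ would be strictly worse off, so $D$ consists only of contingent and unfriendly agents, and then $D$ must contain a contingent agent (otherwise $\delta_H<0$ is impossible), whose weak-improvement inequality forces $|P_H\delta_H|\le\eta B$, bounding every member's gain by $\eta B+\eta B^2$.

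Combining the cases, whenever a coalitional deviation leaves every member of $D$ weakly better off, each member's utility rises by at most $2B(B+1)\eta$; since an $\varepsilon$-strong BNE can only be broken by a weakly-improving deviation in which some member gains strictly more than $\varepsilon$, taking $\varepsilon=2B(B+1)(1-\acc(\stgp_N))$ proves the lemma. The step I expect to be the main obstacle is the subcase $\delta_L>0,\delta_H<0$: there the crude bounds on $\delta_L$ and $\delta_H$ are vacuous, and one must instead exploit the fine structure of the type orderings — that a friendly agent cares strictly more about securing $\bA$ in the high state than in the low state, and dually for unfriendly agents — to rule out a friendly--unfriendly coalition that profitably reverses the informed majority decision in both states simultaneously.
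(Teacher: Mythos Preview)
Your argument is correct and complete; in particular the ``hard'' subcase $\delta_L>0,\ \delta_H<0$ with no contingent agent is handled cleanly by the two strict ratio inequalities $\frac{d_H^{\mathrm{fr}}}{d_L^{\mathrm{fr}}}>1$ and $\frac{|d_L^{\mathrm{unfr}}|}{|d_H^{\mathrm{unfr}}|}>1$, which together with the two weak-improvement constraints force $P_L\delta_L>P_L\delta_L$, a contradiction. Your bound in every surviving case is in fact $\eta B(B+1)$, so you even obtain the stated $\varepsilon$ with a factor of two to spare.

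The route is genuinely different from the paper's. The paper first splits on whether the deviating profile $\stgp'$ itself has small error (below $(B+1)\eta$) or large error. In the small-error case every agent's gain is bounded directly; in the large-error case the paper shows (i) every contingent agent is strictly worse off, (ii) a friendly and an unfriendly agent cannot both weakly improve once one of them gains more than $\varepsilon$, and (iii) a coalition consisting solely of friendly (resp.\ unfriendly) agents cannot gain because they already play dominant strategies. You instead split directly on the signs of $(\delta_L,\delta_H)$ and, whenever one of the two one-sided bounds $P_L\delta_L\ge-\eta$, $P_H\delta_H\le\eta$ is uninformative, use the weak-improvement inequality of a \emph{specific} member of $D$ (a contingent agent when present, otherwise the friendly/unfriendly pair) to restore control. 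Both proofs rest on the same three ingredients---the one-sided bounds coming from $\acc(\stgp_N)=1-\eta$, the strict orderings $d_H^{\mathrm{fr}}>d_L^{\mathrm{fr}}>0$ and $d_L^{\mathrm{unfr}}<d_H^{\mathrm{unfr}}<0$, and the monotonicity afforded by regularity---but your organization avoids the auxiliary threshold on $I(\stgp')$ and makes the friendly--unfriendly incompatibility a two-line ratio argument rather than a separate claim. The paper's decomposition, on the other hand, generalizes more transparently to $K>2$ world states, where the sign pattern of the $\delta_\wos$ is no longer a four-way split.
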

Lemma~\ref{lem:nash} is an extension of Theorem~3.3 in \citet{schoenebeck21wisdom}. It shows that every $\stgp_{\ag}$ is an $\varepsilon$-strong BNE with $\varepsilon$ proportional to $1 -\acc(\stgp_\ag)$. To prove Lemma~\ref{lem:nash}, we show that, for any other strategy profile $\stgp_\ag'$, a group of agents with incentives to deviate does not exist.

There are two cases of $\stgp_\ag'$. In the first case, the fidelity $\stgp_\ag'$ is bounded by the fidelity of $\stgp_\ag$. More precisely, $(1 - \acc(\stgp_\ag')) < (B+1)\cdot (1 -\acc(\stgp_\ag))$. Then two profiles do not make a big difference, and no agent can gain more than $\varepsilon = 2B(B+1)(1 - \acc(\stgp_N))$ after deviation (Claim 1). 

In the second case, the fidelity $\stgp_\ag'$ is unbounded and much worse than the fidelity of $\stgp_\ag$. Then any contingent agent has no incentives to deviate, as their expected utilities are most positively correlated with fidelity (Claim 2). Next, we show that a deviating group cannot contain both friendly and unfriendly agents, because if the expected utility of one side increases by more than $\varepsilon$, the other side's will decrease (Claim 3). Therefore, a deviating group contains either only friendly agents or only unfriendly agents. Finally, we show that in neither case can the deviation succeed, because pre-determined agents have already played their dominant strategies in a regular profile (Claim 4). 

Now we are ready to propose the proof for Theorem~\ref{thm:acc}. We will actually use Lemma~\ref{lem:nash} and Theorem~\ref{thm:deviate} to prove Theorem~\ref{thm:acc}. We will discuss the two cases separately. 

When $\lim_{\ag\to \infty} \acc(\stgp_{\ag}) = 1$, we apply Lemma~\ref{lem:nash} to each $\stgp_{\ag}$, and get that every $\stgp_\ag$ is an  $\varepsilon$-strong BNE where $\varepsilon = 2 B(B+1)\cdot (1-\acc(\stgp_{\ag}))$. Then $\varepsilon$ will converge to 0 as $\ag\to\infty$. 

When $\lim_{\ag\to \infty} \acc(\stgp_{\ag}) = 1$ does not hold, there are infinitely many $\ag$ with $\stgp_\ag$ being of low fidelity. By Theorem~\ref{thm:deviate} there exists a regular strategy profile sequence $\{\stgp_{\ag}'\}$ with fidelity converging to 1. Because of the difference in fidelity, there are infinitely many $\ag$ such that $\stgp_\ag \neq \stgp_\ag'$. Then we show that, for all sufficiently large $\ag$ where $\stgp_\ag$ is of low fidelity, if all contingent agents turn to play $\stgp_\ag'$ from $\stgp_\ag$, every contingent agent will gain at least a constant amount of extra utility. Therefore, for infinitely many $\ag$, $\stgp_{\ag}$  is NOT an $\varepsilon$-strong BNE for some constant $\varepsilon$. 
The full proof of Theorem~\ref{thm:acc} is in Appendix~\ref{apx:nb_acc}, and the full proof of Lemma~\ref{lem:nash} is in Appendix~\ref{apx:nb_nash}.

\subsection{Proof Sketch of Theorem~\ref{thm:deviate}}



In the proof of Theorem~\ref{thm:deviate}, we construct a strategy $\stg'$ and show that the regular strategy profile sequence $\{\stgp'_N\}$ where all contingent agents play $\stg'$ has fidelity that converges to 1.
It suffices to construct $\stg'$ such that
\begin{enumerate}
    \item if $H$ is the actual world state, the expected fraction of the voters voting for $\bA$ is more than $\mu$ by a constant;
    \item if $L$ is the actual world state, the expected fraction of the voters voting for $\bA$ is less than $\mu$ by a constant.
\end{enumerate}
If this is true, $\acc(\stgp_\ag')$ converges to $1$ due to the Hoeffding Inequality.
It remains to construct $\stg'$ such that (1) and (2) hold.

We first construct $\stg_\mu'$ such that the expected fraction of the voters voting for $\bA$ is exactly $\mu$, where in $\stg_\mu'$ the contingent voter votes for $\bA$ with a probability that is independent to the signal she receives.
This can be done by setting $\stg_\mu'=(\beta^*,\beta^*)$ where $\beta^*$ satisfies $\alpha_F+\alpha_C\cdot\beta^\ast=\mu$ (notice that, given the fraction $\alpha_F$ of the friendly voters who always vote for $\bA$, the fraction $\alpha_U$ of the unfriendly voters who never vote for $\bA$, and the fraction $\alpha_C$ of the contingent voters who vote for $\bA$ with probability $\beta^\ast$, the expected fraction of votes for $\bA$ is $\alpha_F+\alpha_C\cdot\beta^\ast$).

Next, we will adjust $\stg_\mu'$ to $\stg'=(\beta_l,\beta_h)$ that satisfies (1) and (2).
Naturally, we would like to increase the probability for voting $\bA$ if an $h$ signal is received, and we would like to decrease this probability if $l$ is received.
That is, we have $\beta_l=\beta^\ast-\delta_l$ and $\beta_h=\beta^\ast+\delta_h$ for some $\delta_l,\delta_h>0$, and we need to show the existences of $\delta_l$ and $\delta_h$ that make (1) and (2) hold.

When $H$ is the actual world, comparing with $\stg_\mu'$, the probability that each contingent agent votes for $\bA$ is increased by $P_{hH}\cdot\delta_h-P_{lH}\cdot\delta_l$ in $\stg'$.
Thus, the total expected fraction of votes for $\bA$ is increased by
$$\alpha_C\cdot\left(P_{hH}\cdot\delta_h-P_{lH}\cdot\delta_l\right).$$
Similarly, when $L$ is the actual world, similar calculations reveal that the total expected fraction of votes for $\bA$ is increased by
$$\alpha_C\cdot\left(P_{hL}\cdot\delta_h-P_{lL}\cdot\delta_l\right).$$
Since the expected fraction of votes for $\bA$ is exactly $\mu$ for $\stg_\mu'$, we need to choose $\delta_h$ and $\delta_l$ such that
$$\left\{\begin{array}{l}
    \alpha_C\cdot\left(P_{hH}\cdot\delta_h-P_{lH}\cdot\delta_l\right)>0\\
    \alpha_C\cdot\left(P_{hL}\cdot\delta_h-P_{lL}\cdot\delta_l\right)<0
\end{array}\right..$$
This can always be done due to the positive correlation $P_{hH}>P_{hL}$ and $P_{lH}<P_{lL}$.
In particular, if we set $\delta_h=\delta_l\cdot\frac{P_{lH}}{P_{hH}}$, the first inequality would become equality, while the second inequality holds due to the positive correlation.
By slightly increasing $\delta_h$, we can make both inequalities hold.
During these adjustments, we just need to make sure the two constants $\delta_h$ and $\delta_l$ are small enough such that $\beta_h$ and $\beta_l$ are valid probabilities.

\begin{ex}
    \label{ex:thm3}
    In this example, we follow the setting of case (2) in Example~\ref{ex:acc} to illustrate the construction of the strategy $\stg'$. Recall that $\agf = 0.2, \agu = 0.3$, and $\agc = 0.5$. The signal distribution $P_{hH} = 0.75$ and $P_{lL} = 0.8$. The threshold $\Thd = 0.6$.
    
    In the first step, let $\stg_{\mu}' = (0.8, 0.8)$. We could verify that $\agf + \agc\cdot 0.8 = 0.2+ 0.5\times 0.8 = 0.6 = \Thd$.
    
    In the second step, let $\delta_l = 0.3$. Then $\delta_h = \delta_l\cdot \frac{P_{lH}}{P_{hH}}=0.1$. Then $\stg' = (0.5, 0.9)$. Then we have 
    \begin{align*}
        &P_{hH}\cdot \delta_h - P_{lH}\cdot \delta_l = 0.75\times 0.1 -0.25\times 0.3 = 0.\\
        &P_{hL}\cdot \delta_h - P_{lL}\cdot \delta_l = 0.2\times 0.1 - 0.8\times 0.3 =-0.22 < 0. 
    \end{align*}
    
    Finally, we increase $\delta_h$ by 0.06. Then $\delta_l = 0.3$, $\delta_h = 0.16$, and $\stg' = (0.5, 0.96)$. We have 
    \begin{align*}
        &P_{hH}\cdot \delta_h - P_{lH}\cdot \delta_l = 0.75\times 0.16 -0.25\times 0.3 = 0.05 > 0.\\
        &P_{hL}\cdot \delta_h - P_{lL}\cdot \delta_l = 0.2\times 0.16 - 0.8\times 0.3 =-0.208 < 0. 
    \end{align*}
    Therefore, $\stg' = (0.5, 0.96)$ satisfies the condition. 
\end{ex}

}
\section{Probability Analysis on Fidelity}
{
In this section, we analyze the condition that a strategy profile is of high fidelity and apply the analysis to the most common forms of non-strategic voting: informative voting and sincere voting. We show that neither informative nor sincere voting can lead to the informed majority decision in every instance, and we characterize the conditions where they lead to the informed majority decision. Our results give merit to strategic voting. 


In order to characterize the fidelity, we introduce the notion of the {\em \exshare}. 
Given a world state $\wos$, the \exshare{} is the expected vote share the informed majority decision alternative attracts under state $\wos$ minus the threshold of the alternative. 
\begin{dfn}[\bf{Excess expected vote share}]
   Given an instance of $\ag$ agents, and a strategy profile $\stgp$, let random variable $\xrv_{\sag}^{\ag}$ be "agent $\sag$ votes for $\bA$":  $\xrv_{\sag}^{\ag} = 1$ if agent $\sag$ votes for $\bA$, and $\xrv_{\sag}^{\ag} = 0$ if $\sag$ votes for $\bR$. Then the \exshare{} is defined as follows: 
\begin{align}
    \hthd^{\ag}_{H} =& \frac{1}{\ag}\sum_{\sag=1}^{\ag}E[\xrv_{\sag}^{\ag}\mid H] - \Thd.\label{eq:hthdh}
\end{align}
\begin{align}
    \hthd^{\ag}_{L} =&  \frac{1}{\ag}\sum_{\sag=1}^{\ag}E[1-\xrv_{\sag}^{\ag}\mid L] - (1-\Thd)\label{eq:hthdl}. 
\end{align}
Specifically, $\hthd_H^{\ag}$ is the \exshare\ of $\bA$ condition on world state $H$, and $\hthd_L^{\ag}$ is the \exshare\ of $\bR$ condition on world state $L$. For technical convenience, we define $\hthd^{\ag} = \min(\hthd^{\ag}_H, \hthd^{\ag}_L).$ 
\end{dfn}

Our next result shows that we can judge whether the fidelity of a strategy profile sequence converges to $1$ with the tendency of its \exshare\ (or more precisely, the lower limit of $\sqrt{\ag}\cdot \hthd^\ag$). If $\sqrt{\ag}\cdot \hthd^\ag$ has a lower limit of $+\infty$, Then the fidelity of the profiles in the sequence converges to 1. Otherwise, the fidelity is likely not to converge to 1. 

\begin{thm}
\label{thm:arbitrary}
Given an arbitrary sequence of instances and arbitrary sequence of strategy profiles $\{\stgp_{\ag}\}_{\ag=1}^\infty$, let $\hthd^\ag$ be the \exshare{} for each $\stgp_{\ag}$.
\begin{itemize}
     \item If $\liminf_{\ag\to\infty} \sqrt{\ag}\cdot \hthd^{\ag} = +\infty$, the fidelity of $\stgp_{\ag}$ converges to 1 , i.e., $\lim_{\ag\to\infty} \acc(\stgp_{\ag}) = 1$. 
     \item If $\liminf_{\ag\to\infty} \sqrt{\ag}\cdot \hthd^{\ag} < 0$ (including $-\infty$), $\acc(\stgp_{\ag})$ does NOT converge to 1. 
     \item If $\liminf_{\ag\to\infty} \sqrt{\ag}\cdot \hthd^{\ag} \ge 0$ (not including $+\infty$), and the variance of $\sum_{\sag=1}^{\ag} \xrv_{\sag}^{\ag}$ is at least proportional to $\ag$,  $\acc(\stgp_{\ag})$ does NOT converge to 1. 
\end{itemize}
\end{thm}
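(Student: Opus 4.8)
The plan is to rewrite fidelity in terms of two sums of conditionally independent Bernoulli variables and then apply, in turn, a concentration bound, a one‑sided Chebyshev bound, and a central limit theorem. Because signals are i.i.d.\ conditioned on the world state and each agent's vote depends only on her own signal and her own independent randomization, conditioned on state $H$ the quantity $Z^{H}_{\ag}:=\sum_{\sag=1}^{\ag}\xrv^{\ag}_{\sag}$ (votes for $\bA$) is a sum of $\ag$ independent $\{0,1\}$ variables, and similarly $Z^{L}_{\ag}:=\sum_{\sag=1}^{\ag}(1-\xrv^{\ag}_{\sag})$ (votes for $\bR$) conditioned on $L$. By the definition of the \exshare,
\[
E[Z^{H}_{\ag}\mid H]=\ag(\Thd+\hthd^{\ag}_{H}),\qquad E[Z^{L}_{\ag}\mid L]=\ag\big((1-\Thd)+\hthd^{\ag}_{L}\big),
\]
while $\bA$ wins under $H$ iff $Z^{H}_{\ag}\ge\Thd\ag$ and $\bR$ wins under $L$ iff $Z^{L}_{\ag}>(1-\Thd)\ag$, so that
\[
1-\lp^{\bA}_{H}(\stgp_{\ag})\le\Pr\big[\,Z^{H}_{\ag}-E[Z^{H}_{\ag}\mid H]\le-\ag\hthd^{\ag}_{H}\,\big],\quad 1-\lp^{\bR}_{L}(\stgp_{\ag})\le\Pr\big[\,Z^{L}_{\ag}-E[Z^{L}_{\ag}\mid L]\le-\ag\hthd^{\ag}_{L}\,\big],
\]
and $\acc(\stgp_{\ag})=P_{L}\lp^{\bR}_{L}(\stgp_{\ag})+P_{H}\lp^{\bA}_{H}(\stgp_{\ag})$. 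The two sums are symmetric, so I argue only about $Z^{H}_{\ag}$; the $Z^{L}_{\ag}$ case is identical with $(\bR,L)$ in place of $(\bA,H)$. For the first bullet, fix $M>0$: since $\liminf_{\ag}\sqrt{\ag}\,\hthd^{\ag}=+\infty$, for all large $\ag$ we have $\hthd^{\ag}_{H}\ge\hthd^{\ag}>M/\sqrt{\ag}>0$, so Hoeffding's inequality for a sum of $\ag$ independent $[0,1]$ variables gives $1-\lp^{\bA}_{H}(\stgp_{\ag})\le\exp(-2\ag(\hthd^{\ag}_{H})^{2})\le e^{-2M^{2}}$, and likewise $1-\lp^{\bR}_{L}(\stgp_{\ag})\le e^{-2M^{2}}$; hence $1-\acc(\stgp_{\ag})\le e^{-2M^{2}}$ for all large $\ag$, and since $M$ is arbitrary, $\acc(\stgp_{\ag})\to1$.

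For the second bullet, $\liminf_{\ag}\sqrt{\ag}\,\hthd^{\ag}<0$ supplies a constant $c>0$ and infinitely many $\ag$ with $\hthd^{\ag}<-c/\sqrt{\ag}$; by pigeonhole, along a subsequence the minimum defining $\hthd^{\ag}$ is always realized by $\hthd^{\ag}_{H}$ (the case $\hthd^{\ag}_{L}$ is symmetric), so $E[Z^{H}_{\ag}\mid H]<\Thd\ag-c\sqrt{\ag}$. With $\lambda_{\ag}:=\Thd\ag-E[Z^{H}_{\ag}\mid H]>c\sqrt{\ag}>0$ and $\mathrm{Var}(Z^{H}_{\ag}\mid H)\le\ag/4$ (a sum of $\ag$ independent Bernoullis), Cantelli's inequality yields
\[
\lp^{\bA}_{H}(\stgp_{\ag})=\Pr\big[\,Z^{H}_{\ag}-E[Z^{H}_{\ag}\mid H]\ge\lambda_{\ag}\,\big]\le\frac{\mathrm{Var}(Z^{H}_{\ag}\mid H)}{\mathrm{Var}(Z^{H}_{\ag}\mid H)+\lambda_{\ag}^{2}}\le\frac{\ag/4}{\ag/4+c^{2}\ag}=\frac{1}{1+4c^{2}}<1 .
\]
Hence $\acc(\stgp_{\ag})\le1-P_{H}\cdot\tfrac{4c^{2}}{1+4c^{2}}<1$ along this subsequence, so $\acc(\stgp_{\ag})\not\to1$.

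For the third bullet, let $\ell:=\liminf_{\ag}\sqrt{\ag}\,\hthd^{\ag}\in[0,\infty)$ and pass to a subsequence along which $\sqrt{\ag}\,\hthd^{\ag}\to\ell$ and the minimum defining $\hthd^{\ag}$ is always realized by the same coordinate, say $\hthd^{\ag}=\hthd^{\ag}_{H}$; then $|\ag\hthd^{\ag}_{H}|\le C\sqrt{\ag}$ for a constant $C$ and all large $\ag$ in the subsequence. Writing $\sigma_{\ag}^{2}:=\mathrm{Var}(Z^{H}_{\ag}\mid H)\ge\varepsilon\ag$ (the hypothesis, with some $\varepsilon>0$), we get $-\ag\hthd^{\ag}_{H}/\sigma_{\ag}\ge-C\sqrt{\ag}/\sqrt{\varepsilon\ag}=-C/\sqrt{\varepsilon}$, so
\[
1-\lp^{\bA}_{H}(\stgp_{\ag})=\Pr\Big[\tfrac{Z^{H}_{\ag}-E[Z^{H}_{\ag}\mid H]}{\sigma_{\ag}}<\tfrac{-\ag\hthd^{\ag}_{H}}{\sigma_{\ag}}\Big]\ge\Pr\Big[\tfrac{Z^{H}_{\ag}-E[Z^{H}_{\ag}\mid H]}{\sigma_{\ag}}<-\tfrac{C}{\sqrt{\varepsilon}}\Big].
\]
Since the summands are independent, bounded in $[0,1]$, and $\sigma_{\ag}^{2}\ge\varepsilon\ag\to\infty$, the Lyapunov condition holds (the sum of third absolute central moments is at most $\ag$, while $\sigma_{\ag}^{3}\ge(\varepsilon\ag)^{3/2}$), so $(Z^{H}_{\ag}-E[Z^{H}_{\ag}\mid H])/\sigma_{\ag}\Rightarrow\mathcal{N}(0,1)$ and the right‑hand side tends to $\Phi(-C/\sqrt{\varepsilon})>0$; hence $\lp^{\bA}_{H}(\stgp_{\ag})\le1-\Phi(-C/\sqrt{\varepsilon})+o(1)$ and $\acc(\stgp_{\ag})$ is bounded away from $1$ infinitely often. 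I expect this third case to be the main obstacle: it requires the delicate subsequence extraction above (and the bookkeeping of which of $\hthd^{\ag}_{H},\hthd^{\ag}_{L}$ attains the minimum), a CLT for a triangular array of non‑identically‑distributed independent summands rather than a fixed i.i.d.\ sum, and the observation that the ``variance proportional to $\ag$'' hypothesis is exactly what pins the threshold $\Thd\ag$ to within $O(1)$ standard deviations of the mean — which is precisely why that hypothesis is dispensable in the first two cases but indispensable here.
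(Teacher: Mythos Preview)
Your proof is correct and follows the same skeleton as the paper's (rewrite $\lp_H^{\bA}$ and $\lp_L^{\bR}$ as tail probabilities of centered sums of independent $\{0,1\}$ variables, then treat the three cases with concentration/anti-concentration), but the tools you invoke in Cases~2 and~3 differ from the paper's.

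In Case~2 you use Cantelli's one-sided Chebyshev inequality together with the universal variance bound $\mathrm{Var}(Z^H_{\ag}\mid H)\le \ag/4$; the paper instead applies Hoeffding's inequality \emph{again}, this time to the upper tail $\Pr[\sum_{\sag}(\xrv_\sag^\ag-E[\xrv_\sag^\ag\mid H])\ge -\eta\sqrt{\ag}]\le\exp(-2\eta^2)$. Both work; your argument is slightly more elementary, while the paper's gives an exponential rather than a polynomial bound on $\lp^{\bA}_H$ (irrelevant here since any constant strictly below~$1$ suffices). In Case~3 you pass to a subsequence and invoke the Lyapunov CLT for triangular arrays at the fixed threshold $-C/\sqrt{\varepsilon}$; the paper instead applies the Berry--Esseen theorem directly, obtaining a quantitative bound $|\Pr[\cdot]-\Phi(\cdot)|\le C_0\cdot N/(\psi N)^{3/2}=O(N^{-1/2})$ that also handles the moving threshold without passing to a subsequence. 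Your qualitative CLT is perfectly adequate for the stated conclusion (non-convergence), and your device of first freezing the threshold via the inequality $-\ag\hthd^{\ag}_H/\sigma_\ag\ge -C/\sqrt{\varepsilon}$ before applying the CLT is a clean way to avoid the Berry--Esseen machinery. The paper's version has the minor advantage of producing explicit constants and avoiding the two-step subsequence extraction (first for which coordinate realizes the minimum, then for convergence of $\sqrt{\ag}\,\hthd^\ag$).
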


\begin{remark}
Although Theorem~\ref{thm:arbitrary} does not cover the case when  $\liminf_{\ag\to\infty} \sqrt{\ag}\cdot \hthd^{\ag} \ge 0$ and the variance of $\sum_{\sag=1}^{\ag} \xrv_{\sag}^{\ag}$ is not large enough, we argue that this case is very special and rare. In this case, $(\inf \hthd^{\ag})$  converges to $0$ at the rate of  $O(\frac{1}{\sqrt{\ag}})$, which means the expected vote share of an alternative is almost equal to the threshold. Moreover, the strategies of the agents have low randomness in total. Therefore, we believe that Theorem~\ref{thm:arbitrary} covers the most interesting cases of a sequence of strategy profiles. 
\end{remark}

\begin{proof}[Proof Sketch]
Recall that $\acc(\stgp) = P_{L}\cdot \lp_{L}^{\bR}(\stgp) + P_{H} \cdot \lp_{H}^{\bA}(\stgp).$ Note that $\lp_{H}^{\bA}(\stgp)$ is the probability that the total vote share on $\bA$ exceeds the threshold $\Thd$ when the world state is $H$. Therefore, we can write $\lp_{H}^{\bA}(\stgp)$ using the following formula. $\lp_{L}^{\bR}(\stgp)$ can be written using a similar formula. 
\begin{align*}
    \lp_{H}^{\bA}(\stgp_{\ag}) =&\ \Pr\left[\sum_{\sag=1}^{\ag}\xrv_{\sag}^{\ag} \ge \Thd\cdot \ag \mid H\right]
    =\Pr\left[\sum_{\sag=1}^{\ag}\xrv_{\sag}^{\ag} - \sum_{\sag=1}^{\ag}E\left[\xrv_{\sag}^{\ag}\mid H\right] \ge -\hthd_H^{\ag}\cdot \ag \mid H\right]
\end{align*}
For the first and the second case, we apply the Hoeffding Inequality. For the first case, we show that both $\lp_{H}^{\bA}(\stgp)$ and $\lp_{L}^{\bR}(\stgp)$ are lower bounded by a function of $\ag$ that converges to 1. For the second case, we show that either $\lp_{H}^{\bA}(\stgp)$ and $\lp_{L}^{\bR}(\stgp)$ is upper bounded by a constant smaller than 1.

For the third case, we apply the Berry-Esseen Theorem~\citep{Berry41Accuracy, Esseen42Liapunoff}, which bounds the difference between the distribution of the sum of independent random variables and the normal distribution. Therefore, for some constant $\delta$ and infinitely many $\ag$, $\lp_{H}^{\bA}(\stgp_{\ag})$ (or $\lp_{L}^{\bR}(\stgp_{\ag})$) will not deviate from $1 - \Phi(\delta)$ too much and is bounded away from 1 by a constant. $\Phi$ is the CDF of the standard normal distribution. The requirements for the variance in the third case are from the Berry-Esseen Theorem.

The full proof of Theorem~\ref{thm:arbitrary} is in Appendix~\ref{apx:nb_arbitrary}, and the detailed definition of the Berry-Esseen Theorem is in Appendix~\ref{apx:berry}.
\end{proof}

Theorem~\ref{thm:arbitrary} provides a criterion for judging whether a strategy profile sequence is of high fidelity. If we apply Theorem~\ref{thm:acc} to each case of Theorem~\ref{thm:arbitrary}, we directly get a criterion for judging whether a regular strategy profile sequence is an $\varepsilon$-strong equilibrium.

\begin{coro}
\label{coro:arbeq}
Given an arbitrary sequence of instances and an arbitrary regular sequence of strategy profiles $\{\stgp_{\ag}\}_{\ag=1}^\infty$, let $\hthd^\ag$ be defined for each $\stgp_{\ag}$.
\begin{itemize}
     \item If $\liminf_{\ag\to\infty} \sqrt{\ag}\cdot \hthd^{\ag} = +\infty$,  then for every $\ag$, $\stgp_{\ag}$ is an $\varepsilon$-strong BNE with $\varepsilon = o(1)$.
     \item If $\liminf_{\ag\to\infty} \sqrt{\ag}\cdot \hthd^{\ag} < 0$ (including $-\infty$), there are infinitely many $\ag$ such that $\stgp_{\ag}$ is NOT an $\varepsilon$-strong BNE with constant $\varepsilon$.
     \item If $\liminf_{\ag\to\infty} \sqrt{\ag}\cdot \hthd^{\ag} \ge 0$ (not including $+\infty$), and the variance of $\sum_{\sag=1}^{\ag} \xrv_{\sag}^{\ag}$ is at least proportional to $\ag$, there are infinitely many $\ag$ such that $\stgp_{\ag}$ is NOT an $\varepsilon$-strong BNE with constant $\varepsilon$.
\end{itemize}
\end{coro}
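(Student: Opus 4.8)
The plan is to obtain Corollary~\ref{coro:arbeq} purely by composing Theorem~\ref{thm:arbitrary} with Theorem~\ref{thm:acc}: the first converts a hypothesis on $\liminf_{\ag\to\infty}\sqrt{\ag}\cdot\hthd^{\ag}$ into a statement about whether $\acc(\stgp_{\ag})$ converges to $1$, and the second converts that into a statement about $\varepsilon$-strong BNE. First I would note that a regular strategy profile sequence is in particular an arbitrary strategy profile sequence, so Theorem~\ref{thm:arbitrary} applies to $\{\stgp_{\ag}\}$ verbatim, while the regularity is exactly what Theorem~\ref{thm:acc} needs.

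For the first bullet: if $\liminf_{\ag\to\infty}\sqrt{\ag}\cdot\hthd^{\ag} = +\infty$, the first case of Theorem~\ref{thm:arbitrary} gives $\lim_{\ag\to\infty}\acc(\stgp_{\ag}) = 1$; then the first bullet of Theorem~\ref{thm:acc} (equivalently, Lemma~\ref{lem:nash} with $\varepsilon = 2B(B+1)(1-\acc(\stgp_{\ag}))$) yields that every $\stgp_{\ag}$ is an $\varepsilon$-strong BNE with $\varepsilon = o(1)$. For the second and third bullets: if $\liminf_{\ag\to\infty}\sqrt{\ag}\cdot\hthd^{\ag} < 0$, or if $\liminf_{\ag\to\infty}\sqrt{\ag}\cdot\hthd^{\ag} \ge 0$ together with the variance lower bound, the corresponding case of Theorem~\ref{thm:arbitrary} gives that $\acc(\stgp_{\ag})$ does not converge to $1$, i.e.\ the hypothesis ``$\lim_{\ag\to\infty}\acc(\stgp_{\ag}) = 1$'' fails; feeding this into the second bullet of Theorem~\ref{thm:acc} produces infinitely many $\ag$ for which $\stgp_{\ag}$ is not an $\varepsilon$-strong BNE for some constant $\varepsilon$, as claimed.

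The only things to check are bookkeeping rather than substance: that ``regular sequence of strategy profiles'' in the corollary matches the regularity hypothesis of Theorem~\ref{thm:acc}, that the three case conditions on $\liminf_{\ag\to\infty}\sqrt{\ag}\cdot\hthd^{\ag}$ partition precisely as in Theorem~\ref{thm:arbitrary}, and that the variance requirement invoked in the third case is exactly the one already assumed in the third case of Theorem~\ref{thm:arbitrary}, so no extra condition is introduced. I do not expect a genuine obstacle here: the real work was done in proving the two theorems, and this corollary is the immediate payoff of placing the fidelity criterion and the fidelity/equilibrium equivalence side by side. If anything deserves a sentence of care, it is making explicit that the ``infinitely many $\ag$'' in the negative cases is inherited unchanged from Theorem~\ref{thm:acc} and does not require re-deriving a witnessing deviation.
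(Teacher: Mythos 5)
Your proposal is correct and is exactly the paper's route: the paper obtains Corollary~\ref{coro:arbeq} by applying Theorem~\ref{thm:acc} to each case of Theorem~\ref{thm:arbitrary}, with regularity needed only for the fidelity/equilibrium equivalence and the variance condition carried over verbatim in the third case. No further argument is given or needed.
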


\begin{ex}
\label{ex:f}
In this example, we use Theorem~\ref{thm:arbitrary} to bound the fidelity of different cases in Example~\ref{ex:acc}. Note that in each $\ag$ the ratio of friendly, unfriendly, and contingent agents is fixed to be $2:3:5$, and agents of the same type play the same strategy for different $\ag$. Therefore, the \exshare{} of profile $\stgp_\ag$ and $\stgp_{\ag}'$ is independent of $\ag$. 
\paragraph{Case 1: $P_{lH} = 0.1, P_{hH} = 0.9$.} In both world states, we have $E[\xrv_{\sag}^{\ag}\mid H] = 1$ for friendly agents and $E[\xrv_{\sag}^{\ag}\mid H] = 0$ for unfriendly agents. Contingent agents vote for $\bA$ with probability $P_{hH} = 0.9$ in $H$ state and for $\bR$ with probability $P_{lL} = 0.8$ in $L$ state. Therefore, $\hthd^{\ag} > 0$, and $\sqrt{\ag}\cdot \hthd^{\ag}$ goes to $+\infty$. 
\begin{align*}
    \hthd^{\ag}_H = 0.2 + 0.5 \times 0.9 - 0.6 = 0.05 \qquad
    \hthd^{\ag}_L =  0.3 + 0.5 \times 0.8 - 0.4 = 0.3.
\end{align*}

\paragraph{Case 2: $P_{lH} = 0.25, P_{hH} = 0.75$.} For $\stgp_{\ag}$, we have $\hthd^{\ag} < 0$, and $\sqrt{\ag}\cdot \hthd^{\ag}$ goes to $-\infty$. 
\begin{align*}
    \hthd^{\ag}_H = 0.2 + 0.5 \times 0.75 - 0.6 = - 0.025 \qquad
    \hthd^{\ag}_L =  0.3 + 0.5 \times 0.8 - 0.4 = 0.3.
\end{align*}

And for the the deviating strategy profile $\stgp_{\ag}'$, we have $\hthd^{\ag} > 0$,  and $\sqrt{\ag}\cdot \hthd^{\ag}$ goes to $+\infty$.  
\begin{align*}
    \hthd^{\ag}_H =& 0.2 + 0.5 \cdot (0.75\times 0.96 + 0.25\times 0.48) - 0.6 = 0.02 \\
    \hthd^{\ag}_L = & 0.3 + 0.5 \cdot (0.8\times 0.52+0.2\times 0.04) - 0.4 = 0.112.
\end{align*}

In Case 1, the regular strategy profile $\stgp_\ag$ lies in the first case of Theorem~\ref{thm:arbitrary}, has an fidelity converging to 1, and is an $\varepsilon$-strong BNE with $\varepsilon = o(1)$.
In Case 2, $\stgp_\ag$ lies in the second case of Theorem~\ref{thm:arbitrary} and is dominated by the deviating strategy profile $\stgp_\ag'$. This is in accordance with our observation in Figure~\ref{fig:acc} and Example~\ref{ex:acc}.

\end{ex}

Although Theorem~\ref{thm:arbitrary} (and Corollary~\ref{coro:arbeq}) do not cover all the strategy profile sequences, the following result provides a dichotomy for symmetric profile sequences to judge fidelity. Given a symmetric strategy profile $\stgp_{\ag}$ induced by strategy $\stg = (\bp_l, \bp_h)$, we can compute the \exshare{} of $\stgp_{\ag}$. Recall the definition of \exshare{}:

\begin{equation*}
    \hthd^{\ag}_{H} = \frac{1}{\ag}\sum_{\sag=1}^{\ag}E[\xrv_{\sag}^{\ag}\mid H] - \Thd.
\end{equation*}

In $H$ state, an agent with signal $h$ votes for $\bA$ with probability $\bp_h$, and an agent with signal $l$ votes for $\bA$ with probability $\bp_l$. Therefore, $E[\xrv_{\sag}^{\ag}\mid H] = P_{hH}\cdot \bp_h + P_{lH}\cdot \bp_l.$ Then we have 
\begin{align*}
    \hthd^{\ag}_{H} 
    = &\ \frac{1}{\ag}\sum_{\sag=1}^{\ag} (P_{hH}\cdot \bp_h + P_{lH}\cdot \bp_l) -\Thd
    =  P_{hH}\cdot \bp_h + P_{lH}\cdot \bp_l -\Thd.
\end{align*}

With similar reasoning, we can compute $\hthd^{\ag}_{L}$ and $\hthd^{\ag}$:
\begin{align*}
    \hthd^{\ag}_{L}
    =\ &(P_{hL}\cdot (1-\bp_h) + P_{lL}\cdot (1-\bp_l)) -(1-\Thd), \quad
    \hthd^{\ag} =  \min \left( \hthd^{\ag}_{H}, \hthd^{\ag}_{L}\right).\textbf{}
\end{align*}

An interesting observation for the symmetric strategy profiles is that its \exshare{} is independent of the number of agents $\ag$. This is because when every agent plays the same strategy, the expectation of every $\xrv_{\sag}^{\ag}$ is the same. For simplicity, given a sequence of symmetric strategy profiles $\{\stgp_\ag\}$, we write its \exshare{} as $\hthd_H, \hthd_L,$ and $\hthd$. 

\begin{coro}
    \label{coro:sym}
    For an arbitrary strategy $\stg$ and an arbitrary sequence of instances, let $\{\stgp_{\ag}\}$ be the sequence of symmetric strategy profile $\stgp_{\ag}$ induced by $\stg$, and $\hthd$ be the \exshare{} of $\{\stgp_{\ag}\}$.
\begin{itemize}
    \item If $\hthd > 0$, $\acc(\stgp_{\ag})$ converges to 1. 
    \item If $\hthd \le 0$, $\acc(\stgp_{\ag})$ does not converge to 1. 
\end{itemize}
\end{coro}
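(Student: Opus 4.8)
The plan is to derive Corollary~\ref{coro:sym} directly from Theorem~\ref{thm:arbitrary} by specializing to symmetric profiles, where the \exshare{} $\hthd$ is a fixed constant independent of $\ag$ (as computed in the paragraph preceding the corollary statement). The only subtlety is that Theorem~\ref{thm:arbitrary} has three cases governed by $\liminf_{\ag\to\infty}\sqrt{\ag}\cdot\hthd^\ag$ and, in the third case, an auxiliary variance condition, whereas the corollary cleanly dichotomizes on the sign of $\hthd$. So the real content is (i) matching the sign conditions to the $\liminf$ conditions, and (ii) verifying that the variance hypothesis of the third case is automatically satisfied for symmetric profiles so that the ``$\hthd = 0$'' subcase is actually covered.

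First I would handle the case $\hthd > 0$. Since $\hthd^\ag = \hthd$ is a positive constant for every $\ag$, we have $\sqrt{\ag}\cdot\hthd^\ag = \sqrt{\ag}\cdot\hthd \to +\infty$, so $\liminf_{\ag\to\infty}\sqrt{\ag}\cdot\hthd^\ag = +\infty$. The first bullet of Theorem~\ref{thm:arbitrary} then immediately gives $\acc(\stgp_\ag)\to 1$. Next, for the case $\hthd < 0$: again $\hthd^\ag = \hthd$ is a negative constant, so $\sqrt{\ag}\cdot\hthd^\ag \to -\infty < 0$, and the second bullet of Theorem~\ref{thm:arbitrary} gives that $\acc(\stgp_\ag)$ does not converge to $1$.

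The remaining case is $\hthd = 0$, which is where the main (though still mild) obstacle lies: here $\liminf_{\ag\to\infty}\sqrt{\ag}\cdot\hthd^\ag = 0$, so we are in the third bullet of Theorem~\ref{thm:arbitrary}, and we must check that the variance of $\sum_{\sag=1}^{\ag}\xrv_\sag^\ag$ is at least proportional to $\ag$. For a symmetric profile induced by $\stg=(\bp_l,\bp_h)$, conditioned on the world state $\wos$ the variables $\xrv_1^\ag,\dots,\xrv_\ag^\ag$ are i.i.d. Bernoulli with success probability $q_\wos = P_{h\wos}\bp_h + P_{l\wos}\bp_l$, so the conditional variance of the sum is exactly $\ag\cdot q_\wos(1-q_\wos)$. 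I would argue that $q_\wos(1-q_\wos)$ is bounded below by a positive constant (over $\wos\in\{L,H\}$) — the only way this fails is if $q_H\in\{0,1\}$ or $q_L\in\{0,1\}$, i.e. the contingent agents vote deterministically in some world state; one checks that in every such degenerate configuration the sign-based conclusion still holds directly (e.g., $\hthd = 0$ with a deterministic profile forces the vote share to land exactly on the threshold, and the informed majority decision is reached with probability at most a constant bounded away from $1$, or the degenerate case can be reduced to the $\hthd<0$ analysis on the other world state). With the variance condition verified, the third bullet of Theorem~\ref{thm:arbitrary} yields that $\acc(\stgp_\ag)$ does not converge to $1$, completing the dichotomy. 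The full argument, including the bookkeeping for the degenerate subcases, is deferred to the appendix.
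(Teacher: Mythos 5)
Your overall route is exactly the paper's: specialize Theorem~\ref{thm:arbitrary} using the fact that for a symmetric profile $\hthd^{\ag}=\hthd$ is independent of $\ag$, so $\hthd>0$ and $\hthd<0$ land in the first and second bullets, and the only real work is the variance check when $\hthd=0$. Your treatment of the first two cases is fine, and your observation that, conditioned on $\wos$, the votes are i.i.d.\ Bernoulli with parameter $q_\wos=P_{h\wos}\bp_h+P_{l\wos}\bp_l$, hence conditional variance $\ag\, q_\wos(1-q_\wos)$, is the right computation.

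The step that does not work as you sketch it is the degenerate subcase of $\hthd=0$. If some $q_\wos\in\{0,1\}$ while $\hthd=0$, the only possible configurations (using $0<\Thd<1$) are $q_H=1$ with $q_L=\Thd$, or $q_L=0$ with $q_H=\Thd$; the options $q_H=0$ or $q_L=1$ force $\hthd<0$. In these configurations the deterministic state does \emph{not} have its vote share ``land exactly on the threshold''---the share there is $0$ or $\ag$ and that state is decided correctly with probability $1$; it is the \emph{other}, non-degenerate state whose expected share sits exactly at $\Thd$. For the same reason the alternative you offer, reducing ``to the $\hthd<0$ analysis on the other world state,'' fails: the other state has excess share exactly $0$, not negative, so a Hoeffding bound gives nothing and a Berry--Esseen argument is needed there. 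Nor can you invoke the third bullet of Theorem~\ref{thm:arbitrary} verbatim in this subcase, since (in the form proved in the appendix) its hypothesis requires variance at least $\psi\ag$ in \emph{every} world state, which is precisely what fails; what rescues the claim is that the proof of that bullet only uses the variance in the state attaining the minimum excess share, which here is the non-degenerate state with variance $\Thd(1-\Thd)\ag$. So the degenerate case is fixable, but by an inline Berry--Esseen bound (or a localized restatement of the variance hypothesis), not by either justification in your parenthetical. For comparison, the paper handles this point by claiming zero conditional variance would force $\stg$ to always vote for the same alternative and hence $\hthd<0$; that argument is valid when all signal probabilities are strictly positive, and the boundary situations it misses (e.g.\ $P_{lH}=0$, $\bp_h=1$, $q_L=\Thd$) are exactly the ones your extra case analysis must cover---so your instinct to treat them separately is sound, but the treatment itself needs to be corrected along the lines above.
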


\begin{proof}[Proof Sketch]
The proof of Corollary~\ref{coro:sym} works by showing that each case of a symmetric strategy profile falls into some case of Theorem~\ref{thm:arbitrary}. When $\hthd > 0$, $\sqrt{\ag}\cdot \hthd^{\ag}\to+\infty$. When $\hthd< 0$, $\sqrt{\ag}\cdot \hthd^{\ag}\to-\infty$.  And when $\hthd = 0$,and $\sqrt{\ag}\cdot \hthd^{\ag} = 0$. The variance requirement is also satisfied. (Otherwise, the strategy $\stg$ must be always voting for the same candidate. This directly implies $\hthd < 0$, which is a contradiction.)  The full proof of the Corollary~\ref{coro:sym} is in Appendix~\ref{apx:nb_sym}
\end{proof}

\subsection{Case Study: Informative Voting and Sincere Voting}

In this section, we study the two most common non-strategic voting schemes -- informative voting and sincere voting under our information structure. We show that both voting schemes lead to the informed majority decision if and only if certain conditions are satisfied. 

In informative voting, all agents play the strategy $\stg = (0, 1)$.  
    When the world state is $H$, an agent receives signal $h$ and votes for $\bA$ with probability $P_{hH}$. Therefore, the \exshare{} in the $H$ state is $\hthd_H = P_{hH} - \Thd$. Similarly, the \exshare{} in the $L$ state is $\hthd_L = P_{lL} - (1 - \Thd) = \Thd - P_{hL}$. Applying Corollary~\ref{coro:sym}, we get the following statement. 

    \begin{coro}
        \label{coro:informative}
        For an arbitrary sequence of instances, let $\{\stgp_{\ag}\}$ be the sequence of informative voting profile. Then, the fidelity $\acc(\stgp_N)$ converges to 1 if and only if $P_{hH} > \Thd > P_{hL}$. 
    \end{coro}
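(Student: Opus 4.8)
The plan is to reduce Corollary~\ref{coro:informative} entirely to Corollary~\ref{coro:sym}, which has already been proved. Informative voting is the symmetric strategy profile sequence induced by the single strategy $\stg = (0,1)$, so by Corollary~\ref{coro:sym} the fidelity $\acc(\stgp_\ag)$ converges to $1$ if and only if the \exshare{} $\hthd = \min(\hthd_H, \hthd_L) > 0$, and does not converge to $1$ if $\hthd \le 0$. So the whole proof is the computation of $\hthd_H$ and $\hthd_L$ for this particular strategy, followed by rewriting the condition $\min(\hthd_H,\hthd_L) > 0$ in the stated form $P_{hH} > \Thd > P_{hL}$.

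First I would plug $\bp_l = 0$ and $\bp_h = 1$ into the symmetric-profile formulas derived just before Corollary~\ref{coro:sym}. For the $H$ state, $\hthd_H = P_{hH}\cdot 1 + P_{lH}\cdot 0 - \Thd = P_{hH} - \Thd$. For the $L$ state, $\hthd_L = P_{hL}\cdot(1-1) + P_{lL}\cdot(1-0) - (1-\Thd) = P_{lL} - (1-\Thd)$; then using $P_{hL} + P_{lL} = 1$ this equals $\Thd - P_{hL}$. Hence $\hthd = \min(P_{hH} - \Thd,\ \Thd - P_{hL})$.

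Next I would observe that $\hthd > 0$ is equivalent to the simultaneous conditions $P_{hH} - \Thd > 0$ and $\Thd - P_{hL} > 0$, i.e.\ $P_{hH} > \Thd$ and $\Thd > P_{hL}$, which is exactly $P_{hH} > \Thd > P_{hL}$. Conversely, if this chain fails then at least one of the two terms is $\le 0$, so $\hthd \le 0$. Applying the two bullets of Corollary~\ref{coro:sym}: in the first case $\acc(\stgp_\ag) \to 1$, and in the second case it does not, which gives the "if and only if." Note the positive-correlation assumptions $P_{hH} > P_{hL}$ guarantee the interval $(P_{hL}, P_{hH})$ is nonempty, so the condition is not vacuous, though this remark is not strictly needed for the proof.

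There is essentially no obstacle here: all the analytic content — the Hoeffding and Berry--Esseen arguments controlling how the \exshare{} governs fidelity — is already packaged inside Theorem~\ref{thm:arbitrary} and Corollary~\ref{coro:sym}. The only thing to be careful about is the bookkeeping identity $P_{hL} + P_{lL} = 1$ (and $P_{hH} + P_{lH} = 1$), used to convert $P_{lL} - (1-\Thd)$ into $\Thd - P_{hL}$ so that the final condition is stated symmetrically in terms of the $h$-signal probabilities; this is the "$P_{hH} > \Thd > P_{hL}$" form in the statement. So the proof is a two-line computation plus an invocation of Corollary~\ref{coro:sym}.
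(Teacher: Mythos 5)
Your proposal is correct and is essentially the paper's own argument: the paper likewise computes $\hthd_H = P_{hH} - \Thd$ and $\hthd_L = P_{lL} - (1-\Thd) = \Thd - P_{hL}$ for the symmetric profile induced by $\stg = (0,1)$ and then invokes Corollary~\ref{coro:sym} to get the equivalence with $P_{hH} > \Thd > P_{hL}$.
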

    \begin{figure}[htbp]
        \centering        \includegraphics[width=0.6\linewidth]{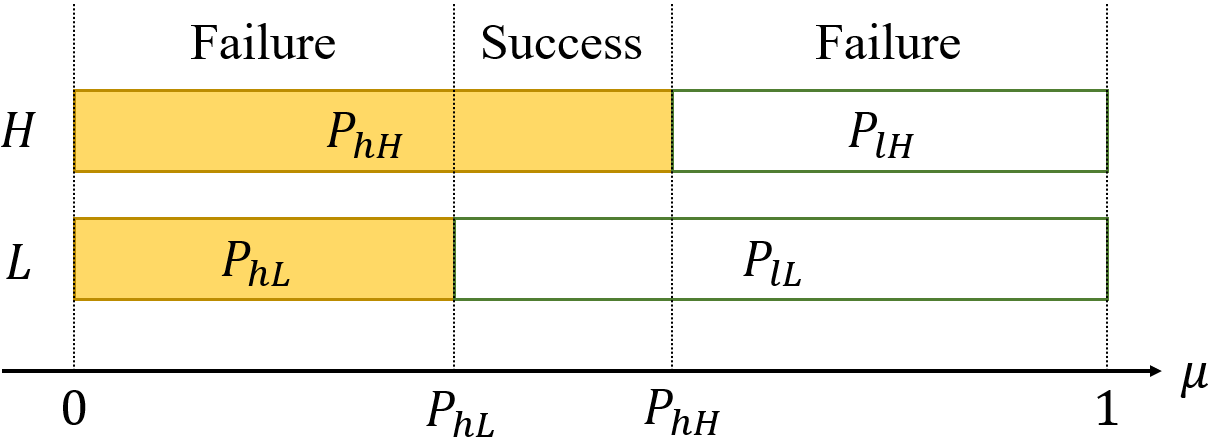}
        \caption{Illustration of Corollary~\ref{coro:informative}.\label{fig:informative}}
        
    \end{figure}

Corollary~\ref{coro:informative} forms a comparison with Theorem~\ref{thm:acc}. Strategic behavior always leads to the informed majority decision, while non-strategic informative voting achieves the same only when the majority vote threshold is ``unbiased'' compared with the signal distribution. 

In sincere voting, an agent votes as if she is making the decision individually. A sincere agent chooses the alternative that maximizes the expected utility conditioned on the signal. The expected utility of an agent making an individual decision conditioned on signal $\sig$ is  
\begin{align*}
    \ut_{\sag}(\bA\mid \sig) = & \Pr[\Wosrv = L \mid \sig] \cdot \vt_{\sag}(\bA, L) + \Pr[\Wosrv = H \mid \sig] \cdot \vt_{\sag}(\bA, H). \\
    \ut_{\sag}(\bR\mid \sig) = & \Pr[\Wosrv = L \mid \sig] \cdot \vt_{\sag}(\bR, L) + \Pr[\Wosrv = H \mid \sig] \cdot \vt_{\sag}(\bR, H). 
\end{align*}

\begin{dfn}
    A strategy profile $\stgp$ is \emph{sincere} if for any agent $i$, conditioned that $i$ receives signal $\sig$, $i$ votes for $\bA$ if $\ut_{\sag}(\bA\mid \sig) > \ut_{\sag}(\bR\mid \sig)$ and votes for $\bR$ otherwise. 
\end{dfn}
A sincere strategy profile is not always symmetric, because the sincere behavior of agents not only depends on his/her signal but also on his/her utility. Therefore, sincere agents with different utility functions may play different strategies.
Given the assumption of $P_{hH} > P_{hL}$, we have $\Pr[\Wosrv = L \mid l] > \Pr[\Wosrv = L \mid h]$ and $\Pr[\Wosrv = H \mid l] < \Pr[\Wosrv = H \mid h]$. As a result, $\ut_{\sag}(\bA\mid l) < \ut_{\sag}(\bA\mid h)$ and $\ut_{\sag}(\bR\mid l) > \ut_{\sag}(\bR\mid h)$. Therefore, a sincere voter would play one of the five strategies below based on her utility function $\vt_\sag$.  
\begin{enumerate}
    \item If $\ut_{\sag}(\bA\mid l) < \ut_{\sag}(\bR\mid l)$, and $\ut_{\sag}(\bA\mid h) < \ut_{\sag}(\bR\mid h)$, an agent always votes for $\bR$. 
    \item If $\ut_{\sag}(\bA\mid l) < \ut_{\sag}(\bR\mid l)$, and $\ut_{\sag}(\bA\mid h) = \ut_{\sag}(\bR\mid h)$, an agent votes for $\bR$ under signal $l$ and votes arbitrarily under signal $h$. 
    \item If $\ut_{\sag}(\bA\mid l) < \ut_{\sag}(\bR\mid l)$, and $\ut_{\sag}(\bA\mid h) > \ut_{\sag}(\bR\mid h)$, an agent votes informatively. 
    \item If $\ut_{\sag}(\bA\mid l) = \ut_{\sag}(\bR\mid l)$, and $\ut_{\sag}(\bA\mid h) > \ut_{\sag}(\bR\mid h)$, an agent votes arbitrarily under signal $l$, and vote for $\bA$ under signal $h$. 
    \item If $\ut_{\sag}(\bA\mid l) > \ut_{\sag}(\bR\mid l)$, and $\ut_{\sag}(\bA\mid h) > \ut_{\sag}(\bR\mid h)$, an agent always votes for $\bA$. 
\end{enumerate}

A sincere profile is also a regular profile, as friendly agents always vote for $\bA$, and unfriendly agents always vote for $\bR$ in their individual decisions. Therefore, applying Theorem~\ref{thm:acc}, we have the following statement. 

\begin{coro}
    \label{coro:sincere}
    For an arbitrary sequence of instances, let $\{\stgp_\ag\}$ be the sequence of sincere strategy profiles. Then the fidelity $\acc(\stgp_\ag)$ converges to 1 if and only if $\stgp_\ag$ is an $\varepsilon$-strong Bayes Nash with $\varepsilon = o(1)$. 
\end{coro}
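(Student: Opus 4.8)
The plan is to reduce Corollary~\ref{coro:sincere} entirely to Theorem~\ref{thm:acc}; the only thing needed beyond that theorem is the observation that a sincere profile sequence is a \emph{regular} profile sequence, together with a short quantifier-matching argument. First I would verify regularity. For a friendly agent $\sag$, the utility ordering $\vt_\sag(H,\bA) > \vt_\sag(L,\bA) > \vt_\sag(L,\bR) > \vt_\sag(H,\bR)$ forces, for \emph{every} signal $\sig$, the chain $\ut_\sag(\bA\mid\sig) \ge \vt_\sag(L,\bA) > \vt_\sag(L,\bR) \ge \ut_\sag(\bR\mid\sig)$, because $\ut_\sag(\bA\mid\sig)$ and $\ut_\sag(\bR\mid\sig)$ are convex combinations of the corresponding $\vt_\sag(\cdot,\bA)$ and $\vt_\sag(\cdot,\bR)$ entries; hence a sincere friendly agent always votes $\bA$ (falling into case~5 of the case split, with no ties to break). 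The mirror-image argument using the unfriendly ordering $\vt_\sag(L,\bR) > \vt_\sag(H,\bR) > \vt_\sag(H,\bA) > \vt_\sag(L,\bA)$ shows a sincere unfriendly agent always votes $\bR$. Thus $\{\stgp_\ag\}$ satisfies Definition~\ref{dfn:reg}, so Theorem~\ref{thm:acc} applies to it.

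Given that, the ``only if'' direction is immediate: if $\lim_{\ag\to\infty}\acc(\stgp_\ag) = 1$, the first bullet of Theorem~\ref{thm:acc} says each $\stgp_\ag$ is an $\varepsilon$-strong BNE with $\varepsilon = o(1)$. For the ``if'' direction I would argue by contraposition. Suppose $\lim_{\ag\to\infty}\acc(\stgp_\ag) = 1$ fails; then the second bullet of Theorem~\ref{thm:acc} produces a constant $\varepsilon_0 > 0$ and infinitely many $\ag$ at which $\stgp_\ag$ is not an $\varepsilon_0$-strong BNE. Here I would invoke the elementary monotonicity of the solution concept in its tolerance: a profile that is an $\varepsilon$-strong BNE is also an $\varepsilon'$-strong BNE for every $\varepsilon' \ge \varepsilon$, since a deviation making some deviator gain more than $\varepsilon'$ a fortiori makes that deviator gain more than $\varepsilon$. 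Consequently, if each $\stgp_\ag$ were an $\varepsilon_\ag$-strong BNE with $\varepsilon_\ag = o(1)$, then for all large $\ag$ we would have $\varepsilon_\ag \le \varepsilon_0$ and hence $\stgp_\ag$ would be an $\varepsilon_0$-strong BNE, contradicting the existence of infinitely many failures. Therefore $\acc(\stgp_\ag)$ must converge to $1$, completing the equivalence.

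There is no genuinely hard step: all of the probabilistic and game-theoretic work is already packaged in Theorem~\ref{thm:acc} (and through it Lemma~\ref{lem:nash} and Theorem~\ref{thm:arbitrary}). The only points that need a little care are (i) checking regularity with enough attention to cover every signal and to note that friendly and unfriendly agents never land in the ``vote arbitrarily'' boundary cases of the five-way split preceding the corollary, and (ii) aligning the ``for every $\ag$'' phrasing of the first bullet and the ``for infinitely many $\ag$'' phrasing of the second bullet of Theorem~\ref{thm:acc} with the desired biconditional, which is exactly what the monotonicity-in-$\varepsilon$ observation handles. Everything else is a direct citation.
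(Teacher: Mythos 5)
Your proposal is correct and follows the paper's own route: observe that a sincere profile is regular (friendly agents strictly prefer $\bA$ and unfriendly agents strictly prefer $\bR$ conditioned on every signal, so they never hit the tie cases) and then invoke Theorem~\ref{thm:acc}. Your extra details—the convex-combination argument for regularity and the monotonicity-in-$\varepsilon$ step to align the two bullets of Theorem~\ref{thm:acc} with the biconditional—are just careful elaborations of what the paper leaves implicit.
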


Corollary~\ref{coro:sincere} tells us that sincere voting performs as well as strategic voting if and only if itself is also strategic. The following example illustrates different behaviors of sincere voters by ``manipulating'' their utility functions under the same world state and signal distribution and gives examples where sincere voting succeeds and fails.

\begin{ex}
    \label{ex:sincere} 
    Consider the following scenario. The world state prior $P_{H} = P_{L} = 0.5$. The signal distribution $P_{hH} = P_{lL} = 0.8$, and $P_{lH} = P_{hL} = 0.2$. By the Bayes Theorem, we compute the probability of a world state conditioned on a private signal as follows. 

    \begin{align*}
        \Pr[\Wosrv = L \mid l] = 0.8, &\qquad \Pr[\Wosrv = H \mid l] = 0.2\\
        \Pr[\Wosrv = L \mid h] = 0.2, &\qquad \Pr[\Wosrv = H \mid h] = 0.8.
    \end{align*}

We assume all agents are contingent and share the same utility function, and consider three different cases as shown in Table~\ref{tbl:ex_sincere}. Suppose $\stgp$ is a strategy profile where all agents vote sincerely. 

\begin{minipage}{\linewidth}
\vspace{0.3cm}
\centering
\begin{tabular}{@{}ccccc@{}}
\toprule
(Winner, World State) & $(\bA, H)$         & $(\bA, L)$         & $(\bR, H)$         & $(\bR, L)$         \\ \midrule
Case 1      & 1           & 0           & 0           & 1           \\
Case 2      & 5           & 1           & 0           & 2           \\
Case 3      & 4           & 1           & 0           & 2           \\ \bottomrule
\end{tabular}
\captionof{table}{Utility functions for three cases.\label{tbl:ex_sincere}}
\end{minipage}


In case 1, $\ut_{\sag}(\bA\mid h) = \ut_{\sag}(\bR\mid l) = 0.8$, and $\ut_{\sag}(\bA\mid l) = \ut_{\sag}(\bR\mid h) = 0.2$. Therefore, every sincere voter votes informatively, and $\stgp$ is also informative voting. By corollary~\ref{coro:informative}, $\stgp$ leads to the informed majority decision if and only if the threshold $0.2 < \Thd < 0.8$.  

In case 2,  $\ut_{\sag}(\bA\mid l) = 1.8, \ut_{\sag}(\bR\mid l) = 1.6$, $\ut_{\sag}(\bA\mid h) = 4.2$, and $\ut_{\sag}(\bR\mid h) = 0.4$. Therefore, all sincere agents always vote for $\bA$ even if they are contingent. In this case, $\bA$ is always the winner, and $\acc(\stgp)$ does not converge to 1. 

In case 3, $\ut_{\sag}(\bA\mid l) = \ut_{\sag}(\bR\mid l) = 1.6$, $\ut_{\sag}(\bA\mid h) = 3.4$, and $\ut_{\sag}(\bR\mid h) = 0.4$. In this case, a sincere agent votes $\bA$ under signal $h$, and votes arbitrarily under signal $l$. Then for any $0.2 < \Thd < 1$, $\stgp$ induced by strategy $\stg = (\bp_l, 1)$ leads to the informed majority decision with high probability, where $\bp_l \ge 0$ satisfies $\bp_l > 5\Thd - 4$ and $\bp_l < 1.25\Thd - 0.25$. These conditions guarantee the \exshare{} of $\stgp$ to be strictly positive. 
    
\end{ex}

As shown in Example~\ref{ex:sincere}, sincere agents can have drastically different behaviors in different scenarios. Nevertheless, once we know the strategy of each agent, we can apply Theorem~\ref{thm:arbitrary} to analyze the probability of a sequence of sincere voting leading to the informed majority decision. 

}
\section{Non-binary World States and Non-binary Signals}
{\label{sec:nb}
In this section, we discuss how we extend our model and results to a setting with non-binary world states and non-binary signals. We follow the setting of~\citet{schoenebeck21wisdom}. The largest difference in the non-binary setting is that the preferences of agents form a spectrum along multiple world states. Different agents have different thresholds of world states in which they switch the preferred alternatives. 

\paragraph{World State.}There are $\Wos$ possible world states $\Wosset = \{1,2,\cdots, \Wos\}$. The higher the world state is, the more $\bA$ is preferred to $\bR$. We use $\wos$ to denote a generic world state. Let $P_{\wos} = \Pr[\Wosrv = \wos]$ be the common prior of the world state. We assume $P_{\wos} > 0$ for every $\wos \in \Wosset$. 

\paragraph{Signal.} Every agent receives a signal from $\Sigset = \{1,2,\cdots, \Sig\}$. We use $\sig$ to denote a generic signal. Signals are i.i.d conditioned on the world state. Let $P_{\sig\wos} = \Pr[\Sigrv_{\sag} = \sig \mid \Wosrv = \wos]$ be the probability that an agent receives signal $\sig$ given world state is $\wos$. The assumption of $P_{hH} >P_{hL}$ in the binary state is extended to the following assumption of stochastic dominance, which requires signals to be positively correlated to world states.

\begin{asm}[Stochastic Dominance]
For any agent $\sag$ and any world states $\wos_1 > \wos_2$,
\begin{equation*}
    \Pr[\Sigrv_{\sag} \ge \sig \mid \Wosrv =\wos_1] > \Pr[\Sigrv_{\sag} \ge \sig \mid \Wosrv =\wos_2].
    \end{equation*}
\end{asm}

\paragraph{Utility} Every agent $\sag$ has a utility function $\vt_{\sag}: \Wosset \times \{\bA, \bR\}\to \{0,1,\cdots, B\}$, where $B$ is the positive integer upper bound. We assume that $\bA$ is more preferable in a higher world state than in a lower state, and $\bR$ is the opposite: for any $\wos_1$ and $\wos_2$ with $\wos_1 > \wos_2$, $\vt_{\sag}(\wos_1, \bA) > \vt_{\sag} (\wos_2, \bA)$, and $\vt_{\sag}(\wos_1, \bR) < \vt_{\sag} (\wos_2, \bR)$. We also assume $\vt_{\sag}(\wos, \bA)\neq \vt_{\sag}(\wos,\bR)$ for any $\wos$ and any $\sag$. 

\paragraph{Fraction of agents} Given a world state $\wos$, let $\aga_\wos$ and  $\agr_\wos$ be the {\em approximated} fraction of agents prefer $\bA$ ($\bR$, respectively) in world state $\wos$. We assume $\aga_\wos$ and  $\agr_\wos$ are independent from $\ag$. Formally, given $\ag$,  let $\agset(\wos,\bA) =\{ \sag\mid \vt_{\sag}(\wos, \bA) > \vt_{\sag}(\wos, \bR)\}$ be the set of agents preferring $\bA$ in $\wos$ ($\agset(\wos,\bR) $ defined similarly). We have $|\agset(\wos,\bR)| = \lfloor \agr_\wos \cdot \ag\rfloor$ and $|\agset(\wos,\bA)| = \ag - |\agset(\wos,\bR)| $. Naturally, $\aga_\wos$ increases, and $\agr_\wos$ decreases as $\wos$ increases. We assume that $\aga_\wos$ and $\agr_\wos$ are common knowledge.

\paragraph{Majority vote and informed majority decision} We study the majority vote with threshold $\Thd$. If at least $\Thd\cdot \ag$ agents vote for $\bA$, $\bA$ is announced to be the winner; otherwise, $\bR$ is announced to be the winner.
The informed majority decision is defined on each world state $\wos$. 
Given a world state $\wos$, if $\aga_\wos > \Thd$, we say $\bA$ is the informed majority decision; otherwise, we say $\bR$ is the informed majority decision. We assume that $\aga_{\wos}\neq \Thd$ for all $\wos\in\Wosset$, and the rounding between $\aga_{\wos}$ and $|\agset(\wos,\bA)|$ ($\agr_{\wos}$ and $|\agset(\wos,\bR)|$, respectively) does not flip the informed majority decision. 

\paragraph{Types of agents}
Let $\Lowset = \{\wos\in\Wosset \mid \aga_\wos < \Thd\}$ and $\Highset = \{\wos\in\Wosset \mid \aga_\wos > \Thd\}$ be the sets of world states where $\bR$ ($\bA$, respectively) is the informed majority decision. We only consider the case where both $\Lowset$ and $\Highset$ are non-empty. (Otherwise, an alternative is unanimously the informed majority decision, and there is no uncertainty.) There is a threshold partitioning $\Wosset$ into two sets.  Let $\Low = \max \{\wos\in \Lowset\}$ to be the largest world where $\bR$ is the informed majority decision, and $\High = \min\{\wos\in \Highset\}$ to be the smallest world where $\bA$ is the informed majority decision. We have $\High = \Low +1$.

Similarly, for an agent $\sag$, let $\Lowset_{\sag} = \{\wos\in\Wosset \mid \vt_{\sag}(\wos, \bR) > \vt_{\sag}(\wos, \bA)\}$ and $\Highset_{\sag} = \{\wos\in\Wosset \mid \vt_{\sag}(\wos, \bR) < \vt_{\sag}(\wos, \bA)\}$. $\Lowset_{\sag}$ ($\Highset_{\sag}$, respectively) is the set of world states where $\bR$ ($\bA$, respectively) is preferred by $\sag$. For a agent $\sag$, let $\Low_{\sag} = \max \{\wos\in \Lowset_{\sag}\}$ be the largest world where $\sag$ prefers $\bR$, and $\High_{\sag} = \min\{\wos\in \Highset_{\sag}\}$ to be the smallest world where $\sag$ prefers $\bA$. Specifically, let $\Low_{\sag} = 0$ if $\Lowset_{\sag} = \emptyset$ and $\High_{\sag} = \Wos+1$ if $\Highset_{\sag} = \emptyset$.  We have $\High_{\sag} = \Low_{\sag} +1$. 

\begin{enumerate}
    \item We say an agent $\sag$ is {\em (candidate) friendly} if $\Lowset\cap \Highset_{\sag} \neq \emptyset$. This says that there exists a world state $\wos$ where $\sag$ prefer $\bA$ while the informed majority decision is $\bR$. 
    \item Similarly, an agent $\sag$ is {\em (candidate) unfriendly} if $\Highset \cap \Lowset_{\sag} \neq \emptyset$. This says that there exists a world state $\wos$ where $\sag$ prefers $\bR$ while the informed majority decision is $\bA$.
    \item Finally, an agent $n$ is {\em contingent} if $\Lowset_{\sag} = \Lowset$. or equivalently, $\Highset_{\sag} = \Highset$. 
\end{enumerate}
Unlike the binary setting, friendly/unfriendly agents in the non-binary setting do not always prefer one alternative. They just have thresholds above or below the majority.

\begin{ex}
\label{ex:nb}
We extend the COVID policy-making scenario in Example~\ref{ex:setting} to the non-binary setting. $N=20$ voters decide whether to accept or reject the more-restrictive policy. The world states $\Wosset = \{1, 2, 3\}$ describe the risk level of the virus, whereas a larger state represents a higher risk. Suppose $P_1 =P_2 = 0.3$, and $P_3 = 0.4$. 

Every voter receives a private signal from $\Sigset = \{1, 2, 3, 4\}$. The larger the signal is, the higher the risk is likely to be. Table~\ref{tbl:nb_ex_sig} is the signal distribution given the world state. 

\begin{minipage}{\linewidth}
\vspace{0.3cm}
\centering
\begin{tabular}{@{}ccccc@{}}
\toprule
World State & Signal 1   &  Signal 2   & Signal 3   &  Signal 4   \\ \midrule
1           & 0.6 & 0.2 & 0.1 & 0.1 \\
2           & 0.4 & 0.2 & 0.2 & 0.2 \\
3           & 0.1 & 0.2 & 0.3 & 0.4 \\ \bottomrule
\end{tabular}
\captionof{table}{Signal distribution.\label{tbl:nb_ex_sig}}
\end{minipage}

The majority threshold is $\Thd = 0.6$. Therefore, $\bA$ is the informed majority decision if and only if at least 12 agents prefer $\bA$ to $\bR$. The voters are categorized into four different groups. Each group has five voters, and voters in the same group share the same utility shown in Table~\ref{tbl:nb_ex_vt}. The larger the group index is, the more voters prefer $\bA$ to $\bR$. 

Table~\ref{tbl:nb_ex_pref} shows the preferences of each group and the informed majority decision under each world state. The preference of each group comes from the comparison of utilities in Table~\ref{tbl:nb_ex_vt}. The informed majority decision is the aggregation of group preferences. Since each group has five voters, and the majority threshold is 12 agents, $\bA$ needs to be preferred by at least three groups to become the informed majority decision. Therefore, $\bA$ is the informed majority decision only in state 3.
By comparing the preferences of each group and the informed majority decision, we know that group 1 voters are unfriendly agents, group 2 voters are contingent agents, and group 3 and 4 voters are friendly agents. 

\begin{minipage}{.4\linewidth}
\vspace{0.25cm}
    \centering
\begin{tabular}{@{}ccccccc@{}}
\toprule
Winner      & \multicolumn{3}{c}{$\bA$} & \multicolumn{3}{c}{$\bR$} \\ \midrule
World State & 1       & 2      & 3      & 1       & 2      & 3      \\ \midrule
Group 1     & 1       & 2      & 3      & 8       & 6      & 4      \\
Group 2     & 2       & 3      & 4      & 6       & 4      & 2      \\
Group 3     & 2       & 5      & 8      & 4       & 3      & 2      \\
Group 4     & 4       & 6      & 9      & 3       & 2      & 1      \\ \bottomrule
\end{tabular}
\captionof{table}{Utility function of each group\label{tbl:nb_ex_vt}}
\end{minipage}
\begin{minipage}{0.6\linewidth}
\centering
\vspace{0.45cm}
\begin{tabular}{@{}cccc@{}}
\toprule
World State & 1     & 2     & 3     \\ \midrule
Group 1     & $\bR$ & $\bR$ & $\bR$ \\
Group 2     & $\bR$ & $\bR$ & $\bA$ \\
Group 3     & $\bR$ & $\bA$ & $\bA$ \\
Group 4     & $\bA$ & $\bA$ & $\bA$ \\
Informed Majority    & $\bR$ & $\bR$ & $\bA$ \\ \bottomrule
\end{tabular}
\captionof{table}{Preference of each group and the majority.\label{tbl:nb_ex_pref}}
\end{minipage}

\end{ex}

\paragraph{Strategy.} In the non-binary setting, a strategy can be represented as a vector $\stg  = (\beta_1, \beta_2, \cdots, \beta_{\Sig})$, where $\beta_{\sig}$ is the probability that the agent votes for $\bA$ when receiving signal $\sig$. 
A strategy profile is the vector of strategies of all agents. $\stgp = (\stg_1, \stg_2, \cdots, \stg_{\ag})$. 

The definition of the regular strategy profile remains the same as in the binary setting: friendly agents always vote for $\bA$, and unfriendly agents always vote for $\bR$. 

\paragraph{Fidelity and Expected Utility} Given a strategy profile $\stgp$, let $\lp_{\wos}^{\bA}(\stgp)$ ($\lp_{\wos}^{\bR}(\stgp)$, respectively) be the probability that $\bA$ ($\bR$, respectively) becomes the winner when world state is $\wos$. We can define the fidelity and the expected utility in the same manner as in the binary setting. 
\begin{align*}
    \acc(\stgp) = & \sum_{\wos\in \Lowset}P_{\wos}\cdot \lp_{\wos}^{\bR}(\stgp) + \sum_{\wos\in \Highset}P_{\wos} \cdot \lp_{\wos}^{\bA}(\stgp).\\
    \ut_{\sag}(\stgp) =& \sum_{\wos = 1}^{\Wos} P_{\wos}(\lp_{\wos}^{\bA}(\stgp)\cdot\vt_{\sag}(\wos, \bA) + \lp_{\wos}^{\bR}(\stgp)\cdot \vt_{\sag}(\wos, \bR)). 
\end{align*}


\paragraph{Excess Expected Vote Share} Similarly, the \exshare{} is the expected vote share that an alternative attracts under state $\wos$ minus the threshold of the alternative. In different instances, the informed majority decision may change in different world states. Therefore, we define the \exshare{} for both $\bA$ and $\bR$ in every world state.
\begin{align*}
    \hthd^\ag_{\wos\bA} =&  \frac{1}{\ag}\sum_{\sag = 1}^{\ag}E[\xrv_{\sag}^{\ag}\mid \wos] - \Thd.\\
    \hthd^\ag_{\wos\bR} =& \frac{1}{\ag}\sum_{\sag = 1}^{\ag}E[1-\xrv_{\sag}^{\ag}\mid \wos] - (1-\Thd). 
\end{align*}

For world states $\wos \in \Highset$ where $\bA$ is the informed majority decision, we care about $\hthd^\ag_{\wos\bA}$; and for states $\wos \in \Lowset$, we care about $\hthd^\ag_{\wos\bR}$. Therefore, we define $\hthd^{\ag}$ to be the smallest \exshare{} among those we care about. 
$$\hthd^{\ag} = \min\left( \min_{\wos\in\Highset} (\hthd^\ag_{\wos\bA}), \min_{\wos\in\Lowset} (\hthd^\ag_{\wos\bR})\right).  $$

For symmetric profile sequences where \exshare{} is independent of $\ag$, we use $\hthd_{\wos\bA}, \hthd_{\wos\bR}$, and $\hthd$ to denote them. 

\paragraph{Instance and Sequence of Strategy Profiles.}  Let $\{\inst_\ag\}_{\ag=1}^{\infty}$ (or $\{\inst_\ag\}$ for short) be a sequence of instances, where each $\inst_\ag$ is an instance of $\ag$ agents. The instances in a sequence share the same majority threshold $\Thd$, world state prior distribution $\{P_{\wos}\}$, signal prior distribution $\{P_{\sig\wos}\}$, and approximated type fractions ($\aga_\wos, \agr_\wos$). Same to the binary setting, we do not regard agents in different instances as related and have no additional assumption on the utility functions of agents.
We define a sequence of strategy profile $\{\stgp_\ag\}_{\ag=1}^{\infty}$ based on the instance sequence, where for each $\ag$, $\stgp_\ag$ is a strategy profile in $\inst_\ag$.


Our positive results on strategic voting can be extended to the non-binary setting. Theorem~\ref{thm:nb_acc} states the equivalence of fidelity converging to 1 and the $\varepsilon$-strong Bayes Nash with $\varepsilon = o(1)$. Theorem~\ref{thm:nb_deviate} guarantees the existence of the regular profile sequence with fidelity converging to 1. 

\begin{thm}
\label{thm:nb_acc}  Given an arbitrary sequence of instance and an arbitrary regular strategy profile sequence $\{\stgp_{\ag}\}_{\ag=1}^{\infty}$, let $\{\acc(\stgp_{\ag})\}_{\ag=1}^{\infty}$ be the sequence of the fidelities of $\stgp_\ag$. 

\begin{itemize}
    \item If $\lim_{\ag\to \infty} \acc(\stgp_{\ag}) = 1$, then for every $\ag$, $\stgp_{\ag}$ is an $\varepsilon$-strong BNE with $\varepsilon = o(1)$. 
    \item If $\lim_{\ag\to \infty} \acc(\stgp_{\ag}) = 1$ does not hold, then there exist infinitely many $\ag$ such that $\stgp_{\ag}$ is NOT an $\varepsilon$-strong BNE for some constant $\varepsilon$. 
\end{itemize}
\end{thm}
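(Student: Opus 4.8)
The plan is to follow the proof of Theorem~\ref{thm:acc} step by step, replacing the two--state concentration estimates and coalition arguments by their multi--state counterparts. The backbone is the non-binary analogue of Lemma~\ref{lem:nash}: for every $\ag$, a regular strategy profile $\stgp_{\ag}$ is an $\varepsilon$-strong BNE with $\varepsilon = c\cdot\bigl(1-\acc(\stgp_{\ag})\bigr)$, where $c$ depends only on $B$, on the number of states $\Wos$, and on $\min_{\wos}P_{\wos}$ --- all quantities that are fixed along a sequence of instances, since the instances share $\{P_{\wos}\}$. Granting this, the first bullet is immediate: apply it to each $\stgp_{\ag}$ and let $\ag\to\infty$. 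For the second bullet, if $\acc(\stgp_{\ag})\not\to 1$ then $\limsup_{\ag}\bigl(1-\acc(\stgp_{\ag})\bigr)\ge\eta$ for some constant $\eta>0$; take the regular sequence $\{\stgp_{\ag}'\}$ with $\acc(\stgp_{\ag}')\to 1$ furnished by Theorem~\ref{thm:nb_deviate}; note contingent agents exist, since they are exactly the agents whose preferred alternative changes between states $\Low$ and $\High$ and the fraction of such agents is $\aga_{\High}-\aga_{\Low}>0$; then for the infinitely many $\ag$ with $1-\acc(\stgp_{\ag})\ge\eta$, if the coalition of all contingent agents switches to their strategies in $\stgp_{\ag}'$ --- which, both profiles being regular, turns $\stgp_{\ag}$ into exactly $\stgp_{\ag}'$ --- then by the utility sandwich below every contingent agent gains at least $\eta - B\bigl(1-\acc(\stgp_{\ag}')\bigr)\ge\eta/2$ once $\ag$ is large, so $\stgp_{\ag}$ is not an $\varepsilon$-strong BNE for any $\varepsilon<\eta/2$, for infinitely many $\ag$.

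\noindent\textbf{The non-binary Lemma.} I would first record two facts. (i) Since $\lp_{\wos}^{\bR}=1-\lp_{\wos}^{\bA}$, every agent's expected utility is affine in the vector of winning probabilities, $\ut_{\sag}(\stgp)=\sum_{\wos}P_{\wos}\vt_{\sag}(\wos,\bR)+\sum_{\wos}P_{\wos}\lp_{\wos}^{\bA}(\stgp)\bigl(\vt_{\sag}(\wos,\bA)-\vt_{\sag}(\wos,\bR)\bigr)$, with coefficients of absolute value at most $B$; moreover $1-\acc(\stgp)=\sum_{\wos\in\Lowset}P_{\wos}\lp_{\wos}^{\bA}(\stgp)+\sum_{\wos\in\Highset}P_{\wos}\bigl(1-\lp_{\wos}^{\bA}(\stgp)\bigr)$, so $P_{\wos}\bigl|\lp_{\wos}^{\bA}(\stgp)-\lp_{\wos}^{\bA}(\stgp')\bigr|\le\bigl(1-\acc(\stgp)\bigr)+\bigl(1-\acc(\stgp')\bigr)$ for each $\wos$, and in a regular profile $\lp_{\wos}^{\bA}$ is within $(1-\acc(\stgp))/P_{\wos}$ of its extreme value ($1$ on $\Highset$, $0$ on $\Lowset$). (ii) $\lp_{\wos}^{\bA}(\stgp)$ is monotone non-decreasing in each agent's probability of voting $\bA$ under each signal (a coupling), so from a regular profile a coalition of only friendly agents can only decrease every $\lp_{\wos}^{\bA}$, and a coalition of only unfriendly agents can only increase every $\lp_{\wos}^{\bA}$. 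Now fix a regular $\stgp$ and a candidate deviation $\stgp'$. If $1-\acc(\stgp')<(B+1)\bigl(1-\acc(\stgp)\bigr)$, fact (i) bounds $|\ut_{\sag}(\stgp')-\ut_{\sag}(\stgp)|$ by a fixed multiple of $1-\acc(\stgp)$ for every agent, so no coalition gains more than $\varepsilon$.

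\noindent\textbf{The remaining case.} If $1-\acc(\stgp')\ge(B+1)\bigl(1-\acc(\stgp)\bigr)$, I rule out every non-empty deviating coalition $D$. (a) For a contingent agent $\sag$ ($\Lowset_{\sag}=\Lowset$, $\Highset_{\sag}=\Highset$), fact (i) shows the gap between $\ut_{\sag}(\stgp)$ and the maximal possible utility equals $\sum_{\wos\in\Lowset}P_{\wos}\lp_{\wos}^{\bA}(\stgp)\bigl(\vt_{\sag}(\wos,\bR)-\vt_{\sag}(\wos,\bA)\bigr)+\sum_{\wos\in\Highset}P_{\wos}\bigl(1-\lp_{\wos}^{\bA}(\stgp)\bigr)\bigl(\vt_{\sag}(\wos,\bA)-\vt_{\sag}(\wos,\bR)\bigr)$, which (integer utilities with $\vt_{\sag}(\wos,\bA)\ne\vt_{\sag}(\wos,\bR)$) lies between $1-\acc(\stgp)$ and $B\bigl(1-\acc(\stgp)\bigr)$; comparing this sandwich for $\stgp$ and $\stgp'$ gives $\ut_{\sag}(\stgp')\le\ut_{\sag}(\stgp)-\bigl(1-\acc(\stgp)\bigr)$, so a contingent agent strictly loses and $D$ contains no contingent agent. (b) If $D$ consists only of friendly agents, fact (ii) forces every $\lp_{\wos}^{\bA}$ to weakly decrease, so each member's gain is at most $B\sum_{\wos\in\Lowset_{\sag}}P_{\wos}\lp_{\wos}^{\bA}(\stgp)\le B\bigl(1-\acc(\stgp)\bigr)\le\varepsilon$ (the states where a friendly agent prefers $\bR$ all lie in $\Lowset$); symmetrically for an all-unfriendly $D$. (c) The delicate case is $D$ containing both a friendly and an unfriendly agent; here I would combine monotonicity (a friendly deviator can only lower, an unfriendly deviator can only raise, each state's expected $\bA$-vote-share), the room bounds $1-\lp_{\wos}^{\bA}=O(1-\acc(\stgp))$ on $\Highset$ and $\lp_{\wos}^{\bA}=O(1-\acc(\stgp))$ on $\Lowset$, and the monotone preference structure --- friendly thresholds lie below the threshold $\Thd$, unfriendly ones above, and $\vt_{\sag}(\wos,\bA)-\vt_{\sag}(\wos,\bR)$ is increasing in $\wos$ --- to show that any deviation that weakly improves every member of $D$ changes no member's utility by more than $O(1-\acc(\stgp))$.

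\noindent\textbf{Main obstacle.} Step (c) is the hard part. In the binary model it is essentially free: ``always vote $\bA$'' and ``always vote $\bR$'' are dominant strategies for friendly and unfriendly agents, so no coalition deviation can help either type, and only contingent agents --- already excluded in (a) --- could benefit. In the non-binary model these are no longer dominant strategies, since friendly and unfriendly agents no longer have a single preferred alternative, only a threshold relative to $\Thd$. One therefore has to argue directly that the limited slack available at each state (of order $1-\acc(\stgp)$) caps the total achievable gain, and in particular that the opposing incentives of friendly and unfriendly agents on the ``gap'' states --- those between an agent's own threshold and $\Thd$ --- cannot be reconciled into a jointly profitable deviation. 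Making this quantitative, by tracking state by state which direction of change has $\Theta(1)$ room versus only $O(1-\acc(\stgp))$ room and whom that change helps or hurts, is where the real work lies; the rest is a faithful translation of the binary argument, with the Hoeffding and Berry--Esseen estimates behind the multi--state version of Theorem~\ref{thm:arbitrary} supporting Theorem~\ref{thm:nb_deviate}.
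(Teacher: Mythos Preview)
Your plan mirrors the paper's proof almost exactly: the first bullet follows from a non-binary analogue of Lemma~\ref{lem:nash} (the paper states it as Lemma~\ref{lem:nb_nash}, with $\varepsilon=\Wos B((\Wos-1)B+1)(1-\acc(\stgp))$), and the second bullet uses Theorem~\ref{thm:nb_deviate} plus a lower bound on the contingent agents' gain from switching. Your sandwich argument for contingent agents in (a) and in the second bullet --- bounding the gap to the maximal utility between $1\cdot(1-\acc)$ and $B\cdot(1-\acc)$ --- is in fact cleaner than the paper's treatment, which splits into sub-cases depending on whether some state's winning probability moves the ``wrong'' way.

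On step~(c), you have correctly located the only nontrivial piece, and your list of ingredients (room bounds, opposing preferences on the gap states, monotonicity of $\vt_{\sag}(\wos,\bA)-\vt_{\sag}(\wos,\bR)$ in $\wos$) is right. But the vote-share monotonicity observation --- friendly deviators can only lower $\lp_{\wos}^{\bA}$, unfriendly only raise it --- is what drives (b), not (c): in a mixed coalition the net change at each state can go either way, so that observation gives no traction. What the paper actually does for (c) is prove the contrapositive-style statement ``if a friendly agent gains more than $\Delta$, every unfriendly agent strictly loses.'' The mechanism is: write agent~$1$'s gain as a sum over states; on $\Lowset_1\cup\Highset_2$ each term is $O(e(\ag))$ by the room bounds; on the remaining gap states $\bigl(\Lowset\cap\Highset_1\bigr)\cup\bigl(\Lowset_2\cap\Highset\bigr)$, use the monotonicity of $\vt_1(\wos,\bA)-\vt_1(\wos,\bR)$ to replace each coefficient by its value at $\wos=\Low$, absorbing an $O(e(\ag))$ error per state. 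This isolates a single quantity
\[
L=\sum_{\wos\in\Lowset\cap\Highset_1}P_{\wos}\bigl(\lp_{\wos}^{\bA}(\stgp')-\lp_{\wos}^{\bA}(\stgp)\bigr)-\sum_{\wos\in\Lowset_2\cap\Highset}P_{\wos}\bigl(\lp_{\wos}^{\bA}(\stgp)-\lp_{\wos}^{\bA}(\stgp')\bigr),
\]
and shows $L>\Wos B\,e(\ag)$. The same decomposition for unfriendly agent~$2$ produces the \emph{same} $L$ but multiplied by $\vt_2(\Low,\bA)-\vt_2(\Low,\bR)\le -1$, so agent~$2$'s gain is at most $-L+\Wos B\,e(\ag)<0$. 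That sign flip at state~$\Low$ is the whole point; your ``tracking room state by state'' heuristic is compatible with this, but the factoring step is what makes it go through, and it is not something you can read off from vote-share monotonicity alone.
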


\begin{thm}
    \label{thm:nb_deviate} 
    Given any arbitrary sequence of instances, there always exists a sequence of regular strategy profiles $\{\stgp_{\ag}'\}_{\ag=1}^{\infty}$ such that $\acc(\stgp_{\ag}')$ converges to 1. 
\end{thm}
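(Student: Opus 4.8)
The plan is to mirror the construction behind Theorem~\ref{thm:deviate}: fix one contingent strategy $\stg'$, let every contingent agent play $\stg'$ while friendly agents vote $\bA$ and unfriendly agents vote $\bR$ (so each $\stgp_{\ag}'$ is regular), and then argue via Hoeffding's inequality that $\acc(\stgp_{\ag}')\to 1$. It suffices to pick $\stg'$ so that, for some constant $\Delta>0$, the expected fraction of votes for $\bA$ in state $\wos$ is at least $\Thd+\Delta$ for every $\wos\in\Highset$ and at most $\Thd-\Delta$ for every $\wos\in\Lowset$. Indeed, conditioned on the world state the vote indicators $\xrv_{\sag}^{\ag}\in\{0,1\}$ are independent, so Hoeffding gives $\lp_{\wos}^{\bA}(\stgp_{\ag}')\ge 1-e^{-2\Delta^2\ag}$ for $\wos\in\Highset$ and $\lp_{\wos}^{\bR}(\stgp_{\ag}')\ge 1-e^{-2\Delta^2\ag}$ for $\wos\in\Lowset$; since there are only $\Wos$ states, summing against the prior yields $\acc(\stgp_{\ag}')\ge 1-e^{-2\Delta^2\ag}\to 1$.

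First I would do the bookkeeping on types in terms of the shared parameters $(\aga_\wos,\agr_\wos)$. Monotonicity of each agent's preference in the world state, together with $\High=\Low+1$, shows that an agent prefers $\bA$ in state $\Low$ iff it is friendly and prefers $\bR$ in state $\High$ iff it is unfriendly, and that friendly, unfriendly, and contingent agents partition the agent set. Hence, up to an $O(1/\ag)$ rounding error, the fraction of friendly agents is $\aga_{\Low}$, the fraction of unfriendly agents is $1-\aga_{\High}$, and the fraction of contingent agents is $q_C:=\aga_{\High}-\aga_{\Low}$, a positive constant since $\aga_{\Low}<\Thd<\aga_{\High}$. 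So in a regular profile whose contingent agents all play $\stg'=(\bp_1,\dots,\bp_\Sig)$, the expected fraction of votes for $\bA$ in world state $\wos$ is, up to $O(1/\ag)$,
\[
    g(\wos)\;:=\;\aga_{\Low}+q_C\sum_{\sig=1}^{\Sig}P_{\sig\wos}\,\bp_\sig .
\]

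Next I would construct $\stg'$ exactly as in the binary proof. Put $t^\ast:=(\Thd-\aga_{\Low})/q_C\in(0,1)$, so the constant strategy $\bp_\sig\equiv t^\ast$ makes $g(\wos)\equiv\Thd$; then tilt toward $\bA$ on larger signals by setting $\bp_\sig:=t^\ast+\delta(\sig-c_0)$, where $c_0$ is any number with $E[\Sigrv_\sag\mid\Low]<c_0<E[\Sigrv_\sag\mid\High]$ (such $c_0$ exists because stochastic dominance forces $E[\Sigrv_\sag\mid\High]>E[\Sigrv_\sag\mid\Low]$) and $\delta>0$ is small enough that all $\bp_\sig\in(0,1)$ (possible since $t^\ast$ is a strict interior point and there are finitely many signals). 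Then $g(\wos)-\Thd=q_C\,\delta\,(E[\Sigrv_\sag\mid\wos]-c_0)$, and since $\sig\mapsto\sig-c_0$ is strictly increasing, stochastic dominance makes $\wos\mapsto E[\Sigrv_\sag\mid\wos]$ strictly increasing; it lies below $c_0$ at $\wos=\Low$ and hence on all of $\Lowset$, and above $c_0$ at $\wos=\High$ and hence on all of $\Highset$. Taking $\Delta:=q_C\,\delta\cdot\min\{\,E[\Sigrv_\sag\mid\High]-c_0,\;c_0-E[\Sigrv_\sag\mid\Low]\,\}>0$ gives the required uniform gap, with the $O(1/\ag)$ rounding error absorbed for all large $\ag$. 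Since $\Thd$, $\{P_{\wos}\}$, $\{P_{\sig\wos}\}$, and $(\aga_\wos,\agr_\wos)$ are common to every instance of the sequence, this single $\stg'$ works for all $\ag$.

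The hard part, relative to the binary case, will be that $\Lowset$ and $\Highset$ may each contain several world states, so one fixed strategy has to drive the expected $\bA$-share below $\Thd$ on all of $\Lowset$ and above $\Thd$ on all of $\Highset$ simultaneously; this is exactly where stochastic dominance is used, since it forces $g(\wos)$ to be monotone in $\wos$ once $\stg'$ is monotone in the signal, so it suffices to control the two boundary states $\Low$ and $\High$. A secondary, purely bookkeeping difficulty is pinning down the type fractions as $\aga_{\Low}$ and $1-\aga_{\High}$, which is what lets the whole construction be phrased through the shared parameters only.
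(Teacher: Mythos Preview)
Your proposal is correct and follows the same overall architecture as the paper: pick one fixed contingent strategy so that the expected $\bA$-share separates $\Highset$ from $\Lowset$ by a constant margin, then apply Hoeffding. Your bookkeeping identifying $\agf=\aga_{\Low}$, $\agu=1-\aga_{\High}$, and $\agc=\aga_{\High}-\aga_{\Low}$ matches the paper's Appendix~F.1 exactly, and your handling of the $O(1/\ag)$ rounding is the same as the paper's Part~2.

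Where you genuinely differ is in how you build $\stg'$. The paper collapses the signal space to two bins at a threshold $\sig^\ast=\lfloor\Sig/2\rfloor$, sets $\bp_l=\bp^\ast-\delta_l$ and $\bp_h=\bp^\ast+\delta_h$, tunes $\delta_h=\delta_l\cdot P_{l\High}/P_{h\High}$ so that $\thd_{\High\bA}=0$ exactly, and then bumps $\delta_h$ up in a third step to make both margins strictly positive. You instead take a single affine tilt $\bp_\sig=t^\ast+\delta(\sig-c_0)$ and observe that $g(\wos)-\Thd=q_C\,\delta\,(E[\Sigrv\mid\wos]-c_0)$, so stochastic dominance immediately gives strict monotonicity of $g$ across all $\Wos$ states and a single choice of $c_0\in(E[\Sigrv\mid\Low],E[\Sigrv\mid\High])$ finishes the job. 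Your route is shorter and avoids the three-step tuning; the paper's route has the minor advantage that it never uses the numerical values of the signals, only the dichotomy induced by $\sig^\ast$, so it would transfer verbatim to an abstract ordered signal space. Both arguments rest on the same fact---stochastic dominance forces any strategy that is increasing in the signal to yield an $\bA$-share that is increasing in the world state---and either is a complete proof.
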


Theorem~\ref{thm:nb_acc} and Theorem~\ref{thm:nb_deviate} preserves the same reasoning as Theorem~\ref{thm:acc} and Theorem~\ref{thm:deviate}. More details as well as the characterization of the fidelity in the non-binary setting are in Appendix~\ref{apx:nb_results}.

}
\section{Conclusion and Future Work}
{\label{sec:conclusion}
We study the binary voting game where agents can coordinate in groups. We show that strategic voting always leads to the informed majority decision, while non-strategic behaviors sometimes fail. 
In particular, we show that a strategy profile is an $\varepsilon$-strong Bayes Nash Equilibrium with small $\varepsilon$ if and only if it leads to the ``correct'' decision with high probability. Moreover, we analyze the fidelity of the strategy profile and provide criteria for judging whether a strategy profile is an equilibrium based on \exshare. Applying the analysis to non-strategic voting, we characterize the conditions that informative and sincere voting lead to the informed majority decision. Our results stand on the framework where agents have endogenous preferences over outcomes contingent on some underlying state.

One limitation of our work is that our results are restricted to a setting with two alternatives. An interesting yet challenging future direction is to study the impact of strategic behavior in a setting with more than two alternatives. We expect more complicated results as \citet{Goertz14:Condorcet} show that informative voting may be an equilibrium but leads to the wrong alternative in a model of three alternatives.

Another interesting direction is to explore strategic iterative voting with information uncertainty. We expect iterative voting to be more powerful in aggregating information and able to simulate some sophisticated mechanisms. For example, the mechanism in~\citet{schoenebeck21wisdom} can be regarded as a two-round voting game where only the second round counts, and every agent votes informatively in the first round and plays a surprisingly popular strategy in the second round. \citet{Kavner2021Bliss} show a surprising result that strategic behaviors increase the social welfare of agents in iterative voting on average. Nevertheless, the behavior of strategic agents is even more complicated in iterative voting, and the analysis of equilibria will be highly challenging.


\section*{Acknowledgments}
We thank all anonymous reviewers for their helpful comments and suggestions. GS acknowledges NSF \#2007256 and NSF \#1452915 for support.
LX acknowledges NSF \#1453542, \#2007994, and \#2106983, and a Google Research Award for support. The research of Biaoshuai Tao was supported by the National Natural Science Foundation of China (Grant No. 62102252).

%
%
%
%
%

\bibliographystyle{ACM-Reference-Format}
\bibliography{references,newref} 

\clearpage
\appendix
\section{Comparison with Feddersen and Pesendorfer's Work}
{\label{apx:fp}
\citet{Feddersen97:Voting} consider a two-alternative setting and present a (unique) class of equilibria that leads to the decision favored by the majority. We argue that Fedderson and Pesendorfer's work is drastically different from our work in many aspects. 

\paragraph{Setting} The most important and fundamental difference between Fedderson and Pesendorfer's work and ours is the setting. \citet{Feddersen97:Voting} adopt a model with continuous world states and an asymptotically large number of voters whose preferences are drawn from a distribution with full support on a continuum. Most of their results (the uniqueness of their equilibrium, for example) require continuity. We, on the other hand, consider discrete and finite world states and private signals. The solution concepts in their work is also different from ours. Fedderson and Pesendorfer consider symmetric Nash Equilibrium with no weakly dominated strategies. We, on the other hand, consider $\varepsilon$-strong Bayes Nash Equilibrium which precludes agents from coordinating with each other.


\paragraph{Equilibrium} Due to different settings and different solution concepts, the equilibria are also different from each other. In Fedderson and Pesendorfer's work, a Nash equilibrium consists of only three types of strategies: always vote for $\bA$, always vote for $\bR$, and vote informatively. In our work, on the other hand, a voting instance may have multiple distinct equilibria. Example~\ref{ex:motive3} shows scenarios where a Bayes Nash Equilibrium fails to be a strong Bayes Nash.
Moreover, Fedderson and Pesendorfer only show that the equilibrium will lead to the informed majority decision, while we also show that non-equilibrium strategy profiles will not lead to the informed majority decision. 

\paragraph{Types of agents} Although agents in Fedderson and Pesendorfer's work can also be classified into three groups based on their strategy, we argue, as \citet{schoenebeck21wisdom} argued, that there is ``fundamental difference in the motivation behind this classification''. As we follow the setting of Reference~ \citep{schoenebeck21wisdom}, our agent types reflect the preferences of agents among two alternatives. In Fedderson and Pesendorfer's work, however, agents choose their type by playing different strategies in some specific scheme so that the mechanism outputs the ``correct'' alternative. Therefore, the motivation of their classifying agents is to aggregate information and make a good decision, but not to reflect the preferences of agents. 

\section{Frequently used notations in the binary setting}
{\label{apx:notations}
Table~\ref{tab:notation} provides a list of frequently used notation in the main paper (binary setting).
\begin{table}[htbp]
\centering
\renewcommand{\arraystretch}{1.2}
\begin{tabular}{@{}ll@{}}
\toprule
Notation                                           & Meaning                                                                          \\ \midrule
$\ag$                                              & the total number of agents                                                       \\
$\bA, \bR$                                         & alternatives                                                                     \\
$\Wosset = \{L, H\}$                               & the set of world states                                                      \\
$\Sigset = \{l, h\}$                               & the set of private signals                                                   \\
$P_\wos$                                           & the prior belief on the probability of world state being $\wos$                  \\
$P_{\sig \wos}$   & the probability of receiving signal $\sig$ under world state $\wos$    \\
$\Thd$                                             & threshold of the majority vote                                                   \\
$\vt_\sag(\bA, \wos), \vt_\sag (\bR, \wos)$        & the utility of agent $\sag$                                                       \\
$B$                                                & the upper bound of the utility function                                                              \\
$\agf, \agu, \agc$                                 & the approximated fraction of three types of agents                               \\
$\stg, \stgp$                                      & strategy and strategy profile                                                    \\
$\bp_h, \bp_l$                                     & the probability of voting for $\bA$ when receiving signal $h$ ($l$)               \\
$\{\stgp_\ag\}_{\ag = 1}^{\infty}$                 & a sequence of strategy profiles                                                  \\
$\lp_{\wos}^{\bA}(\stgp), \lp_{\wos}^{\bR}(\stgp)$ & the probability that $\bA$ ($\bR$) becomes the winner under world state $\wos$   \\
$\acc(\stgp)$                                      & fidelity: probability that $\stgp$ leads to the informed majority decision \\
$\ut_\sag(\stgp)$                                  & the (ex-ante) expected utility of agent $\sag$                                   \\
$\hthd_H^\ag, \hthd_L^\ag, \hthd^\ag$              & the \exshare                                                      \\ \bottomrule
\end{tabular}
\caption{Frequently used notation in the binary setting\label{tab:notation}}
\end{table}}
\section{Proof of Proposition~\ref{prop:no_BNE}}
{\label{apx:no_BNE}
\begin{propnb}{prop:no_BNE}
For any $N_0 \in \mathbb{N}$, there exists an instance of $N > N_0$ agents, which does not exist a strong Bayes Nash equilibrium. 
\end{propnb}

\begin{proof}
For any $N_0$, we construct an instance of $N = 2N_0+3$ agents. The agents consist of three parts: $F$ is a set of $N_0+1$ friendly agents. $C$ is a set of two contingent agents. And $U$ is a set of $N_0$ unfriendly agents. The valuation function of each set is shown in the table. The threshold $\Thd =0.5$. The prior distribution $P_L = P_H = 0.5$. The signal distribution $P_{hH} = P_{lL}=0.8$, and $P_{lH} = P_{hL} = 0.2$.
    \begin{table}[htbp]
\centering
\begin{tabular}{@{}ccccc@{}}
\toprule
agents & $\vt(H, \bA)$ & $\vt(L, \bA)$ & $\vt(L, \bR)$ & $\vt(H, \bR)$ \\ \midrule
$F$    & 100           & 99            & 1             & 0            \\
$C$ & 90            & 0             & 100           & 0            \\ 
$U$ & 1 & 0 & 100 & 99\\
\bottomrule
\end{tabular}
\end{table}

Still, consider three strategy profiles: 
\begin{itemize}
    \item $\stgp_1$: In $F$, $N_0$ agents always vote for $\bA$, and one agent votes informatively. Two agents of $C$ vote informatively. $N_0$ agents of $U$ always vote for $\bR$. 
    \item $\stgp_2$: $N_0+1$ agents of $F$ always vote for $\bA$.  Two agents of $C$ vote informatively. $N_0$ agents of $U$ always vote for $\bR$. 
    \item $\stgp_3$: $N_0+1$ agents of $F$ always vote for $\bA$.  One $C$ agent votes informatively, and the other agent always votes for $\bR$. $N_0$ agents of $U$ always vote for $\bR$. 
\end{itemize}
The expected utility of each type of agents in three profiles is in the table as follows. 
    \begin{table}[htbp]
\begin{tabular}{@{}cccc@{}}
\toprule
agents & $\stgp_1$ & $\stgp_2$ & $\stgp_3$ \\ \midrule
$F$    & 50.396    & 66.14     & 50.3      \\
$C$ & 85.12     & 75.2      & 76        \\ 
$U$ & 50.396 & 34.46 & 50.3\\
\bottomrule
\end{tabular}
\end{table}
The three profiles form a deviating cycle of $\stgp_1 \to \stgp_2\to \stgp_3\to \stgp_1$. In $\stgp_1$, the $F$ agent who votes informatively has the incentive to always vote for $\bA$, and the profile becomes $\stgp_2$. In $\stgp_2$, a $C$ agent has the incentive to always vote for $\bR$, and the profile becomes $\stgp_3$. And in $\stgp_3$, the group of a $F$ agent and two $C$ agents have the incentives to turn to informative voting, and the profile becomes $\stgp_1$. 

Now we show that for any other strategy profile $\stgp$, there exists a group of agents with incentives to deviate. 

{\bf Case 1:} $\bA$ is always the winner. In this case, at least $N_0+2$ agents always vote for $\bA$, the expected utility of $U$ agents is 0.5, and the expected utility of $C$ agents is 45. If there exists a group of $U$ agents such that $\bA$ is not always the winner after they deviate to always votes for $\bR$, their expected utility will increase. Otherwise, there are still at least $N_0 + 2$ agents who always vote for $\bA$ even if all $U$ agents always vote for $\bR$. Therefore, at least one $C$ agent always votes for $\bA$. Now consider the deviating group $C \cup U$, where all agents in the group turn to always voting for $\bR$. Then $\bR$ will always be the winner. The expected utility of $U$ agents increases to 99.5, and the expected utility of $C$ agents increases to 50. Therefore, in this case, $\stgp$ is not a strong BNE. 

{\bf Case 2:} $\bR$ is always the winner. In this case, at least $N_0+2$ agents always vote for $\bR$, the expected utility of $F$ agents is 0.5, and the expected utility of $C$ agents is 50. Similarly, if all agents in $F$ voting for $\bA$ can reverse the decision, their expected utility will increases. Otherwise, all agents in $U$ and $C$ always vote for $\bR$. Then, the group $F\cup C$ has incentives to deviate, where $F$ agents always vote for $\bA$, and $C$ agents vote informatively. The profile after the deviation is $\stgp_2$, in which both $F$ and $C$ agents have high expected utilities. Therefore, in this case, $\stgp$ is not a strong BNE. 

{\bf Case 3:} Neither of the alternatives is always the winner. In this case, any $F$ agent not always voting for $\bA$ and $U$ agent not always voting for $\bR$ has incentives to deviate to get a higher probability that their preferred alternative to be selected. Now suppose all $F$ agents always vote for $\bA$, and $U$ agents always vote for $\bR$. In this case, we show that the expected utility of $C$ agents is uniquely maximized by one agent voting informatively and the other agent always voting for $\bR$, just as $\stgp_3$. Therefore, for any $\stgp\neq \stgp_3$, $C$ agents have incentives to deviate to $\stgp_3$. And $\stgp_3$ itself is dominated by $\stgp_1$. 

Now without loss of generality, we index the two agents in $C$ as agent 1 and agent 2. Suppose the strategy of agent 1 is $\stg_1 = (\bp_l^1, \bp_h^1)$, and the strategy of agent 2 is $\stg_2 = (\bp_l^2, \bp_h^2)$. Since there are $N_0+1$ votes for $\bA$ and $N_0$ votes for $\bR$, $\bR$ is the winner only when both two agents vote for $\bR$. Let $X_1$ and $X_2$ be the random variable that denotes the vote of agent 1 and agent 2 respectively. $X_i =1 $ stands for ``agent $i$ votes for $\bA$'', and $X_i = 0$ stands for ``agent $i$ votes for $\bR$''. 
Therefore, the probability of $\bA$ being winner under $H$ state is 
\begin{align*}
    \lp_H^{\bA}(\stgp) =& 1 - \Pr[X_1 =0\wedge X_2 = 0\mid \Wosrv = H]\\
    = &1 - \Pr[X_1 =0\mid \Wosrv = H]\cdot \Pr[X_2 =0\mid \Wosrv = H]\\
    = & 1 - (0.8\cdot (1-\bp_h^1) + 0.2 \cdot (1-\bp_l^1))(0.8\cdot (1-\bp_h^2) + 0.2 \cdot (1-\bp_l^2)). 
\end{align*}
Similarly, 
\begin{align*}
    \lp_L^{\bR}(\stgp) = & \Pr[X_1 =0\wedge X_2 = 0\mid \Wosrv = L]\\
    =&(0.8\cdot (1-\bp_l^1) + 0.2 \cdot (1-\bp_h^1))(0.8\cdot (1-\bp_l^2) + 0.2 \cdot (1-\bp_h^2)). 
\end{align*}
And 
\begin{equation*}
    \ut_1(\stgp) = \ut_2(\stgp) = 0.5 \cdot (90 \cdot\lp_H^{\bA}(\stgp) + 100 \cdot  \lp_L^{\bR}(\stgp)). 
\end{equation*}
Let $x_1 = 1 - \bp_h^1, y_1 = 1-\bp_l^1, x_2 = 1 - \bp_h^2$, and $ y_2 = 1-\bp_l^2$, and expand the expected utility, we get 
\begin{equation*}
    \ut_1(\stgp)  = -26.8x_1x_2 + 0.8x_1y_2 + 0.8x_2y_1 + 30.2 y_1y_2 +45.  
\end{equation*}
We consider the variables to maximize the expected utility. 
Note that given $x_1$ and $y_1$, if at least one of $x_1$ and $y_1$ does not equal to zero, $y_2 =1$ is a necessary condition for $\ut_2(\stgp)$ to be maximized, and vice versa. In the special case where four variables are all 0, $\stgp$ stands for both $C$ agents always vote for $\bA$. In this case, $\bA$ is always the winner, and $C$ agents have incentives to deviate to always vote for $\bR$ as shown in Case 1. Now, given at least one of four values is non-zero, we must have $y_1 = y_2 = 1$ to maximize the expected utility. Then the expected utility turns to
\begin{equation*}
    \ut_1(\stgp) = -26.8x_1x_2 + 0.8x_1 + 0.8x_2 + 75.2.  
\end{equation*}
Then it's not hard to see that $\ut_1(\stgp)$ is uniquely maximized when one of $x_1$ and $x_2$ takes 0 and the other takes 1. Then $\stgp$ is exactly $\stgp_3$. Therefore, the expected utility of $C$ agents is maximized on $\stgp_3$. 

Consequently, we show that in every case a strategy profile is not a strong BNE. Therefore, a strong BNE does no exist in the instance. 
\end{proof}}
\section{Non-binary setting}
{\subsection{Additional Setting}
\label{apx:setting}
In this section, we propose notions in non-binary settings that are used in technical proofs and not mentioned in the main paper. 

\paragraph{Fraction of Agents}
Given a specific $\ag$, we use $\tf, \tc,$ and $\tu$ to denote the number of each type of agents, and use $\hagf = \frac{\tf}{\ag}, \hagc = \frac{\tc}{\ag}$ and $\hagu = \frac{\tu}{\ag}$ to denote the fraction of each type of agents. Note that $\hagf, \hagc$, and $\hagu$ is dependent to $\ag$.

In the non-binary setting, we have $\hagf < \Thd$ and $\hagu <  \Thd$ for all sufficiently large $\ag$, which means that neither friendly nor unfriendly agents can dominate the vote. Note that this is guaranteed by the assumption that both $\Lowset$ and $\Highset$ are non-empty. 

\begin{prop}
\label{prop:middle-case}  There exists a constant $\ag_\mu> 0$, s.t. for all $\ag> \ag_\mu$, $\hagf < \Thd$ and $\hagu < \Thd$. 
\end{prop}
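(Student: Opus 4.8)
The plan is to pin down $\tf$ and $\tu$ exactly in terms of the states at the boundary between $\Lowset$ and $\Highset$. The key structural fact I would establish first is that the set of friendly agents is exactly $\agset(\Low,\bA)$ and the set of unfriendly agents is exactly $\agset(\High,\bR)$, where $\Low=\max\Lowset$ and $\High=\min\Highset=\Low+1$. Granting this, $\tf=|\agset(\Low,\bA)|$ and $\tu=|\agset(\High,\bR)|$, and the bounds on $\hagf,\hagu$ will fall out because the approximated fraction of agents preferring $\bA$ at $\Low$ is strictly below $\Thd$ (since $\Low\in\Lowset$) and the approximated fraction preferring $\bR$ at $\High$ is strictly below $1-\Thd$ (since $\High\in\Highset$).

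To prove the two set identities I would first record the monotonicity consequences. Since $\vt_{\sag}(\wos,\bA)$ is strictly increasing and $\vt_{\sag}(\wos,\bR)$ strictly decreasing in $\wos$, the difference $\vt_{\sag}(\wos,\bA)-\vt_{\sag}(\wos,\bR)$ is strictly increasing in $\wos$, so $\Highset_{\sag}=\{\High_{\sag},\dots,\Wos\}$ is an up-set and $\Lowset_{\sag}=\{1,\dots,\Low_{\sag}\}$ a down-set, with $\High_{\sag}=\Low_{\sag}+1$; likewise $\Highset=\{\High,\dots,\Wos\}$, $\Lowset=\{1,\dots,\Low\}$, $\High=\Low+1$, exactly as the text already notes. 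Now if $\sag$ is friendly there is $\wos\in\Lowset\cap\Highset_{\sag}$, and then $\wos\le\Low$ together with $\wos\in\Highset_{\sag}$ (an up-set) forces $\Low\in\Highset_{\sag}$, i.e.\ $\sag\in\agset(\Low,\bA)$; conversely, if $\sag\in\agset(\Low,\bA)$ then $\Low\in\Highset_{\sag}$ and $\Low\in\Lowset$, so $\Low\in\Lowset\cap\Highset_{\sag}\neq\emptyset$ and $\sag$ is friendly. The identical argument with $\bA\leftrightarrow\bR$, $\Lowset\leftrightarrow\Highset$, $\Low\leftrightarrow\High$ shows that the unfriendly agents are exactly $\agset(\High,\bR)$.

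It then remains to turn this into the stated inequality. Using the model's conventions $|\agset(\wos,\bR)|=\lfloor\agr_{\wos}\cdot\ag\rfloor$ and $|\agset(\wos,\bA)|=\ag-|\agset(\wos,\bR)|$, together with $\aga_{\wos}+\agr_{\wos}=1$ (each agent strictly prefers one alternative in each state, by the assumption $\vt_{\sag}(\wos,\bA)\ne\vt_{\sag}(\wos,\bR)$), one gets $\hagf=1-\lfloor\agr_{\Low}\cdot\ag\rfloor/\ag<(1-\agr_{\Low})+1/\ag=\aga_{\Low}+1/\ag$ and $\hagu=\lfloor\agr_{\High}\cdot\ag\rfloor/\ag\le\agr_{\High}=1-\aga_{\High}$. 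Since $\Low\in\Lowset$ and $\High\in\Highset$ we have $\aga_{\Low}<\Thd<\aga_{\High}$, two strict inequalities between constants independent of $\ag$; setting $\ag_\mu:=\lceil 1/(\Thd-\aga_{\Low})\rceil$ yields $\hagf<\aga_{\Low}+1/\ag<\Thd$ for all $\ag>\ag_\mu$, while $\hagu\le 1-\aga_{\High}<1-\Thd$ for all $\ag$, which gives $\hagu<\Thd$ for a majority threshold $\Thd\ge\frac12$ (and in general the inequality $\hagu<1-\Thd$ is precisely the assertion that the unfriendly agents on their own cannot force $\bR$ to win). \textbf{I do not expect a real obstacle here}; the only point needing a little care is passing from the strict inequalities among the $\ag$-independent constants $\aga_{\Low},\aga_{\High}$ and $\Thd$ to strict inequalities for the actual integer counts, which is exactly why the conclusion is stated only beyond a constant $\ag_\mu$ — alternatively, invoking the standing assumption that rounding never flips the informed majority decision gives $|\agset(\Low,\bA)|<\Thd\ag$ and $|\agset(\High,\bR)|<(1-\Thd)\ag$ at every $\ag$ directly.
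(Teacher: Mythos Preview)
Your proof is correct and follows essentially the same route as the paper's: both identify the friendly agents with $\agset(\Low,\bA)$ and the unfriendly agents with $\agset(\High,\bR)$, then apply the floor-function estimate together with $\aga_\Low<\Thd<\aga_\High$. You also correctly observe that the natural conclusion on the unfriendly side is $\hagu<1-\Thd$ rather than the proposition's stated $\hagu<\Thd$; the paper's own proof in fact establishes exactly $\hagu<1-\Thd$.
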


\begin{proof}
First, we show that $\hagf < \Thd$.  For a candidate-friendly agent, we have $\Lowset_{\sag} \subsetneq \Lowset$. Therefore, $L_{\sag} <L$, and $\sag$ prefer $\bA$ if the world state is $L$. On the other hand, an agent vote for $\bA$ in $L$ is a friendly agent according to the definition. Therefore, the set of friendly agents is exactly the set of agents who will vote $\bA$ in world state $L$. Therefore we have
\begin{align*}
    \tf = & |\agset(L, \bA)|\\
    =& \ag - \lfloor \agr_L \cdot T\rfloor\\
    \le & (1 - \agr_L )\cdot \ag +1\\
    = & \aga_L \ag +1 \\
    =& \ag(\aga_L +\frac1T). 
\end{align*}
According to the definition of $L$, we know that $\aga_L <\Thd$ and $\aga_L$ is independent from $\ag$. Therefore, there must exist a $\ag_\mu > 0$ s.t. for all  $\ag > \ag_\mu$, $\aga_L + \frac{1}{\ag} < \Thd$, and therefore $\hagf < \Thd$. 

Then we show that $\hagu < 1 - \Thd$. Similarly, the set of unfriendly agents is exactly the set of agents who will vote $\bR$ in world state $H$. Therefore, 
\begin{align*}
    \tu = & |\agset(H, \bR)|\\
    =&  \lfloor \agr_H \cdot \ag\rfloor\\
    \le & \agr_H \cdot \ag\\
    < & (1-\Thd)\cdot \ag. 
\end{align*}
The last inequality is due to the definition of $H$. Therefore, we have $\hagu < 1-\Thd$. 
\end{proof}
For the non-binary results we all assume that $\ag >\ag_\mu$, therefore $\hagf < \Thd$ and $\hagu < \Thd$. 

Then we define the approximate fraction of each type of agents $\agf, \agc,$ and $\agu$ that are independent of $\ag$. As we have shown in Proposition~\ref{prop:middle-case}, the set of friendly agents is exactly the set of agents who will vote $\bA$ in $\Low$, and the set of unfriendly agents is exactly the set of agents who will vote $\bR$ in $\High$. Therefore, we define $\agf, \agc,$ and $\agu$ as follows:
\begin{enumerate}
    \item $\agf = \aga_\Low$.
    \item $\agu = \agr_\High$.
    \item $\agc = 1 - \agf - \agu$. 
\end{enumerate}
According to the definition we have $\agf < \Thd$ and $\agu < 1- \Thd$. 
And it's not hard to verify that $\tf = \ag - \lfloor (1-\agf) \cdot \ag\rfloor$, $\tu = \lfloor \agu \cdot \ag\rfloor$, and $\tc = \lfloor (1-\agf) \cdot \ag\rfloor - \lfloor \agu \cdot \ag\rfloor$.

\paragraph{Error Rate} Error rate is complimentary of the fidelity. Given a strategy profile $\stgp$, let $\lp_{\wos}^{\bA}(\stgp)$ ($\lp_{\wos}^{\bR}(\stgp)$, respectively) be the probability that $\bA$ ($\bR$, respectively) becomes the winner when world state is $\wos$. We can define the error rate $\err$ of the mechanism:
\begin{align}
    \err(\stgp) = & \sum_{\wos\in \Lowset}P_{\wos}\cdot \lp_{\wos}^{\bA}(\stgp) + \sum_{\wos\in \Highset}P_{\wos} \cdot \lp_{\wos}^{\bR}(\stgp).
\end{align}
 Note that $\acc(\stgp) + \err(\stgp) = 1$. 

 \subsection{Berry-Esseen Theorem}
\label{apx:berry}
In this section, we recall the technical theorem which will be used in the proof of our result. Berry-Esseen Theorem~\citep{Berry41Accuracy, Esseen42Liapunoff} bounds the difference  between the distribution of the sum of independent variables and the normal distribution.
\begin{dfn}[Berry-Esseen Theorem]
Let $X_1, X_2, \cdots, X_n$ be $n$ independent variables with $E[X_i] = 0$, $Var(X_i) = \sigma_i^2 > 0$, and $E[|X_i|^3] = \rho_i < \infty$. Let
$$S_n = \frac{X_1+X_2+\cdots+ X_n}{\sqrt{\sigma_1^2+\sigma_2^2+\cdots+\sigma_n^2}}.$$ Denote $F_n$ to be the CDF of $S_n$, and $\Phi$ to be the CDF of the standard normal distribution. Then there exists an absolute constant $C_0$ s.t. for all $n$, 
\begin{equation*}
    \sup_{s\in \mathbb{R}} |F_n(x) - \Phi(x)| \le C_0 \cdot \frac{\rho_1 + \rho_2 + \cdots + \rho_n}{(\sigma_1^2+\sigma_2^2+\cdots+\sigma_n^2)^{3/2}}.
\end{equation*}
\end{dfn}

The upper bound of $C_0$ is estimated to be 0.5600 by Reference \citep{Shevtsova10Improvement}.}
\section{Non-binary results}
{
\label{apx:nb_results}
In this section, we present our theoretical results in the non-binary setting.  All the theorems are natural extensions from the binary setting to the non-binary setting, and the proofs also hold for the binary setting. We'll give remarks to show how the convert the proofs into the binary setting. 




\subsection{Lemma~\ref{lem:nb_nash} (Lemma~\ref{lem:nash} for binary setting)}
\label{apx:nb_nash}
First, we show that a strategy profile with high fidelity will lead to an $\varepsilon$-strong BNE with a sufficiently small $\varepsilon$.

\begin{lem}
\label{lem:nb_nash}
Let $e(\ag)$ be a function of $\ag$. For any $\ag > \ag_\mu$ and any regular strategy profile $\stgp^*$ with $\ag$ agents, if $\stgp^*$ satisfies $\acc(\stgp^*) \ge 1 - e(\ag)$ (equivalently, $\err(\stgp^*) \le e(\ag)$), then $\stgp^*$ is an $\varepsilon$-strong Bayes Nash where $\varepsilon = \Wos B((\Wos-1)B+1)\cdot e(\ag)$. 
\end{lem}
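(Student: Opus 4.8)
The plan is to follow the proof of Lemma~\ref{lem:nash} (itself an extension of Theorem~3.3 of \citet{schoenebeck21wisdom}), upgrading its two-world-state bookkeeping to a sum over the $\Wos$ states, and to argue by contradiction. Suppose there are a coalition $D$ and a profile $\stgp'$ that agrees with $\stgp^*$ outside $D$, with $\ut_\sag(\stgp')\ge\ut_\sag(\stgp^*)$ for every $\sag\in D$ and some $\sag^\star\in D$ satisfying $\ut_{\sag^\star}(\stgp')>\ut_{\sag^\star}(\stgp^*)+\varepsilon$. Setting $\Delta_\wos:=\lp^{\bA}_\wos(\stgp')-\lp^{\bA}_\wos(\stgp^*)=-\bigl(\lp^{\bR}_\wos(\stgp')-\lp^{\bR}_\wos(\stgp^*)\bigr)$, the expected-utility formula gives, for every agent $\sag$,
\[
  \ut_\sag(\stgp')-\ut_\sag(\stgp^*)=\sum_{\wos=1}^{\Wos}P_\wos\,\Delta_\wos\bigl(\vt_\sag(\wos,\bA)-\vt_\sag(\wos,\bR)\bigr),
\]
so everything reduces to controlling the masses $P_\wos\Delta_\wos$. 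Two cheap facts do most of the work: for $\wos\in\Lowset$ we have $P_\wos\lp^{\bA}_\wos(\stgp)\le\err(\stgp)$ and for $\wos\in\Highset$ we have $P_\wos\lp^{\bR}_\wos(\stgp)\le\err(\stgp)$, so $P_\wos|\Delta_\wos|\le\err(\stgp')+\err(\stgp^*)\le\err(\stgp')+e(\ag)$; and, summing, any increase of the overall error rate from $\err(\stgp^*)$ to $\err(\stgp')$ must be ``witnessed'' by the $\Delta_\wos$ pushing the decision away from the informed majority on at least that much probability mass.

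I would then split on how much $\stgp'$ degrades fidelity. In the regime where $\err(\stgp')$ is at most a fixed multiple (depending on $\Wos$ and $B$) of $e(\ag)$, the displayed identity together with $|\vt_\sag(\wos,\bA)-\vt_\sag(\wos,\bR)|\le B$ and the per-state bounds yields $\ut_\sag(\stgp')-\ut_\sag(\stgp^*)\le\varepsilon$ for \emph{every} $\sag$ — this is exactly where the constant $\Wos B((\Wos-1)B+1)$ is calibrated so that the two regimes' thresholds match — contradicting the hypothesis on $\sag^\star$. In the complementary regime, where $\err(\stgp')$ is large, I would argue in three steps. First, no contingent agent can lie in $D$: a contingent agent prefers $\bA$ exactly on $\Highset$ and $\bR$ exactly on $\Lowset$, so combining the displayed identity with the ``witness'' fact gives $\ut_\sag(\stgp')-\ut_\sag(\stgp^*)\le O(e(\ag))-\bigl(\err(\stgp')-e(\ag)\bigr)<0$, a strict loss. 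Second, $D$ cannot contain both a friendly and an unfriendly agent: because a friendly agent prefers $\bR$ only on states in $\Lowset$ and an unfriendly agent prefers $\bA$ only on states in $\Highset$, their preferences are oriented oppositely across the world-state spectrum, and using the near-optimality of $\stgp^*$ one shows no single $\stgp'$ can strictly benefit one kind without strictly harming the other. Third, a purely friendly (resp.\ purely unfriendly) $D$ cannot help: since $\stgp^*$ is regular all friendly agents vote $\bA$ deterministically, so any deviation by an all-friendly $D$ only weakly decreases every $\lp^{\bA}_\wos$ (that is, $\Delta_\wos\le0$ for all $\wos$), and the only nonnegative terms in a friendly agent's utility change then come from states $\wos\le\Low_\sag$, which lie in $\Lowset$, where $P_\wos|\Delta_\wos|\le P_\wos\lp^{\bA}_\wos(\stgp^*)$; hence the total gain is at most $B\cdot\err(\stgp^*)\le Be(\ag)\le\varepsilon$, again contradicting the hypothesis on $\sag^\star$ (the unfriendly case is symmetric). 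Step three also invokes $\ag>\ag_\mu$ via Proposition~\ref{prop:middle-case}, so that friendly and unfriendly agents form a strict minority and the regular profile is the natural object.

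The main obstacle, compared with the binary Lemma~\ref{lem:nash}, is the second and third steps of the large-degradation regime together with pinning down the constant. In the binary model friendly and unfriendly agents each have a state-independent favorite alternative, so ``always $\bA$'' and ``always $\bR$'' are honestly dominant strategies and these steps are almost immediate; in the non-binary model each such agent only has a personal threshold $\Low_\sag$ distinct from $\Low$, and one must carefully localize that agent's region of disagreement with the informed majority inside $\Lowset$ or $\Highset$ and exploit that $\acc(\stgp^*)\ge 1-e(\ag)$ forces the relevant per-state winning probabilities near $0$ or near $1$. Tracking these $\Wos$ per-state contributions — and choosing the regime threshold so that the residual slack is at most $\Wos B((\Wos-1)B+1)e(\ag)$ — is the bookkeeping that needs care.
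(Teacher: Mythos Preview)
Your proposal is correct and follows essentially the same route as the paper: the same case split on whether $\err(\stgp')$ is below or above the threshold $((\Wos-1)B+1)\cdot e(\ag)$, and in the large-error regime the same three-step elimination (contingent agents strictly lose; a coalition cannot mix friendly and unfriendly agents; a single-type predetermined coalition cannot gain since regularity means they already play their favorite alternative). Your identification of the main bookkeeping obstacle---handling friendly/unfriendly agents whose personal threshold $\Low_\sag$ differs from $\Low$---is exactly where the paper spends its effort, and it resolves it via the monotonicity of $\vt_\sag(\wos,\bA)-\vt_\sag(\wos,\bR)$ in $\wos$ to replace per-state coefficients by the value at $\Low$ while tracking the residual error terms.
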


Lemma~\ref{lem:nb_nash} is a natural extension of Theorem B.4 in \citet{schoenebeck21wisdom}. 

\begin{proof}
Since we have $\hagf < \Thd$ and $\hagu< 1 - \Thd$, we do not need to consider the friendly/unfriendly agent dominating case. Consider a deviating strategy profile $\stgp'$, there are two possible cases:
\begin{enumerate}
    \item  $I(\stgp') <((\Wos-1)B+1)\cdot e(\ag)$
    \item $I(\stgp') \ge((\Wos-1)B+1)\cdot e(\ag)$
\end{enumerate}

\noindent{\bf Case 1.} In the first case, the error rate of $\stgp'$ is small. Since both strategy profile has high fidelity, we show that agents cannot gain large utility from deviating. 
\begin{claim}
If $I(\stgp') <((\Wos-1)B+1)\cdot e(\ag)$, then $\ut_{\sag}(\stgp') - \ut_{\sag}(\stgp^*) \le \varepsilon$ for all agents. 
\end{claim}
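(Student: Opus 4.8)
The plan is to compare the expected utility of each agent, under either profile, with a single profile-independent benchmark: the utility the agent would obtain if the informed majority decision were reached in every world state. For agent $\sag$, set
\[
    \ut_\sag^\ast \;=\; \sum_{\wos\in\Lowset} P_\wos\cdot\vt_\sag(\wos,\bR) \;+\; \sum_{\wos\in\Highset} P_\wos\cdot\vt_\sag(\wos,\bA).
\]
I would then establish the uniform estimate $\bigl|\ut_\sag(\stgp)-\ut_\sag^\ast\bigr|\le B\cdot\err(\stgp)$, valid for every profile $\stgp$ and every agent $\sag$, and apply it to both $\stgp^\ast$ and $\stgp'$, closing the argument with the triangle inequality and the two error-rate bounds in hand.

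For the uniform estimate I would work one world state at a time. Recall $\ut_\sag(\stgp)=\sum_{\wos=1}^{\Wos}P_\wos\bigl(\lp_\wos^\bA(\stgp)\vt_\sag(\wos,\bA)+\lp_\wos^\bR(\stgp)\vt_\sag(\wos,\bR)\bigr)$. For $\wos\in\Highset$, where $\bA$ is the informed majority decision, use $\lp_\wos^\bA(\stgp)+\lp_\wos^\bR(\stgp)=1$ to rewrite the bracketed term as $\vt_\sag(\wos,\bA)+\lp_\wos^\bR(\stgp)\bigl(\vt_\sag(\wos,\bR)-\vt_\sag(\wos,\bA)\bigr)$, whose distance from the benchmark value $\vt_\sag(\wos,\bA)$ is at most $B\cdot\lp_\wos^\bR(\stgp)$ since $\vt_\sag$ takes values in $\{0,\dots,B\}$. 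Symmetrically, for $\wos\in\Lowset$ the distance of the bracket from $\vt_\sag(\wos,\bR)$ is at most $B\cdot\lp_\wos^\bA(\stgp)$. Weighting by the priors $P_\wos$, summing over $\wos$, and recognising $\err(\stgp)=\sum_{\wos\in\Lowset}P_\wos\lp_\wos^\bA(\stgp)+\sum_{\wos\in\Highset}P_\wos\lp_\wos^\bR(\stgp)$ yields $\bigl|\ut_\sag(\stgp)-\ut_\sag^\ast\bigr|\le B\cdot\err(\stgp)$.

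Finally, by the triangle inequality $\ut_\sag(\stgp')-\ut_\sag(\stgp^\ast)\le\bigl|\ut_\sag(\stgp')-\ut_\sag^\ast\bigr|+\bigl|\ut_\sag^\ast-\ut_\sag(\stgp^\ast)\bigr|\le B\cdot\err(\stgp')+B\cdot\err(\stgp^\ast)$. The lemma's hypothesis gives $\err(\stgp^\ast)\le e(\ag)$ and the Case~1 hypothesis gives $\err(\stgp')=I(\stgp')<((\Wos-1)B+1)\,e(\ag)$, so $\ut_\sag(\stgp')-\ut_\sag(\stgp^\ast)<B\bigl((\Wos-1)B+2\bigr)e(\ag)\le\Wos B\bigl((\Wos-1)B+1\bigr)e(\ag)=\varepsilon$, the last inequality being a routine check valid for $\Wos\ge2$ (it reduces to $2-\Wos\le(\Wos-1)^2B$). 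Since $\sag$ was arbitrary, the claim holds for all agents. There is essentially no genuine obstacle in this case; the only thing to watch is lining up the (deliberately generous) constants, and the one subtlety worth flagging is that, unlike Case~2, this argument uses neither the regularity of $\stgp^\ast$ nor the assumption that no friendly/unfriendly coalition dominates the vote — those hypotheses enter only when $\stgp'$ has a large error rate.
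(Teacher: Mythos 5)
Your argument is correct, and it reaches the claim by a mildly different decomposition than the paper's. The paper compares the two profiles state by state: since both $\err(\stgp^*)\le e(\ag)$ and $\err(\stgp')<((\Wos-1)B+1)e(\ag)$, it bounds $|\lp_{\wos}^{\bA}(\stgp')-\lp_{\wos}^{\bA}(\stgp^*)|\le\max\bigl(\lp_{\wos}^{\bA}(\stgp'),\lp_{\wos}^{\bA}(\stgp^*)\bigr)\le((\Wos-1)B+1)e(\ag)/P_{\wos}$ for $\wos\in\Lowset$ (and the mirror bound for $\wos\in\Highset$), then plugs these into the utility difference and sums over the $\Wos$ states, which lands exactly on $\varepsilon=\Wos B((\Wos-1)B+1)e(\ag)$. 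You instead introduce the profile-independent benchmark $\ut_\sag^\ast$ (utility under the informed majority decision in every state), prove the uniform estimate $|\ut_\sag(\stgp)-\ut_\sag^\ast|\le B\cdot\err(\stgp)$, and close with the triangle inequality; your estimate and its verification are correct, and the final constant $B\bigl((\Wos-1)B+2\bigr)e(\ag)$ is in fact slightly tighter than $\varepsilon$, with your check $2-\Wos\le(\Wos-1)^2B$ valid since $\Wos\ge2$ (both $\Lowset$ and $\Highset$ are assumed non-empty). Your route is marginally cleaner and makes explicit that the bound holds for any pair of low-error profiles, while the paper's per-state comparison produces the constants in precisely the form used in the statement of the lemma; your closing observation that neither regularity nor the non-domination assumption is needed in this case also matches the paper, where those hypotheses are only invoked in the large-error case.
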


We know that $\err(\stgp) = \sum_{\wos\in \Lowset}P_{\wos}\cdot \lp_{\wos}^{\bA}(\stgp) + \sum_{\wos\in \Highset}P_{\wos} \cdot \lp_{\wos}^{\bR}(\stgp).$ Therefore, we have $0\le \lp_{\wos}^{\bA}(\stgp')\le \frac{((\Wos-1)B+1)\cdot e(\ag)}{P_{\wos}}$ for all $\wos\in \Lowset$ and $0\le \lp_{\wos}^{\bR}(\stgp')\le \frac{((\Wos-1)B+1)\cdot e(\ag)}{P_{\wos}}$ for all $\wos \in \Highset$. At the same time, since $I(\stgp^*)\le e(\ag)$, we have $0\le \lp_{\wos}^{\bA}(\stgp^*)\le \frac{e(\ag)}{P_{\wos}}$ for all $\wos\in \Lowset$ and $0\le \lp_{\wos}^{\bR}(\stgp^*)\le \frac{e(\ag)}{P_{\wos}}$ for all $\wos \in \Highset$. Therefore, for every $\wos\in \Lowset$, we have 
\begin{align*}
   | \lp_{\wos}^{\bR}(\stgp') - \lp_{\wos}^{\bR}(\stgp^*)| = | \lp_{\wos}^{\bA}(\stgp') - \lp_{\wos}^{\bA}(\stgp^*)| \le &  \max(\lp_{\wos}^{\bA}(\stgp'), \lp_{\wos}^{\bA}(\stgp^*) )\le \frac{((\Wos-1)B+1)\cdot e(\ag)}{P_{\wos}}. 
\end{align*}
And for every $\wos\in \Highset$, we have 
\begin{align*}
   | \lp_{\wos}^{\bA}(\stgp') - \lp_{\wos}^{\bA}(\stgp^*)| = | \lp_{\wos}^{\bR}(\stgp') - \lp_{\wos}^{\bR}(\stgp^*)| \le &  \max(\lp_{\wos}^{\bR}(\stgp'), \lp_{\wos}^{\bR}(\stgp^*) )\le \frac{((\Wos-1)B+1)\cdot e(\ag)}{P_{\wos}}. 
\end{align*}
Therefore, for any $\wos\in\Wosset$, $| \lp_{\wos}^{\bA}(\stgp') - \lp_{\wos}^{\bA}(\stgp^*)|\le \frac{((\Wos-1)B+1)\cdot e(\ag)}{P_{\wos}}. $
Now for an arbitrary agent $\sag$, we consider her gain in deviation:
\begin{align*}
    &\ut_{\sag}(\stgp') - \ut_{\sag}(\stgp')\\
    = & \sum_{\wos = 1}^{\Wos} P_{\wos}((\lp_{\wos}^{\bA}(\stgp') - \lp_{\wos}^{\bA}(\stgp^*))\cdot\vt_{\sag}(\wos, \bA) + (\lp_{\wos}^{\bR}(\stgp') - \lp_{\wos}^{\bR}(\stgp^*))\cdot \vt_{\sag}(\wos, \bR))\\
    = & \sum_{\wos\in \Lowset_{\sag}} P_{\wos}(\vt_{\sag}(\wos, \bR) - \vt_{\sag}(\wos, \bA))(\lp_{\wos}^{\bA}(\stgp^*) - \lp_{\wos}^{\bA}(\stgp')) \\
    & + \sum_{\wos\in \Highset_{\sag}} P_{\wos}(\vt_{\sag}(\wos, \bA) - \vt_{\sag}(\wos, \bR))(\lp_{\wos}^{\bA}(\stgp') - \lp_{\wos}^{\bA}(\stgp^*))
\end{align*}
\begin{align*}
    \le &\sum_{\wos\in \Lowset_{\sag}} P_{\wos}(\vt_{\sag}(\wos, \bR) - \vt_{\sag}(\wos, \bA))\frac{((\Wos-1)B+1)\cdot e(\ag)}{P_{\wos}}\\
    & + \sum_{\wos\in \Highset_{\sag}} P_{\wos}(\vt_{\sag}(\wos, \bA) - \vt_{\sag}(\wos, \bR))\frac{((\Wos-1)B+1)\cdot e(\ag)}{P_{\wos}}\\
    \le & \sum_{\wos\in \Lowset_{\sag}} P_{\wos}\cdot B\cdot \frac{((\Wos-1)B+1)\cdot e(\ag)}{P_{\wos}} + \sum_{\wos\in \Highset_{\sag}} P_{\wos}\cdot B\cdot \frac{((\Wos-1)B+1)\cdot e(\ag)}{P_{\wos}}\\
    = & \Wos B((\Wos-1)B+1)\cdot e(\ag)\\
    = & \varepsilon.
\end{align*}

\noindent{\bf Case 2.} 
In the second case, we show that $\stgp'$ cannot be a successful deviating strategy in the following steps:
\begin{enumerate}
    \item Contingent agents get strictly less utility in $\stgp'$ than in $\stgp^*$ and thus have no incentive to deviate.
    \item For a friendly agent $t_1$ and an unfriendly agent $t_2$, if one of them gain more than $\Delta  = B\cdot e(\ag)$ from deviation, the other will get strictly less utility. Therefore, $D$ contains either only friendly agents or only unfriendly agents.
    \item $D$ with only one type of pre-determined agents cannot gain utility more than $\varepsilon$. 
\end{enumerate}
\begin{claim}
If  $I(\stgp') \ge((\Wos-1)B+1)\cdot e(\ag)$, then $\ut_{\sag}(\stgp') - \ut_{\sag}(\stgp^*) < 0$ for contingent agents. 
\end{claim}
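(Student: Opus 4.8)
The plan is to use the defining property of a contingent agent—that $\Lowset_\sag=\Lowset$ and $\Highset_\sag=\Highset$, i.e.\ her preferred alternative coincides with the informed majority decision in \emph{every} world state—to show that her expected utility is an affine, strictly decreasing function of a ``weighted error rate'' that is pinned between $\err(\cdot)$ and $B\cdot\err(\cdot)$.

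Concretely, I would first fix a contingent agent $\sag$ and set $\Delta_\wos:=|\vt_\sag(\wos,\bA)-\vt_\sag(\wos,\bR)|$ for each $\wos\in\Wosset$; the assumption $\vt_\sag(\wos,\bA)\neq\vt_\sag(\wos,\bR)$ together with integrality of the utilities gives $1\le\Delta_\wos\le B$. Using $\lp_\wos^{\bA}(\stgp)+\lp_\wos^{\bR}(\stgp)=1$ and substituting $\lp_\wos^{\bA}=1-\lp_\wos^{\bR}$ in the states $\wos\in\Highset$ (where $\vt_\sag(\wos,\bA)>\vt_\sag(\wos,\bR)$) and $\lp_\wos^{\bR}=1-\lp_\wos^{\bA}$ in the states $\wos\in\Lowset$ (where $\vt_\sag(\wos,\bR)>\vt_\sag(\wos,\bA)$), the expected-utility formula rewrites, for every strategy profile $\stgp$, as $\ut_\sag(\stgp)=C_\sag-E_\sag(\stgp)$, where $C_\sag:=\sum_{\wos\in\Highset}P_\wos\vt_\sag(\wos,\bA)+\sum_{\wos\in\Lowset}P_\wos\vt_\sag(\wos,\bR)$ is a constant (the maximal attainable utility) and $E_\sag(\stgp):=\sum_{\wos\in\Highset}P_\wos\lp_\wos^{\bR}(\stgp)\Delta_\wos+\sum_{\wos\in\Lowset}P_\wos\lp_\wos^{\bA}(\stgp)\Delta_\wos$. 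Comparing $E_\sag(\stgp)$ term-by-term with $\err(\stgp)=\sum_{\wos\in\Highset}P_\wos\lp_\wos^{\bR}(\stgp)+\sum_{\wos\in\Lowset}P_\wos\lp_\wos^{\bA}(\stgp)$ and using $1\le\Delta_\wos\le B$ yields $\err(\stgp)\le E_\sag(\stgp)\le B\cdot\err(\stgp)$.

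Then the conclusion is immediate: $\ut_\sag(\stgp')-\ut_\sag(\stgp^*)=E_\sag(\stgp^*)-E_\sag(\stgp')\le B\cdot\err(\stgp^*)-\err(\stgp')=B\cdot I(\stgp^*)-I(\stgp')$. Plugging in $I(\stgp^*)=\err(\stgp^*)\le e(\ag)$ and the Case~2 hypothesis $I(\stgp')\ge((\Wos-1)B+1)\cdot e(\ag)$ gives $\ut_\sag(\stgp')-\ut_\sag(\stgp^*)\le\bigl(B-(\Wos-1)B-1\bigr)e(\ag)=-\bigl((\Wos-2)B+1\bigr)e(\ag)<0$ since $\Wos\ge 2$ and $e(\ag)>0$ (the degenerate case $e(\ag)=0$ forces $\err(\stgp^*)=0$, so $\stgp^*$ attains $C_\sag$ while $E_\sag(\stgp')\ge I(\stgp')>0$, and the strict inequality still holds). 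There is no genuine obstacle here—the only care needed is to keep the signs correct when converting $\lp^{\bA}$ to $\lp^{\bR}$ in the two families of states, and to note that the argument is special to contingent agents: for friendly/unfriendly agents $\Lowset_\sag\neq\Lowset$, so their utility is \emph{not} monotone in $\err(\cdot)$, which is precisely why the subsequent claims (separating the deviating coalition into purely friendly or purely unfriendly, and then ruling each out) are still required. Specializing to $\Wos=2$ and $e(\ag)=1-\acc(\stgp^*)$ recovers the binary bound $\ut_\sag(\stgp')-\ut_\sag(\stgp^*)\le-(1-\acc(\stgp^*))<0$ used in Lemma~\ref{lem:nash}.
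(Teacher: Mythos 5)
Your proof is correct, but it takes a genuinely different (and cleaner) route than the paper's. The paper proves this claim by a two-case analysis: either $\stgp^*$ weakly beats $\stgp'$ in every state (and then argues at least one comparison is strict), or there is a nonempty set of states $\Lowset'\cup\Highset'$ where $\stgp'$ does strictly better, in which case it bounds each such ``gain'' term by $P_\wos\cdot\frac{e(\ag)}{P_\wos}\cdot B$, uses $|\Lowset'\cup\Highset'|\le \Wos-1$, and plays this off against the accuracy gap $\acc(\stgp^*)-\acc(\stgp')\ge(\Wos-1)B\cdot e(\ag)$ to conclude the total is negative. You instead exploit the defining property of contingent agents once and for all: rewriting $\ut_\sag(\stgp)=C_\sag-E_\sag(\stgp)$ with the $\Delta_\wos$-weighted error $E_\sag$ sandwiched as $\err(\stgp)\le E_\sag(\stgp)\le B\cdot\err(\stgp)$ (valid since $1\le\Delta_\wos\le B$ by integrality and $\vt_\sag(\wos,\bA)\neq\vt_\sag(\wos,\bR)$), which yields in one line $\ut_\sag(\stgp')-\ut_\sag(\stgp^*)\le B\cdot\err(\stgp^*)-\err(\stgp')\le-\bigl((\Wos-2)B+1\bigr)e(\ag)$. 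This avoids the case split and the subsets $\Lowset',\Highset'$ entirely, gives an explicit quantitative margin rather than bare negativity, and isolates exactly why the argument works only for contingent agents (for predetermined agents $E_\sag$ is not a weighted error, so monotonicity in $\err$ fails). The one caveat is strictness when $e(\ag)=0$: your parenthetical asserts $I(\stgp')>0$, which does not follow from the hypothesis $I(\stgp')\ge((\Wos-1)B+1)\cdot e(\ag)$ in that degenerate case; note, however, that the paper's own Case (1) (``there is at least one $\wos$ such that the inequality is strict'') likewise relies implicitly on $e(\ag)>0$, so this is a shared corner case rather than a gap specific to your argument.
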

Firstly, we have $\acc(\stgp^*) \ge \acc(\stgp') + (\Wos-1)B\cdot e(\ag)$. This is because $I(\stgp') \ge((\Wos-1)B+1)\cdot e(\ag)$ and $I(\stgp^*) \le e(\ag)$. 
Then we consider the utility difference of an arbitrary contingent agent $\sag$:
\vspace{-0.05cm}
\begin{align*}
     \ut_{\sag}(\stgp') - \ut_{\sag}(\stgp^*)
    =  & \sum_{\wos\in \Lowset} P_{\wos}(\vt_{\sag}(\wos, \bR) - \vt_{\sag}(\wos, \bA))(\lp_{\wos}^{\bR}(\stgp') - \lp_{\wos}^{\bR}(\stgp^*))\\
    & + \sum_{\wos\in \Highset} P_{\wos}(\vt_{\sag}(\wos, \bA) - \vt_{\sag}(\wos, \bR))(\lp_{\wos}^{\bA}(\stgp') - \lp_{\wos}^{\bA}(\stgp^*))
\end{align*}
\vspace{-0.05cm}
Recall that $\vt_{\sag}(\wos, \bR) - \vt_{\sag}(\wos, \bA) > 0$ for all $\wos\in \Lowset$ and $\vt_{\sag}(\wos, \bA) - \vt_{\sag}(\wos, \bR) > 0$ for all $\wos\in\Highset$. We consider two cases:

\noindent{\bf(1) } For all world state $\wos$, $\stgp^*$ leads to the informed majority decision with a higher probability than $\stgp'$. That is, for all $\wos\in \Lowset$,  $\lp_{\wos}^{\bR}(\stgp') - \lp_{\wos}^{\bR}(\stgp^*) \le 0$, and for all $\wos\in \Highset$, $\lp_{\wos}^{\bA}(\stgp') - \lp_{\wos}^{\bA}(\stgp^*)\le 0$. In this case, there is at least one $\wos$ such that the inequality is strict. Therefore we have $\ut_{\sag}(\stgp') - \ut_{\sag}(\stgp') < 0$. 

\noindent{\bf (2)} There exists a world state $\wos$ in which $\stgp'$ leads to the informed majority decision with a higher probability than $\stgp^*$. That is, there exists $\Lowset'\subseteq \Lowset$ and $\Highset'\subseteq\Highset$ s.t. $\Lowset'\cup\Highset' \neq \emptyset$, and for $\wos\in \Lowset'$,  $\lp_{\wos}^{\bR}(\stgp') - \lp_{\wos}^{\bR}(\stgp^*) > 0$; for $\wos\in \Highset'$, $\lp_{\wos}^{\bA}(\stgp') - \lp_{\wos}^{\bA}(\stgp^*)>0$. Note that $\Lowset' \cup \Highset' \neq \Wosset$ always holds, otherwise we will have $\acc(\stgp') >\acc(\stgp^*)$ which is a contradiction. Then we have 
\vspace{-0.05cm}
\begin{align*}
    &\sum_{\wos\in \Lowset\setminus\Lowset'} P_{\wos} (\lp_{\wos}^{\bR}(\stgp^*) - \lp_{\wos}^{\bR}(\stgp')) + \sum_{\wos\in \Highset\setminus\Highset'} P_{\wos} (\lp_{\wos}^{\bA}(\stgp^*) - \lp_{\wos}^{\bA}(\stgp'))\\
    = & \acc(\stgp^*) - \acc(\stgp') - \sum_{\wos\in \Lowset'} P_{\wos} (\lp_{\wos}^{\bR}(\stgp^*) - \lp_{\wos}^{\bR}(\stgp'))- \sum_{\wos\in \Highset'} P_{\wos} (\lp_{\wos}^{\bA}(\stgp^*) - \lp_{\wos}^{\bA}(\stgp')\\
    > & \acc(\stgp^*) - \acc(\stgp') \\
    \ge & (\Wos-1)B\cdot e(\ag). 
\end{align*}
\vspace{-0.05cm}
At the same time, for all $\wos\in\Lowset$, we have $\lp_{\wos}^{\bR}(\stgp') - \lp_{\wos}^{\bR}(\stgp^*)\le \frac{e(\ag)}{P_{\wos}}$, and for all $\wos\in \Highset'$, $\lp_{\wos}^{\bA}(\stgp') - \lp_{\wos}^{\bA}(\stgp^*)\le \frac{e(\ag)}{P_{\wos}}$. Therefore, the difference of utility will have 
\vspace{-0.05cm}
\begin{align*}
    &\ut_{\sag}(\stgp') - \ut_{\sag}(\stgp')\\
    < & -(\Wos - 1)B\cdot e(\ag) + \sum_{\wos\in \Lowset'\cup\Highset'} P_\wos \cdot \frac{e(\ag)}{P_{\wos}}\cdot B\\
    \le  &  -(\Wos - 1)B\cdot e(\ag) + (\Wos-1)B\cdot e(\ag)\\
    =&0.
\end{align*}
\vspace{-0.05cm}
In the second line, the first term comes from the sum of all terms for $\wos\in \Wosset\setminus(\Lowset'\cup\Highset')$:
\begin{align*}
    &\sum_{\wos\in \Lowset\setminus\Lowset'} P_{\wos} (\lp_{\wos}^{\bR}(\stgp') - \lp_{\wos}^{\bR}(\stgp^*))(\vt_{\sag}(\wos, \bR) - \vt_{\sag}(\wos, \bA)) \\
    & + \sum_{\wos\in \Highset\setminus\Highset'} P_{\wos} (\lp_{\wos}^{\bA}(\stgp') - \lp_{\wos}^{\bA}(\stgp^*))(\vt_{\sag}(\wos, \bA) - \vt_{\sag}(\wos, \bR))\\ 
    = & - \sum_{\wos\in \Lowset\setminus\Lowset'} P_{\wos} (\lp_{\wos}^{\bR}(\stgp^*) - \lp_{\wos}^{\bR}(\stgp'))(\vt_{\sag}(\wos, \bR) - \vt_{\sag}(\wos, \bA))\\
    & - \sum_{\wos\in \Highset\setminus\Highset'} P_{\wos} (\lp_{\wos}^{\bA}(\stgp^*) - \lp_{\wos}^{\bA}(\stgp'))(\vt_{\sag}(\wos, \bA) - \vt_{\sag}(\wos, \bR))\\
    \le& - \sum_{\wos\in \Lowset\setminus\Lowset'} P_{\wos} (\lp_{\wos}^{\bR}(\stgp^*) - \lp_{\wos}^{\bR}(\stgp'))\cdot 1 - \sum_{\wos\in \Highset\setminus\Highset'} P_{\wos} (\lp_{\wos}^{\bA}(\stgp^*) - \lp_{\wos}^{\bA}(\stgp'))\cdot 1\\
    < & -(\Wos-1)B\cdot e(\ag).
\end{align*}
And the second term comes from the sum of all terms for $\wos \in \Lowset'\cup\Highset'$. Note that since $\Lowset'\cup\Highset'\subsetneq \Wosset$, we have $|\Lowset'\cup\Highset'| \le \Wos-1.$

After excluding contingent agents from the deviating group, we show that any deviating group $D$ cannot contain both friendly and unfriendly agents. 
\begin{claim}
Suppose $\sag_1$ is an arbitrary friendly agent, and $\sag_2$ is an arbitrary unfriendly agent. For $\Delta  = \Wos\cdot B(B+1)\cdot e(\ag)$, we have 
\begin{enumerate}
    \item If $ \ut_{\sag_1}(\stgp') - \ut_{\sag_1}(\stgp^*)> \Delta$, $ \ut_{\sag_2}(\stgp') - \ut_{\sag_2}(\stgp^*)< 0$.
    \item If $ \ut_{\sag_2}(\stgp') - \ut_{\sag_2}(\stgp^*)> \Delta$, $ \ut_{\sag_1}(\stgp') - \ut_{\sag_1}(\stgp^*)< 0$.
\end{enumerate}
\end{claim}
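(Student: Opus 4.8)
The plan is to route both utility changes through a single common quantity and then play the two agents' preferences against each other. Since $\lp_\wos^{\bR}=1-\lp_\wos^{\bA}$, writing $d_\wos:=\lp_\wos^{\bA}(\stgp')-\lp_\wos^{\bA}(\stgp^*)$ (which does not depend on $\sag$) and $g_\sag(\wos):=\vt_\sag(\wos,\bA)-\vt_\sag(\wos,\bR)$, for every agent $\sag$
\[
\ut_\sag(\stgp')-\ut_\sag(\stgp^*)=\sum_{\wos=1}^{\Wos}P_\wos\,d_\wos\,g_\sag(\wos).
\]
I will use three facts. (i) Since $\err(\stgp^*)\le e(\ag)$, $\lp_\wos^{\bA}(\stgp^*)\le e(\ag)/P_\wos$ for $\wos\in\Lowset$ and $\lp_\wos^{\bA}(\stgp^*)\ge 1-e(\ag)/P_\wos$ for $\wos\in\Highset$, hence $d_\wos\ge -e(\ag)/P_\wos$ on $\Lowset$ and $d_\wos\le e(\ag)/P_\wos$ on $\Highset$. (ii) Each $g_\sag(\cdot)$ is strictly increasing in $\wos$, integer-valued, with $1\le|g_\sag(\wos)|\le B$. (iii) A friendly $\sag_1$ has $\Low_{\sag_1}<\Low$ and an unfriendly $\sag_2$ has $\Low_{\sag_2}\ge\High$, so the sign change of $g_{\sag_1}$ lies strictly below that of $g_{\sag_2}$, and $\Lowset_{\sag_1}\subsetneq\Lowset\subsetneq\Lowset_{\sag_2}$.

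For Part~1 suppose $\ut_{\sag_1}(\stgp')-\ut_{\sag_1}(\stgp^*)>\Delta$. Partition $\Wosset=\Lowset_{\sag_1}\sqcup M\sqcup\Highset_{\sag_2}$ with $M:=\Highset_{\sag_1}\cap\Lowset_{\sag_2}$; then $\{\Low,\High\}\subseteq M$, $g_{\sag_1}>0$ and $g_{\sag_2}<0$ throughout $M$, while on $\Lowset_{\sag_1}$ both gaps are negative and on $\Highset_{\sag_2}$ both are positive. On the two blocks $\Lowset_{\sag_1}\subseteq\Lowset$ and $\Highset_{\sag_2}\subseteq\Highset$, where the two agents agree, fact (i) makes every term that could be positive for an agent at most $B e(\ag)$, so each such block can raise an agent's utility by at most $\Wos B e(\ag)$. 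On $M$ the only shifts not controlled by $e(\ag)/P_\wos$ are the $\bA$-boosting ones on $M\cap\Lowset\subseteq\Lowset$, of total weight $a:=\sum_{\wos\in M\cap\Lowset}P_\wos\max(d_\wos,0)$, and the $\bR$-boosting ones on $M\cap\Highset\subseteq\Highset$, of total weight $b:=\sum_{\wos\in M\cap\Highset}P_\wos\max(-d_\wos,0)$. Bounding $g_\sag(\wos)$ on the low part of $M$ by $g_\sag(\Low)$ and on the high part by $g_\sag(\High)$ (so $g_{\sag_1}(\Low)<g_{\sag_1}(\High)$ and $|g_{\sag_2}(\Low)|>|g_{\sag_2}(\High)|$ by (ii)--(iii)), one collects
\[
\ut_{\sag_1}(\stgp')-\ut_{\sag_1}(\stgp^*)\le g_{\sag_1}(\Low)\,a-g_{\sag_1}(\High)\,b+\Wos B e(\ag),
\]
\[
\ut_{\sag_2}(\stgp')-\ut_{\sag_2}(\stgp^*)\le -|g_{\sag_2}(\Low)|\,a+|g_{\sag_2}(\High)|\,b+\Wos B e(\ag).
\]
Abbreviating $p=g_{\sag_1}(\Low)$, $q=g_{\sag_1}(\High)$, $r=|g_{\sag_2}(\Low)|$, $s=|g_{\sag_2}(\High)|$ (integers in $[1,B]$ with $p<q$ and $s<r$), the hypothesis and the first inequality give $pa-qb>\Delta-\Wos B e(\ag)=\Wos B^2 e(\ag)>0$, so $a>(\Delta-\Wos B e(\ag)+qb)/p$; substituting into the second inequality and using $rq/p>r>s$ to drop the resulting nonpositive multiple of $b$,
\[
\ut_{\sag_2}(\stgp')-\ut_{\sag_2}(\stgp^*)<-\frac{r}{p}\bigl(\Delta-\Wos B e(\ag)\bigr)+\Wos B e(\ag)\le-\frac{\Delta-\Wos B e(\ag)}{B}+\Wos B e(\ag)=0,
\]
which is exactly where the choice $\Delta=\Wos B(B+1)e(\ag)$ is needed. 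Part~2 is the mirror image: from $\ut_{\sag_2}(\stgp')-\ut_{\sag_2}(\stgp^*)>\Delta$ bound $b$ from below via the second inequality, substitute into the first, and drop the nonpositive multiple of $a$ using $qr/s>q>p$, obtaining $\ut_{\sag_1}(\stgp')-\ut_{\sag_1}(\stgp^*)<0$.

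The region-by-region bookkeeping behind the two displayed inequalities is routine. The real obstacle is the disagreement block $M$, where $\sag_1$ can profit from pushing $\bA$ in the low states while $\sag_2$ simultaneously profits from pushing $\bR$ in the high states; a bound that replaces every $g_\sag(\wos)$ by its extreme value $\pm B$ or $\pm 1$ is far too lossy to forbid both agents gaining at once. The fix is to cut $M$ exactly at the $\Low\mid\High$ boundary and invoke the strict monotonicity of the utility gap together with fact (iii): the friendly agent's per-unit gain on the low part of $M$ is then provably smaller than its per-unit loss on the high part, and symmetrically for the unfriendly agent, which is precisely the leverage that forces the dichotomy; the constant $\Delta=\Wos B(B+1)e(\ag)$ is just large enough to absorb the $O(\Wos B e(\ag))$ slack coming from the two agreement blocks.
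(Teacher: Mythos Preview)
Your proof is correct and follows essentially the same strategy as the paper: partition $\Wosset$ into the two agreement blocks $\Lowset_{\sag_1},\Highset_{\sag_2}$ (controlled by the $e(\ag)/P_\wos$ bound on $d_\wos$) and the disagreement block $M=\Highset_{\sag_1}\cap\Lowset_{\sag_2}$, then exploit the strict monotonicity of $g_\sag(\cdot)$ on $M$ together with $|g_\sag|\in[1,B]$ to force the dichotomy. The only difference is bookkeeping: the paper pivots the whole middle block at the single value $g_\sag(\Low)$ (via an add-and-subtract of $e(\ag)/P_\wos$ to keep the multiplier nonnegative), obtaining one common scalar $\sum_{\wos\in M}P_\wos d_\wos>\Wos B e(\ag)$, whereas you split $d_\wos$ into its positive and negative parts, pivot separately at $\Low$ and $\High$, and carry the two scalars $a,b$ through a short linear-algebra step with the four constants $p,q,r,s$. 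Both routes land on the same cancellation $-\Wos B e(\ag)+\Wos B e(\ag)=0$, and both need exactly the hypothesis $\Delta=\Wos B(B+1)e(\ag)$ to absorb the $\Wos B e(\ag)$ slack from the agreement blocks.
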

We will only prove (1) since the reasoning of (1) and (2) are similar. Without loss of generality, let $\sag_1=1$ and $\sag_2 = 2$. Suppose $\ut_1(\stgp') - \ut_1(\stgp^*)> \Delta$, we'll show that $\ut_2(\stgp') - \ut_2(\stgp^*)< 0.$ According to the definition of friendly and unfriendly agents, we have $\Lowset_1\subsetneq\Lowset\subsetneq \Lowset_2$, and $\Highset_2\subsetneq\Highset\subsetneq\Highset_1$.

We'll mainly use three facts in the proof: 
\begin{enumerate}
    \item for all $\wos\in \Lowset$, $\lp_{\wos}^{\bA}(\stgp^*) - \lp_{\wos}^{\bA}(\stgp')\le \frac{e(\ag)}{P_\wos} -\lp_{\wos}^{\bA}(\stgp')  \le \frac{e(\ag)}{P_\wos}$; for all $\wos\in \Highset$, $\lp_{\wos}^{\bA}(\stgp') - \lp_{\wos}^{\bA}(\stgp^*)\le \lp_{\wos}^{\bA}(\stgp') - (1 - \frac{e(\ag)}{P_\wos}) \le \frac{e(\ag)}{P_\wos}$.
    \item for both $\sag = 1$ and $\sag=2$, $\vt_{\sag}(\wos, \bA) -\vt_{\sag}(\wos, \bR)$ is negative for all $\wos\in \Lowset_1$, and positive for all $\wos\in \Highset_2$. 
    \item for both $\sag = 1$ and $\sag=2$, $\vt_{\sag}(\wos, \bA) - \vt_{\sag}(\wos, \bR)$ is increasing in $\wos$. 
\end{enumerate}
We first consider the difference of 1's expected utility:
\begin{align*}
    \Delta < \ut_1(\stgp') - \ut_1(\stgp^*)\notag
    = \sum_{\wos=1}^{\Wos} P_{\wos}(\vt_1(\wos, \bA) - \vt_1(\wos, \bR))(\lp_{\wos}^{\bA}(\stgp') - \lp_{\wos}^{\bA}(\stgp^*))\notag\\
\end{align*}
We can categorize $\wos$ into three parts:
\begin{itemize}
    \item $\wos\in\Lowset_1$. For these $\wos$, $\vt_{\sag}(\wos, \bA) -\vt_{\sag}(\wos, \bR) < 0$ and $\lp_{\wos}^{\bA}(\stgp^*) - \lp_{\wos}^{\bA}(\stgp')\le\frac{e(\ag)}{P_\wos}$. Therefore, we have 
    \begin{align*}
        &\sum_{\wos\in\Lowset_1} P_{\wos}(\vt_1(\wos, \bA) - \vt_1(\wos, \bR))(\lp_{\wos}^{\bA}(\stgp') - \lp_{\wos}^{\bA}(\stgp^*))\\
        = &\sum_{\wos\in\Lowset_1} P_{\wos}(\vt_1(\wos, \bR) - \vt_1(\wos, \bA))(\lp_{\wos}^{\bA}(\stgp^*) - \lp_{\wos}^{\bA}(\stgp'))\\
        \le &\sum_{\wos\in\Lowset_1}P_{\wos}\cdot B\cdot \frac{e(\ag)}{P_\wos}\\
        =& \sum_{\wos\in\Lowset_1}B\cdot e(\ag).
    \end{align*}
    \item $\wos\in \Highset_2$. For these $\wos$, $\vt_{\sag}(\wos, \bA) -\vt_{\sag}(\wos, \bR) >0$ and $\lp_{\wos}^{\bA}(\stgp') - \lp_{\wos}^{\bA}(\stgp^*)\le\frac{e(\ag)}{P_\wos}$. Therefore, we have
    \begin{align*}
        &\sum_{\wos\in \Highset_2} P_{\wos}(\vt_1(\wos, \bA) - \vt_1(\wos, \bR))(\lp_{\wos}^{\bA}(\stgp') - \lp_{\wos}^{\bA}(\stgp^*))\\
        \le &\sum_{\wos\in \Highset_2}P_{\wos}\cdot B\cdot \frac{e(\ag)}{P_\wos}\\
        =& \sum_{\wos\in \Highset_2}B\cdot e(\ag).
    \end{align*}
    \item $\wos\in \Wosset\setminus(\Lowset_1 \cup\Highset_2)$. Note that this part can be divided into two parts: $\wos\in \Lowset\cap\Highset_1$ and $\Lowset_2\cap\Highset$. We'll deal with this part later. 
\end{itemize}
Therefore, 
\begin{align*}
    \Delta < & \ut_1(\stgp^*) - \ut_1(\stgp')\notag\\
    = & \sum_{\wos=1}^{\Wos} P_{\wos}(\vt_1(\wos, \bA) - \vt_1(\wos, \bR))(\lp_{\wos}^{\bA}(\stgp') - \lp_{\wos}^{\bA}(\stgp^*))\notag\\
    \le &\sum_{\wos \in \Wosset\setminus(\Lowset_1 \cup \Highset_2)} P_{\wos}(\vt_1(\wos, \bA) - \vt_1(\wos, \bR))(\lp_{\wos}^{\bA}(\stgp') - \lp_{\wos}^{\bA}(\stgp^*)) + \sum_{\wos \in \Lowset_1 \cup \Highset_2} B\cdot e(\ag)\\
    = & \sum_{\wos \in \Lowset \cap \Highset_1} P_{\wos}(\vt_1(\wos, \bA) - \vt_1(\wos, \bR))(\lp_{\wos}^{\bA}(\stgp') - \lp_{\wos}^{\bA}(\stgp^*))\notag\\
    & - \sum_{\wos \in \Lowset_2 \cap \Highset} P_{\wos}(\vt_1(\wos, \bA) - \vt_1(\wos, \bR))(\lp_{\wos}^{\bA}(\stgp^*) - \lp_{\wos}^{\bA}(\stgp'))+
    \sum_{\wos \in \Lowset_1 \cup \Highset_2} e(\ag)\cdot B\notag.
\end{align*}
Now we start deal with the first two terms with fact 1 and 3. 
\begin{enumerate}
    \item For the first term $\sum_{\wos \in \Lowset \cap \Highset_1} P_{\wos}(\vt_1(\wos, \bA) - \vt_1(\wos, \bR))(\lp_{\wos}^{\bA}(\stgp') - \lp_{\wos}^{\bA}(\stgp^*))$, we have $(\vt_1(\wos, \bA) - \vt_1(\wos, \bR)) > 0$ and $(\lp_{\wos}^{\bA}(\stgp') - \lp_{\wos}^{\bA}(\stgp^*)) \ge -\frac{e(\ag)}{P_{\wos}}.$ we want to replace the $\wos$ in $\vt_1$ into $L$, which will make $(\vt_1(\wos, \bA) - \vt_1(\wos, \bR))$ larger. At the same time, we want the whole term to become larger as well. Therefore, we add some term into $(\lp_{\wos}^{\bA}(\stgp') - \lp_{\wos}^{\bA}(\stgp^*))$ as follows:
    \begin{align*}
        &\sum_{\wos \in \Lowset \cap \Highset_1} P_{\wos}(\vt_1(\wos, \bA) - \vt_1(\wos, \bR))(\lp_{\wos}^{\bA}(\stgp') - \lp_{\wos}^{\bA}(\stgp^*))\\
        = & \sum_{\wos \in \Lowset \cap \Highset_1} P_{\wos}(\vt_1(\wos, \bA) - \vt_1(\wos, \bR))(\lp_{\wos}^{\bA}(\stgp') - \lp_{\wos}^{\bA}(\stgp^*) + \frac{e(\ag)}{P_\wos})\\
        & - \sum_{\wos \in \Lowset \cap \Highset_1} P_{\wos}(\vt_1(\wos, \bA) - \vt_1(\wos, \bR))\cdot \frac{e(\ag)}{P_\wos}\\
        \le & \sum_{\wos \in \Lowset \cup \Highset_1} P_{\wos}(\vt_1(\Low, \bA) - \vt_1(\Low, \bR))(\lp_{\wos}^{\bA}(\stgp') - \lp_{\wos}^{\bA}(\stgp^*) + \frac{e(\ag)}{P_\wos})\\
        &- \sum_{\wos \in \Lowset \cap \Highset_1} (\vt_1(\wos, \bA) - \vt_1(\wos, \bR))\cdot e(\ag)\\
        = & \sum_{\wos \in \Lowset \cap \Highset_1} P_{\wos}(\vt_1(\Low, \bA) - \vt_1(\Low, \bR))(\lp_{\wos}^{\bA}(\stgp') - \lp_{\wos}^{\bA}(\stgp^*))\\
        &+\sum_{\wos \in \Lowset \cap \Highset_1}(((\vt_1(\Low, \bA) - \vt_1(\Low, \bR)) - (\vt_1(\wos, \bA) - \vt_1(\wos, \bR)))\cdot e(\ag)\\
        \le & \sum_{\wos \in \Lowset \cap \Highset_1} P_{\wos}(\vt_1(\Low, \bA) - \vt_1(\Low, \bR))(\lp_{\wos}^{\bA}(\stgp') - \lp_{\wos}^{\bA}(\stgp^*)) + \sum_{\wos \in \Lowset \cap \Highset_1} B\cdot e(\ag). 
    \end{align*}
    \item For the second term $-\sum_{\wos \in \Lowset_2 \cap \Highset} P_{\wos}(\vt_1(\wos, \bA) - \vt_1(\wos, \bR))(\lp_{\wos}^{\bA}(\stgp^*) - \lp_{\wos}^{\bA}(\stgp'))$, we use similar technique to replace $\wos$ into $\Low$:
    \begin{align*}
        &-\sum_{\wos \in \Lowset_2 \cap \Highset} P_{\wos}(\vt_1(\wos, \bA) - \vt_1(\wos, \bR))(\lp_{\wos}^{\bA}(\stgp^*) - \lp_{\wos}^{\bA}(\stgp'))\\
        = & -\sum_{\wos \in \Lowset_2 \cap \Highset} P_{\wos}(\vt_1(\wos, \bA) - \vt_1(\wos, \bR))(\lp_{\wos}^{\bA}(\stgp^*) - \lp_{\wos}^{\bA}(\stgp') + \frac{e(\ag)}{P_\wos})\\
        &+ \sum_{\wos \in \Lowset_2 \cap \Highset} (\vt_1(\wos, \bA) - \vt_1(\wos, \bR))\cdot e(\ag)\\
        \le  & - \sum_{\wos \in \Lowset_2 \cap \Highset} P_{\wos}(\vt_1(\Low, \bA) - \vt_1(\Low, \bR))(\lp_{\wos}^{\bA}(\stgp^*) - \lp_{\wos}^{\bA}(\stgp') + \frac{e(\ag)}{P_\wos})\\
        &+ \sum_{\wos \in \Lowset_2 \cap \Highset} (\vt_1(\wos, \bA) - \vt_1(\wos, \bR))\cdot e(\ag)\\
        = & - \sum_{\wos \in \Lowset_2 \cap \Highset} P_{\wos}(\vt_1(\Low, \bA) - \vt_1(\Low, \bR))(\lp_{\wos}^{\bA}(\stgp^*) - \lp_{\wos}^{\bA}(\stgp'))\\
        & + \sum_{\wos \in \Lowset_2 \cap \Highset}(((\vt_1(\wos, \bA) - \vt_1(\wos, \bR)) - (\vt_1(\Low, \bA) - \vt_1(\Low, \bR)))\cdot e(\ag)\\
        \le  & - \sum_{\wos \in \Lowset_2 \cap \Highset} P_{\wos}(\vt_1(\Low, \bA) - \vt_1(\Low, \bR))(\lp_{\wos}^{\bA}(\stgp^*) - \lp_{\wos}^{\bA}(\stgp')) + \sum_{\wos \in \Lowset_2 \cap \Highset}B\cdot e(\ag). 
    \end{align*}
    Note that the first inequality comes from fact 3 that $0< \vt_1(\Low, \bA) - \vt_1(\Low, \bR) \le \vt_1(\wos, \bA) - \vt_1(\wos, \bR)$ for all $\wos\in \Lowset_2\cap\Highset$. 
\end{enumerate}
Now we put everything together and get 
\begin{align*}
     \Delta < & \ut_1(\stgp') - \ut_1(\stgp^*)\\
      \le  & \sum_{\wos \in \Lowset \cap \Highset_1} P_{\wos}(\vt_1(\wos, \bA) - \vt_1(\wos, \bR))(\lp_{\wos}^{\bA}(\stgp') - \lp_{\wos}^{\bA}(\stgp^*))\notag\\
    & - \sum_{\wos \in \Lowset_2 \cap \Highset} P_{\wos}(\vt_1(\wos, \bA) - \vt_1(\wos, \bR))(\lp_{\wos}^{\bA}(\stgp^*) - \lp_{\wos}^{\bA}(\stgp'))+
    \sum_{\wos \in \Lowset_1 \cup \Highset_2}  B\cdot e(\ag)\\
    \le & \sum_{\wos \in \Lowset \cap \Highset_1} P_{\wos}(\vt_1(\Low, \bA) - \vt_1(\Low, \bR))(\lp_{\wos}^{\bA}(\stgp') - \lp_{\wos}^{\bA}(\stgp^*))\\
    & -  \sum_{\wos \in \Lowset_2 \cap \Highset} P_{\wos}(\vt_1(\Low, \bA) - \vt_1(\Low, \bR))(\lp_{\wos}^{\bA}(\stgp^*) - \lp_{\wos}^{\bA}(\stgp'))\\
    & + \sum_{\wos \in \Lowset_1 \cup \Highset_2} B\cdot e(\ag) + \sum_{\wos \in \Lowset \cup \Highset_1} B\cdot e(\ag) + \sum_{\wos \in \Lowset_2 \cup \Highset}B\cdot e(\ag)\\
    = & \sum_{\wos \in \Lowset \cap \Highset_1} P_{\wos}(\vt_1(\Low, \bA) - \vt_1(\Low, \bR))(\lp_{\wos}^{\bA}(\stgp') - \lp_{\wos}^{\bA}(\stgp^*))\\
    & -  \sum_{\wos \in \Lowset_2 \cap \Highset} P_{\wos}(\vt_1(\Low, \bA) - \vt_1(\Low, \bR))(\lp_{\wos}^{\bA}(\stgp^*) - \lp_{\wos}^{\bA}(\stgp'))+\Wos\cdot B\cdot e(\ag). 
\end{align*}
Note that $B\ge \vt_1(\Low, \bA) - \vt_1(\Low, \bR) > 0$. Therefore, we have 
\begin{align*}
    & \sum_{\wos \in \Lowset \cap \Highset_1} P_{\wos}(\lp_{\wos}^{\bA}(\stgp') - \lp_{\wos}^{\bA}(\stgp^*))
    -  \sum_{\wos \in \Lowset_2 \cap \Highset} P_{\wos}(\lp_{\wos}^{\bA}(\stgp^*) - \lp_{\wos}^{\bA}(\stgp'))\\
    > & \frac{\Delta - \Wos\cdot B\cdot e(\ag)}{B} \\
    =&  \frac{ \Wos\cdot B(B+1)\cdot e(\ag) - \Wos\cdot B\cdot e(\ag)}{B}\\
    = & \Wos\cdot B\cdot e(\ag)
\end{align*}

Now we consider agent 2's side. We show that $\ut_2(\stgp') - \ut_2(\stgp^*) < 0$. Most calculations will be the same as those of agent 1. Note that one main difference is for agent 2, $\vt_2(\wos, \bA) - \vt_2(\wos, \bR) < 0$ for all $\wos \in \Wosset\setminus(\Lowset_1 \cup \Highset_2)$.

\begin{align*}
    &\ut_2(\stgp') - \ut_2(\stgp^*)\notag\\
    = & \sum_{\wos=1}^{\Wos} P_{\wos}(\vt_2(\wos, \bA) - \vt_2(\wos, \bR))(\lp_{\wos}^{\bA}(\stgp') - \lp_{\wos}^{\bA}(\stgp^*))\\
    \le &\sum_{\wos \in \Wosset\setminus(\Lowset_1 \cup \Highset_2)} P_{\wos}(\vt_2(\wos, \bA) - \vt_2(\wos, \bR))(\lp_{\wos}^{\bA}(\stgp') - \lp_{\wos}^{\bA}(\stgp^*)) + \sum_{\wos \in \Lowset_1 \cup \Highset_2} B\cdot e(\ag)\\
    = & \sum_{\wos \in \Lowset \cap \Highset_1} P_{\wos}(\vt_2(\wos, \bA) - \vt_2(\wos, \bR))(\lp_{\wos}^{\bA}(\stgp') - \lp_{\wos}^{\bA}(\stgp^*))\notag\\
    & - \sum_{\wos \in \Lowset_2 \cap \Highset} P_{\wos}(\vt_2(\wos, \bA) - \vt_2(\wos, \bR))(\lp_{\wos}^{\bA}(\stgp^*) - \lp_{\wos}^{\bA}(\stgp'))+
    \sum_{\wos \in \Lowset_1 \cup \Highset_2} B\cdot e(\ag)\notag\\
    \le & \sum_{\wos \in \Lowset \cap \Highset_1} P_{\wos}(\vt_2(\Low, \bA) - \vt_2(\Low, \bR))(\lp_{\wos}^{\bA}(\stgp') - \lp_{\wos}^{\bA}(\stgp^*))\\
    & -  \sum_{\wos \in \Lowset_2 \cap \Highset} P_{\wos}(\vt_2(\Low, \bA) - \vt_2(\Low, \bR))(\lp_{\wos}^{\bA}(\stgp^*) - \lp_{\wos}^{\bA}(\stgp'))+\Wos\cdot B\cdot e(\ag)\\
    < & (\vt_2(\Low, \bA) - \vt_2(\Low, \bR)) \cdot \Wos\cdot B\cdot e(\ag)+\Wos\cdot B\cdot e(\ag)\\
    \le& -1\cdot \Wos\cdot B\cdot e(\ag) +\Wos\cdot B\cdot e(\ag)\\
    =&  0. 
\end{align*}
Therefore, we show that $\ut_2(\stgp') - \ut_2(\stgp') < 0$, which implies the claim. 

\begin{remark}
Note that in the binary setting, this part of proof can be largely simplified. Note that $\Lowset_1 = \Highset_2 =\emptyset$, $\Lowset\cup\Highset_1 = \{L\}$, and $\Highset\cup \Lowset_2 = \{H\}$.
\end{remark}

Now we come to the final step of Case 2. We'll show that a deviating group $D$ that contains only friendly or only unfriendly agents cannot gain utility more than $\varepsilon$.

\begin{claim}
If $D$ contains only friendly agents (or only unfriendly agents), then for every $\sag\in D$, $\ut_{\sag}(\stgp') - \ut_{\sag}(\stgp^*) < \varepsilon$. 
\end{claim}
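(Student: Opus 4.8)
The plan is to exploit that in the regular profile $\stgp^*$ the pre-determined agents already play their dominant strategies, so a coalition $D$ consisting only of friendly agents can only turn $\bA$-votes into $\bR$-votes (and symmetrically for unfriendly agents). First I would fix the realized signal profile and couple the internal randomness of $\stgp'$ and $\stgp^*$ so that, for every deviating (friendly) agent, the event ``votes $\bA$ under $\stgp'$'' implies ``votes $\bA$ under $\stgp^*$''; this is trivially achievable since each such agent votes $\bA$ with probability $1$ under $\stgp^*$, and all non-deviating agents are unchanged. Under this coupling the set of $\bA$-voters under $\stgp'$ is, in every realization, a subset of the set of $\bA$-voters under $\stgp^*$, so the event ``$\bA$ wins'' under $\stgp'$ implies the same event under $\stgp^*$. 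Hence $\lp_\wos^{\bA}(\stgp') \le \lp_\wos^{\bA}(\stgp^*)$, equivalently $\lp_\wos^{\bR}(\stgp') \ge \lp_\wos^{\bR}(\stgp^*)$, for every world state $\wos$.

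Next, writing $d_\wos := \lp_\wos^{\bA}(\stgp^*) - \lp_\wos^{\bA}(\stgp') \ge 0$ and using $\lp_\wos^{\bA} + \lp_\wos^{\bR} = 1$, the utility change of a fixed $\sag \in D$ becomes
\[
  \ut_\sag(\stgp') - \ut_\sag(\stgp^*) = \sum_{\wos=1}^{\Wos} P_\wos\, d_\wos \bigl(\vt_\sag(\wos,\bR) - \vt_\sag(\wos,\bA)\bigr).
\]
Since $\vt_\sag(\wos,\bR) - \vt_\sag(\wos,\bA) > 0$ precisely on $\wos \in \Lowset_\sag$ and is negative for all other $\wos$, only the terms with $\wos \in \Lowset_\sag$ can be positive. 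Because $\sag$ is friendly, $\Lowset_\sag \subsetneq \Lowset$, so each such $\wos$ lies in $\Lowset$; then $\err(\stgp^*) \le e(\ag)$ forces $\lp_\wos^{\bA}(\stgp^*) \le e(\ag)/P_\wos$, whence $d_\wos \le e(\ag)/P_\wos$. Combining with $\vt_\sag(\wos,\bR) - \vt_\sag(\wos,\bA) \le B$,
\[
  \ut_\sag(\stgp') - \ut_\sag(\stgp^*) \le \sum_{\wos\in\Lowset_\sag} P_\wos\cdot\frac{e(\ag)}{P_\wos}\cdot B = |\Lowset_\sag|\,B\,e(\ag) \le \Wos B\, e(\ag) < \varepsilon,
\]
the final inequality holding since $\varepsilon = \Wos B((\Wos-1)B+1)e(\ag)$ and $(\Wos-1)B \ge 1$. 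The case where $D$ contains only unfriendly agents is symmetric: such a coalition can only convert $\bR$-votes to $\bA$-votes, so $\lp_\wos^{\bA}(\stgp') \ge \lp_\wos^{\bA}(\stgp^*)$ for every $\wos$; the only states on which an unfriendly agent prefers $\bA$ to win more often are $\wos \in \Highset_\sag \subsetneq \Highset$, on which $\err(\stgp^*) \le e(\ag)$ gives $\lp_\wos^{\bR}(\stgp^*) \le e(\ag)/P_\wos$, and the identical estimate yields $\ut_\sag(\stgp') - \ut_\sag(\stgp^*) < \varepsilon$.

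I do not expect a serious obstacle. The two points needing care are getting the monotonicity direction right — the coupling makes precise that a friendly coalition starting from ``always $\bA$'' cannot raise the $\bA$-share in any state — and recognizing that every world state beneficial to a friendly deviator must lie in $\Lowset$, which is exactly the region where the high fidelity of $\stgp^*$ pins the relevant winning probability down to $O(e(\ag))$. In the binary setting this collapses: for a friendly agent $\Lowset_\sag$ is either empty or $\{\Low\}$, so at most one world state contributes and the bound is immediate.
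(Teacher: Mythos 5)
Your proof is correct and follows essentially the same route as the paper: the monotonicity fact $\lp_{\wos}^{\bA}(\stgp')\le\lp_{\wos}^{\bA}(\stgp^*)$ (which you justify by an explicit coupling, where the paper simply asserts it from the fact that friendly deviators in a regular profile can only withdraw $\bA$-votes), the observation that the only positive contributions come from $\wos\in\Lowset_{\sag}\subsetneq\Lowset$, and the bound $d_\wos\le e(\ag)/P_\wos$ from $\err(\stgp^*)\le e(\ag)$, yielding the same estimate $\Wos B\,e(\ag)<\varepsilon$. No gaps; the symmetric unfriendly case is handled the same way as in the paper.
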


With the loss of generality, suppose $D$ contains only friendly agents. The calculation for $D$ contains only unfriendly agents will be the same.  Note that in $\stgp^*$ every friendly agent always votes for $\bA$. Therefore, when friendly agents in $D$ deviate, the probability of $\bA$ being the winner will decrease. Formally, for any $\wos=1,2,\cdots, \Wos$, we have $\lp_{\wos}^{\bA} (\stgp') \le \lp_{\wos}^{\bA} (\stgp^*)$. Now we consider the difference in the expected utility of an arbitrary friendly agent in $D$: 
\begin{align*}
    \ut_{\sag}(\stgp') - \ut_{\sag}(\stgp^*)
    =& \sum_{\wos\in \Lowset_{\sag}} P_{\wos}(\vt_{\sag}(\wos, \bR) - \vt_{\sag}(\wos, \bA))(\lp_{\wos}^{\bA}(\stgp^*) - \lp_{\wos}^{\bA}(\stgp')) \\
    & + \sum_{\wos\in \Highset_{\sag}} P_{\wos}(\vt_{\sag}(\wos, \bA) - \vt_{\sag}(\wos, \bR))(\lp_{\wos}^{\bA}(\stgp') - \lp_{\wos}^{\bA}(\stgp^*)). 
\end{align*}
Note that for all $\wos\in\Highset_{\sag}$, we have $\vt_{\sag}(\wos, \bA) - \vt_{\sag}(\wos, \bR) > 0$. Therefore, the second sum is non-positive. At the same time, since $\Lowset_{\sag} \subsetneq \Lowset$, we have $\lp_{\wos}^{\bA}(\stgp^*) - \lp_{\wos}^{\bA}(\stgp') \le \lp_{\wos}^{\bA}(\stgp^*)\le \frac{e(\ag)}{P_{\wos}}$ and $\vt_{\sag}(\wos, \bR) - \vt_{\sag}(\wos, \bA) > 0$. Therefore,  
\begin{align*}
    \ut_{\sag}(\stgp') - \ut_{\sag}(\stgp^*)
    \le & \sum_{\wos\in \Lowset_{\sag}} P_{\wos}\cdot B\cdot \frac{e(\ag)}{P_{\wos}}< \Wos\cdot B\cdot e(\ag) < \varepsilon. 
\end{align*}
Therefore, we show that $\stgp^*$ is an $\varepsilon$-strong BNE where $\varepsilon = \Wos B((\Wos-1)B+1)\cdot e(\ag)$. 
\end{proof}

\subsection{Theorem~\ref{thm:nb_deviate} (Thorem~\ref{thm:deviate} for binary setting)}
\label{apx:nb_deviate}
Then we show that there always exists a sequence of regular strategy profiles with high fidelity. 

\begin{thmnb}{thm:nb_deviate}
For an arbitrary sequence of instances, there exists a series of regular strategy profiles $\{\stgp_{\ag}'\}_{\ag=1}^{\infty}$ and constants $\ag_0>0$, $\phi > 0$ such that for all $\ag > \ag_0$, $\acc(\stgp_{\ag}') \ge 1 - 2\exp(-2\phi^2\ag)$. 
\end{thmnb}

\begin{proof}
Let $B(n, p)$ denote a random variable of binomial distribution with $n$ experiments and probability $p$. Given $\ag$ and a regular strategy profile $\stgp_{\ag}$ where all contingent agents play strategy $\stg = (\bp_1, \bp_2, \cdots, \bp_\sig)$, we can write up $\lp_{\wos}^{\bA}$ and $\lp_{\wos}^{\bR}$ for each $\wos$. 

\begin{align*}
    \lp_{\wos}^{\bA}(\stgp_{\ag}) = & \Pr[\text{\#number of votes for }\bA \ge \Thd \ag\mid \Wosrv = \wos]\\
    = & \Pr[\tf + B(\tc, \sum_{\sig =1}^{\Sig} P_{\sig \wos}\cdot \bp_{\sig}) \ge \Thd \ag]\\
    =& \Pr[B(\tc, \sum_{\sig =1}^{\Sig} P_{\sig \wos}\cdot \bp_{\sig}) - \tc \sum_{\sig =1}^{\Sig} P_{\sig \wos}\cdot \bp_{\sig} \ge - \hthd^\ag_{\wos\bA} (\stg)\cdot \ag], 
\end{align*}
where we define
\begin{align*}
    \hthd^\ag_{\wos\bA} (\stg) =& \frac{\tf}{\ag} + \frac{\tc}{\ag} \sum_{\sig =1}^{\Sig} P_{\sig \wos}\cdot \bp_{\sig} -\Thd\\
    = & \hagf + \hagc \sum_{\sig =1}^{\Sig} P_{\sig \wos}\cdot \bp_{\sig} -\Thd
\end{align*}

Similarly, 
\begin{align*}
    \lp_{\wos}^{\bR}(\stgp_{\ag}) = \Pr[B(\tc, \sum_{\sig =1}^{\Sig} P_{\sig \wos}\cdot (1-\bp_{\sig})) - \tc \sum_{\sig =1}^{\Sig} P_{\sig \wos}\cdot (1-\bp_{\sig}) \ge - \hthd^\ag_{\wos\bR} (\stg)\cdot \ag], 
\end{align*}
where 
\begin{align*}
    \hthd^\ag_{\wos\bR} (\stg)=& \frac{\tu}{\ag} + \frac{\tc}{\ag} \sum_{\sig =1}^{\Sig} P_{\sig \wos}\cdot (1-\bp_{\sig}) -(1-\Thd)\\
    = &\hagu + \hagc\sum_{\sig =1}^{\Sig} P_{\sig \wos}\cdot (1-\bp_{\sig}) -(1-\Thd)
\end{align*}
If we can find a $\stg'$, and constant $\ag_0 > 0$, $\phi > 0$ s.t. for every $\ag > \ag_0$, for every $\wos\in \Highset$ we have $ \hthd^\ag_{\wos\bA} (\stg') > \phi$ and for every $\wos\in\Lowset$ we have $ \hthd^\ag_{\wos\bR} (\stg') > \phi$, then we can directly apply the Hoeffding Inequality and show that for every $\ag > \ag_0$, $\stgp_{\ag}'$ has high fidelity.  

Now we start constructing strategy $\stg'$. First we define the "approximated version" of $\hthd^\ag_{\wos\bA}$ and $\hthd^\ag_{\wos\bR}$ which are independent from $\ag$. Given a strategy $\stg$, let 
\begin{align}
    \thd_{\wos\bA} (\stg) 
    = & \agf + \agc \sum_{\sig =1}^{\Sig} P_{\sig \wos}\cdot \bp_{\sig} -\Thd\\
    \thd_{\wos\bR} (\stg)
    = &\agu + \agc\sum_{\sig =1}^{\Sig} P_{\sig \wos}\cdot (1-\bp_{\sig}) -(1-\Thd). 
\end{align}
The rest of the proof will go as follows:
\begin{enumerate}
    \item First, we construct a strategy $\stg'$ s.t. for every $\wos\in \Highset$ we have $ \thd_{\wos\bA} (\stg') > 0$ and for every $\wos\in\Lowset$ we have $ \thd_{\wos\bR} (\stg') > 0$.
    \item Then we find out $\ag_0$ and show that for every $\ag >\ag_0$, $\hthd^\ag_{\wos\bA} (\stg')$ will not deviate from $ \thd_{\wos\bA} (\stg')$ too much, thus will larger than some constant $\phi$. Similarly we show that  for every $\ag >\ag_0$, $\hthd^\ag_{\wos\bA} (\stg') > \phi$. 
    \item Finally, we show that for all $\ag > \ag_0$, $\acc(\stgp_{\ag}') \ge 1 - 2\exp(-2\phi^2 T)$ by applying the Hoeffding inequality. 
\end{enumerate}

\noindent{\bf Part 1. Constructing $\stg'$.} The construction of $\stg$ has three steps. We start from a simple strategy and then modify it to the target step by step. 

\noindent{\bf Step 1.} Consider a simple strategy $\stg_0 = (\bp^*, \bp^*, \cdots, \bp^*)$. That is, the agent always votes for $\bA$ with probability $\bp^*$ and votes for $\bR$ with probability $1 - \bp^*$ no matter what signal she receives. $\bp^*$ satisfies $\agf + \agc\cdot \bp^* - \Thd =0$. Since $\agf < \Thd$ and $\agu < 1- \Thd$, we have $\bp^* \in (0,1)$. Moreover, we have $\thd_{\wos\bA}(\stg_0) = 0$ and $\thd_{\wos\bR}(\stg_0) = 0$ for all $\wos\in\Wosset$. 

\noindent{\bf Step 2.} Let $\sig^* = \lfloor \frac {\Sig}{2} \rfloor$. $\sig^*$ will serve as a threshold, as we regard signals not larger than $\sig^*$ as "low signals" and those exceeding $\sig^*$ as "high signals". Specifically, for a world state $\wos$, define $P_{l\wos} = \sum_{\sig=1}^{\sig^*} P_{\sig \wos}$, and $P_{h\wos} = \sum_{\sig = \sig^* +1}^{\Sig} P_{\sig \wos}$. According to the stochastic dominance assumption, we have $0\le P_{l\wos_1} < P_{l\wos_2}$ and $P_{h\wos_1} > P_{h\wos_2}\ge 0$ for all $\wos_1 > \wos_2$. Let $\bp_l = \bp^* -\delta_l$, and $\bp_h = \bp^* + \delta_h$, where $\delta_l > 0$ and $\delta_h > 0$ are constant satisfying $\delta_h  = \delta_l \cdot \frac{P_{lH}}{P_{hH}}$. We can carefully select and fix $\delta_h$ and $\delta_l$ to make $\bp_h$ and $\bp_l$ inside $(0, 1)$. Then we construct strategy $\stg_1$: if the agent receives a low signal $\sig \le \sig^*$, she votes for $\bA$ with probability $\bp_l$ and $\bR$ with probability $1 - \bp_l$; if she receives a high signal $m > \sig^*$, she votes for $\bA$ with probability $\bp_h$ and $\bR$ with probability $1 - \bp_h$. Then we show that 
\begin{itemize}
    \item for all $\wos\in \Highset$, $\thd_{\wos\bA}(\stg_1) \ge 0$, and 
    \item for all $\wos\in \Lowset$, $\thd_{\wos\bR}(\stg_1) > 0$. 
\end{itemize}
For the $\Highset$ side, we'll first show that $\thd_{\High\bA}(\stg_1) = 0$. Then we show that for all $\wos\in \Highset$, $\thd_{\wos\bA}(\stg_1) \ge \thd_{\High\bA}(\stg_1)$. For $\wos = \High$, we have 
\begin{align*}
    \thd_{\High\bA}(\stg_1) = & \agf + \agc (P_{l\High}\cdot \bp_l + P_{h\High} \cdot \bp_h)-\Thd\\
    = & \agf + \agc (P_{l\High}\cdot (\bp^* -\delta_l) + P_{h\High} \cdot (\bp^* + \delta_h))-\Thd\\
    = & \agc\cdot (-P_{l\High}\cdot \delta_l + P_{h\High}\cdot \delta_l \cdot \frac{P_{l\High}}{P_{h\High}})\\
    = & 0.
\end{align*}
For other $\wos\in\Highset$, because $\bp_l < \bp_h$, $P_{l\wos} \le P_{l\High}$,and $P_{h\wos} \ge P_{h\High}$, we have $ \thd_{\wos\bA}(\stg_1) \ge  \thd_{\High\bA}(\stg_1) \ge 0$. 

For the $\Lowset$ side, similarly, we'll first show that $\thd_{\Low\bR}(\stg_1) > 0$. Then we show that for all $\wos\in \Lowset$, $\thd_{\wos\bR}(\stg_1) > \thd_{\Low\bR}(\stg_1)$. For $\wos = \Low$ we have 
\begin{align*}
    \thd_{\Low\bR}(\stg_1) = & \agu + \agc (P_{l\Low}\cdot (1 -\bp_l) + P_{h\Low} \cdot (1 - \bp_h))-(1-\Thd)\\
    = & \agu + \agc (P_{l\Low}\cdot (1 - \bp^* +\delta_l) + P_{h\Low} \cdot ( 1- \bp^* - \delta_h))-(1-\Thd)\\
    = & \agc\cdot (P_{l\Low}\cdot \delta_l - P_{h\Low}\cdot \delta_l \cdot \frac{P_{l\High}}{P_{h\High}})\\
    = & \frac{\agc\cdot \delta_l}{P_{h\High}}(P_{l\Low}\cdot P_{h\High} - P_{h\Low}\cdot P_{l\High})\\
    > & 0. 
\end{align*}
The final inequality comes that  $P_{l\High} < P_{l\Low}$,and $P_{h\High} > P_{h\Low}$ since $\Low < \High$. For other $\wos\in\Lowset$, because $(1 - \bp_l)> (1- \bp_h)$, $P_{l\wos} \ge P_{l\Low}$,and $P_{h\wos} \le P_{h\Low}$, we have $ \thd_{\wos\bA}(\stg_1) \ge  \thd_{\Low\bA}(\stg_1) > 0$. 

\begin{remark}
In the binary setting, we just let $\stg_1 = (\bp_l, \bp_h)$. And we will find that $\thd_{H\bA}(\stg_1) = 0$ and $\thd_{L\bR}(\stg_1) > 0$. 
\end{remark}

\noindent{\bf Step 3}. In step 3, we modify $\bp_h$ to make both sides strictly positive. Let $\bp_l' = \bp_l$, and $\bp_h' = \bp_h + \delta_h'$, where $\delta_h' > 0$ is a constant satisfying $\agc\cdot \delta_h' \le \frac{ \thd_{\Low\bR}(\stg_1)}{2}$ and $\bp_h + \delta_h' \le 1$. Therefore, $0\le \bp_l'<\bp_h'\le 1$.  Now we are ready to construct our final strategy $\stg'$: if the agent receives a low signal $\sig \le \sig^*$, she votes for $\bA$ with probability $\bp_m' = \bp_l'$ and $\bR$ with probability $1 - \bp_l'$; if she receives a high signal $m > \sig^*$, she votes for $\bA$ with probability $\bp_m' =\bp_h'$ and $\bR$ with probability $1 - \bp_h'$. Then we show that 
\begin{itemize}
    \item for all $\wos\in \Highset$, $\thd_{\wos\bA}(\stg') > 0$, and 
    \item for all $\wos\in \Lowset$, $\thd_{\wos\bR}(\stg') > 0$. 
\end{itemize}

For the $\Highset$ side, we still consider $\High$ first: 
\begin{align*}
    \thd_{\High\bA}(\stg') = & \agf + \agc (P_{l\High}\cdot \bp_l' + P_{h\High} \cdot \bp_h')-\Thd\\
    = & \agf + \agc (P_{l\High}\cdot \bp_l + P_{h\High} \cdot (\bp_h + \delta_h'))-\Thd\\
    = & \thd_{\High\bA}(\stg_1) + \agc\cdot P_{h\High}\cdot \delta_h'\\
    = & \agc\cdot P_{h\High}\cdot \delta_h'\\
    > & 0.
\end{align*}
For other $\wos\in\Highset$, because $\bp_l' < \bp_h'$, $P_{l\wos} \le P_{l\High}$,and $P_{h\wos} \ge P_{h\High}$, we have $ \thd_{\wos\bA}(\stg') \ge  \thd_{\High\bA}(\stg') >0$. 

For the $\Lowset$ side, we consider $\Low$ first: 
\begin{align*}
    \thd_{\Low\bR}(\stg') = & \agu + \agc (P_{l\Low}\cdot (1 -\bp_l') + P_{h\Low} \cdot (1 - \bp_h'))-(1-\Thd)\\
    & \agu + \agc (P_{l\Low}\cdot (1 -\bp_l) + P_{h\Low} \cdot (1 - \bp_h - \delta_h'))-(1-\Thd)\\
    = &\thd_{\Low\bR}(\stg') -\agc\cdot P_{h\Low}\cdot \delta_h'\\
    \ge& \frac{\thd_{\Low\bR}(\stg')}{2}\\
    > & 0. 
\end{align*}
 For other $\wos\in\Lowset$, because $(1 - \bp_l')> (1- \bp_h')$, $P_{l\wos} \ge P_{l\Low}$,and $P_{h\wos} \le P_{h\Low}$, we have $ \thd_{\wos\bA}(\stg') \ge  \thd_{\Low\bA}(\stg') > 0$. 
 \begin{remark}
 In the binary setting, similarly, let $\stg' = (\bp_l', \bp_h')$. We will find that $\thd_{H\bA}(\stg') > 0$ and $\thd_{L\bR}(\stg') > 0$
 \end{remark}
 
 \noindent{\bf Part 2 Determine $\ag_0$ and $\phi$.} In this part, we'll show that for all $\wos\in\Highset$, ($\wos\in\Lowset$, respectively), $\hthd^\ag_{\wos\bA} (\stg')$ ($\hthd^\ag_{\wos\bR} (\stg')$, respectively) will not deviate from $ \thd_{\wos\bA} (\stg')$ ($\thd_{\wos\bR} (\stg')$, respectively) too much. Since  $\thd_{\wos\bA} (\stg')$ and $\thd_{\wos\bR} (\stg')$ do not depend on $\ag$, we can determine constant $\ag_0$ and $\phi$, s.t. for all $\ag > \ag_0$, $\hthd^\ag_{\wos\bA} (\stg') > \phi$, and $\hthd^\ag_{\wos\bR} (\stg')> \phi$. 
We start from the $\Highset$ side. Recall that 
\begin{align*}
    \hthd^\ag_{\wos\bA} (\stg) =& \frac{\tf}{\ag} + \frac{\tc}{\ag} \sum_{\sig =1}^{\Sig} P_{\sig \wos}\cdot \bp_{\sig} -\Thd. 
\end{align*}
And recall that  $\tf = \ag - \lfloor (1-\agf) \cdot \ag\rfloor$, $\tu = \lfloor \agu \cdot \ag\rfloor$, and $\tc = \lfloor (1-\agf) \cdot \ag\rfloor - \lfloor \agu \cdot \ag\rfloor$. Therefore, for all $\wos\in\Highset$, we have 
\begin{align*}
    \hthd^\ag_{\wos\bA} (\stg') =&
    1 - \frac{\lfloor (1-\agf) \cdot \ag\rfloor}{\ag} + \frac{\lfloor (1-\agf) \cdot \ag\rfloor - \lfloor \agu \cdot \ag\rfloor}{\ag} \sum_{\sig =1}^{\Sig} P_{\sig \wos}\cdot \bp_{\sig}' -\Thd\\
    \ge & 1 - (1-\agf)+ ((1-\agf) -\frac{1}{\ag} - \agu )\sum_{\sig =1}^{\Sig} P_{\sig \wos}\cdot \bp_{\sig}' -\Thd\\
    = & \agf + \agc\cdot \sum_{\sig =1}^{\Sig} P_{\sig \wos}\cdot \bp_{\sig}' -\Thd - \frac{1}{\ag}  \sum_{\sig =1}^{\Sig} P_{\sig \wos}\cdot \bp_{\sig}'\\
    \ge& \thd_{\wos\bA}(\stg') - \frac{1}{\ag}\\
    \ge & \thd_{\High\bA}(\stg') - \frac{1}{\ag}
\end{align*}
Since $\thd_{\High\bA}(\stg')>0$ is independent from $\ag$, there exists a $\ag_\Highset> 0$, s.t. for all $\ag > \ag_{\Highset}$ and all $\wos\in\Highset$, $\hthd^\ag_{\wos\bA} (\stg') \ge \frac{\thd_{\High\bA}(\stg')}{2}$. 

For the $\Lowset$ side, similarly for all $\wos\in\Lowset$,  we have 
\begin{align*}
    \hthd^\ag_{\wos\bR} (\stg')=& \frac{\tu}{\ag} + \frac{\tc}{\ag} \sum_{\sig =1}^{\Sig} P_{\sig \wos}\cdot (1-\bp_{\sig}') -(1-\Thd)\\
    = &\frac{\lfloor \agu \cdot \ag\rfloor}{\ag} + \frac{\lfloor (1-\agf) \cdot \ag\rfloor - \lfloor \agu \cdot \ag\rfloor}{\ag}\sum_{\sig =1}^{\Sig} P_{\sig \wos}\cdot (1-\bp_{\sig}') -(1-\Thd)\\
    \ge & \agu - \frac{1}{\ag} + ((1-\agf) - \frac{1}{\ag} - \agu )\sum_{\sig =1}^{\Sig} P_{\sig \wos}\cdot (1-\bp_{\sig}') -(1-\Thd)\\
    = & \agu + \agc\cdot \sum_{\sig =1}^{\Sig} P_{\sig \wos}\cdot (1-\bp_{\sig}') -(1-\Thd) -\frac{1}{\ag} (1 +\sum_{\sig =1}^{\Sig} P_{\sig \wos}\cdot (1-\bp_{\sig}'))\\
    \ge & \thd_{\wos\bR}(\stg') - \frac2\ag\\
    \ge &  \thd_{\Low\bR}(\stg') - \frac2\ag.
\end{align*}
Since $\thd_{\Low\bR}(\stg')>0$ is independent from $\ag$, there exists a $\ag_\Lowset> 0$, s.t. for all $\ag > \ag_{\Lowset}$ and all $\wos\in\Lowset$, $\hthd^\ag_{\wos\bR} (\stg') \ge \frac{ \thd_{\Low\bR}(\stg')}{2}$. 

\begin{remark}
This part is different in the binary setting due to the different rounding method. However, we can still show that $\hthd^\ag_{H\bA} (\stg')  \ge \thd_{\High\bA}(\stg') - \frac2\ag$ and $\hthd^\ag_{L\bR} (\stg')\ge \thd_{\Low\bR}(\stg') - \frac2\ag$ by the same reasoning. 
\end{remark}

Therefore, putting $\Highset$ and $\Lowset$ sides together, let $\phi =\min(\frac{\thd_{\High\bA}(\stg')}{2}, \frac{ \thd_{\Low\bR}(\stg')}{2})$, and $\ag_0 = \max(\ag_\Highset, \ag_\Lowset)$, we have for all $\ag > \ag_0$, for all $\wos\in\Highset$, $\hthd^\ag_{\wos\bA} (\stg') \ge \phi$, and  all $\wos\in\Lowset$, $\hthd^\ag_{\wos\bR} (\stg') \ge\phi$. 

\noindent{\bf Part 3. Bound the fidelity.} Now we come back to the fidelity and bound it by the Hoeffding Inequality. For all $\ag > \ag_0$, for the $\Highset$ side we have 
\begin{align*}
    \lp_{\wos}^{\bA}(\stgp_{\ag}') 
    =& \Pr[B(\tc, \sum_{\sig =1}^{\Sig} P_{\sig \wos}\cdot \bp_{\sig}') - \tc \sum_{\sig =1}^{\Sig} P_{\sig \wos}\cdot \bp_{\sig}' \ge - \hthd^\ag_{\wos\bA} (\stg)\cdot \ag]\\
    \ge &\Pr[B(\tc, \sum_{\sig =1}^{\Sig} P_{\sig \wos}\cdot \bp_{\sig}') - \tc \sum_{\sig =1}^{\Sig} P_{\sig \wos}\cdot \bp_{\sig}' \ge - \phi\cdot \ag]\\
    \ge& 1 - 2\exp(-2\phi^2\ag). 
\end{align*}
Similarly, for the $\Lowset$ side we have 
\begin{align*}
    \lp_{\wos}^{\bR}(\stgp_{\ag}') =& \Pr[B(\tc, \sum_{\sig =1}^{\Sig} P_{\sig \wos}\cdot (1-\bp_{\sig}')) - \tc \sum_{\sig =1}^{\Sig} P_{\sig \wos}\cdot (1-\bp_{\sig}') \ge - \hthd^\ag_{\wos\bR} (\stg)\cdot \ag]\\
    \ge &\Pr[B(\tc, \sum_{\sig =1}^{\Sig} P_{\sig \wos}\cdot (1-\bp_{\sig}')) - \tc \sum_{\sig =1}^{\Sig} P_{\sig \wos}\cdot (1-\bp_{\sig}') \ge - \phi\cdot \ag]\\
    \ge & 1 - 2\exp(-2\phi^2\ag). 
\end{align*}
Therefore for all $\ag > \ag_0$, 
\begin{align*}
    \acc(\stgp_{\ag}') = & \sum_{\wos\in \Lowset}P_{\wos}\cdot \lp_{\wos}^{\bR}(\stgp_{\ag}') + \sum_{\wos\in \Highset}P_{\wos} \cdot \lp_{\wos}^{\bA}(\stgp_{\ag}')\\
    \ge &  \sum_{\wos\in \Lowset}P_{\wos}( 1 - 2\exp(-2\phi^2\ag)) + \sum_{\wos\in \Highset}P_{\wos}(1 - 2\exp(-2\phi^2\ag))\\
    = & 1 - 2\exp(-2\phi^2\ag). 
\end{align*}
\end{proof}

\subsection{Theorem~\ref{thm:nb_acc} (Theorem~\ref{thm:acc} for binary setting)} 
\label{apx:nb_acc}
Then we give the theorem of the equivalence between fidelity and strong equilibrium. 

\begin{thmnb}{thm:nb_acc}
 Given an arbitrary sequence of instances and an arbitrary sequence of regular strategy profiles $\{\stgp_{\ag}\}$ , let $\{\acc(\stgp_{\ag})\}_{\ag=1}^{\infty}$ be the sequence of fidelity of $\stgp$.
\begin{itemize}
    \item If $\lim_{\ag\in \infty} \acc(\stgp_{\ag}) = 1$, for every $\ag > \ag_\mu$, $\stgp_{\ag}$ is an $\varepsilon$-strong BNE with $\varepsilon = o(1)$. 
    \item If $\lim_{\ag\in \infty} \acc(\stgp_{\ag}) = 1$ does NOT hold, there exists infinitely many $\ag$ such that $\stgp_\ag$ is not an $\varepsilon$-strong BNE with constant $\varepsilon$
\end{itemize}
\end{thmnb}

\begin{proof}
Let $e(\ag) = (1 - \acc(\stgp_{\ag}))$, and we can directly apply Lemma~\ref{lem:nb_nash}, and get that for every $\ag$, $\stgp_{\ag}$ is an $\varepsilon$-strong BNE where $\varepsilon = \Wos B((\Wos-1)B+1)\cdot (1-\acc(\stgp_{\ag})) $.

\noindent{\bf Case 1: $\lim_{\ag\in \infty} \acc(\stgp_{\ag}) = 1$}. For this case,  Since $\acc(\stgp_{\ag})$ converges to $1$, $\varepsilon$, which is positive proportion to $1 - \acc(\stgp_{\ag})$, will converge to 0 as $\ag\to \infty$. 

\noindent{\bf Case 2:} $\lim_{\ag\in \infty} \acc(\stgp_{\ag}) = 1$ does not hold. Then there exist a constant $\delta > 0$ and a infinite set $\negT\subseteq R$, s.t. for all $\ag\in \negT$, $\acc(\stgp_{\ag}) \le 1 - \delta$. From Theorem~\ref{thm:nb_deviate} we know that we can construct a series of symmetric deviating strategies $\{\stgp_{\ag}'\}$, and find constants $\ag_0 > 0, \phi > 0$, s.t. for all $\ag > \ag_0$, $\acc(\stgp_{\ag}') \le 1 - 2\exp(-2\phi^2\ag)$. Note that in both $\{\stgp_{\ag}\}$ and $\{\stgp_{\ag}'\}$ every friendly (unfriendly, respectively) agent always votes for $\bA$ ($\bR$, respectively). Therefore, the deviating group contains only contingent agents, and we only need to consider the expected utility of contingent agents. Given an arbitrary  $(\ag > \ag_0)\wedge (\ag\in\negT)$, recall the difference of expected utility for an agent between $\stgp_{\ag}$ and $\stgp_{\ag}'$: 
\begin{align*}
     \ut_{\sag}(\stgp_{\ag}') - \ut_{\sag}(\stgp_{\ag})
    =  & \sum_{\wos\in \Lowset} P_{\wos}(\vt_{\sag}(\wos, \bR) - \vt_{\sag}(\wos, \bA))(\lp_{\wos}^{\bR}(\stgp_{\ag}') - \lp_{\wos}^{\bR}(\stgp_{\ag})) \\
    & + \sum_{\wos\in \Highset} P_{\wos}(\vt_{\sag}(\wos, \bA) - \vt_{\sag}(\wos, \bR))(\lp_{\wos}^{\bA}(\stgp_{\ag}') - \lp_{\wos}^{\bA}(\stgp_{\ag})).
\end{align*}
And recall the definition of fidelity: 
\begin{equation*}
    \acc(\stgp) =  \sum_{\wos\in \Lowset}P_{\wos}\cdot \lp_{\wos}^{\bR}(\stgp) + \sum_{\wos\in \Highset}P_{\wos} \cdot \lp_{\wos}^{\bA}(\stgp).
\end{equation*}
For $\stgp_{\ag}'$ we have $\acc(\stgp_{\ag}') \ge 1 - 2\exp(-2\phi^2\ag)$. Therefore, for every $\wos\in\Highset$ we have $\lp_{\wos}^{\bA}(\stgp_{\ag}') \ge 1 - \frac{ 2\exp(-2\phi^2\ag)}{P_{\wos} }$, and for every $\wos\in\Lowset$, $\lp_{\wos}^{\bR}(\stgp_{\ag}') \ge 1-  \frac{ 2\exp(-2\phi^2\ag)}{P_{\wos} }$. For $\stgp_{\ag}$, on the other hand, we have $\acc(\stgp_{\ag}) \le 1 - \delta$ for all $\ag\in\negT$. Therefore, there exists some $\wos^*\in\Wosset$, s.t.
\begin{enumerate}
    \item if $\wos^*\in \Highset$, $\lp_{\wos^*}^{\bA}(\stgp_{\ag}) \le 1-\delta$, or
    \item if $\wos^*\in\Lowset$, $\lp_{\wos^*}^{\bR}(\stgp_{\ag}) \le 1-\delta$. 
\end{enumerate}
W.l.o.g we assume $\wos^*\in \Highset$. The reasoning for $\wos^*\in \Lowset$ will be almost the same. Then in the difference of expected utility, we have 
\begin{align*}
     \ut_{\sag}(\stgp_{\ag}') - \ut_{\sag}(\stgp_{\ag})
    =  & P_{\wos^*}(\vt_{\sag}(\wos^*, \bA) - \vt_{\sag}(\wos^*, \bR))(\lp_{\wos^*}^{\bA}(\stgp_{\ag}') - \lp_{\wos^*}^{\bA}(\stgp_{\ag})) \\
    & + \sum_{\wos\in \Lowset} P_{\wos}(\vt_{\sag}(\wos, \bR) - \vt_{\sag}(\wos, \bA))(\lp_{\wos}^{\bR}(\stgp_{\ag}') - \lp_{\wos}^{\bR}(\stgp_{\ag}))\\
    & + \sum_{\wos\in \Highset\setminus\{n^*\}} P_{\wos}(\vt_{\sag}(\wos, \bA) - \vt_{\sag}(\wos, \bR))(\lp_{\wos}^{\bA}(\stgp_{\ag}') - \lp_{\wos}^{\bA}(\stgp_{\ag}))\\
    \ge & P_{\wos^*}\cdot 1 \cdot (1 - \frac{ 2\exp(-2\phi^2\ag)}{P_{\wos^*} } - (1-\delta))- \sum_{\wos\in\Wosset\setminus\{\wos^*\}} P_{\wos} \cdot B\cdot \frac{ 2\exp(-2\phi^2\ag)}{P_{\wos}}\\
    =& \delta P_{\wos^*} -((\Wos-1)B+1)\cdot 2\exp(-2\phi^2\ag). \\
    \ge&  \delta\cdot  \min_{\wos\in\Wosset}(P_{\wos}) -((\Wos-1)B+1)\cdot 2\exp(-2\phi^2\ag).
\end{align*}
Note that the first term $ \delta\cdot  \min_{\wos\in\Wosset}(P_{\wos})$ is a constant, while the second term $((\Wos-1)B+1)\cdot 2\exp(-2\phi^2\ag)$ converge to $0$ as $\ag\to\infty$. Therefore, there exists a $\ag_\delta\ge \ag_0$ s.t. for all $(\ag >\ag_{\delta}) \wedge( \ag\in\negT)$, $\ut_{\sag}(\stgp_{\ag}') - \ut_{\sag}(\stgp_{\ag}) > \frac{1}{2} \delta\cdot  \min_{\wos\in\Wosset}(P_{\wos})$. Therefore, for every such $\ag$, $\stgp_{\ag}$ is NOT an $\varepsilon$-strong BNE for all  $\varepsilon \le\frac{1}{2} \delta\cdot  \min_{\wos\in\Wosset}(P_{\wos}) $. 
\end{proof}

\subsection{Theorem~\ref{thm:nb_arbitrary} (Theorem~\ref{thm:arbitrary} for binary setting)}
\label{apx:nb_arbitrary}
Then we give the theorem based on the probability analysis of the fidelity and provide a criterion for judging whether a profile sequence is of high fidelity. 

Consider a strategy profile sequence $\{\stgp_{\ag}\}$. For every $\ag$, we define random variable $\xrv_{\sag}^{\ag}$ as "agent $\sag$ votes for $\bA$". That is, in the instance of $\ag$ agents, $\xrv_{\sag}^{\ag} = 1$ if agent $\sag$ votes for $\bA$, $\xrv_{\sag}^{\ag} = 0$ if $\sag$ votes for $\bR$. Then we can write $\lp_{\wos}^{\bA}(\stgp_{\ag})$ and $\lp_{\wos}^{\bR}(\stgp_{\ag})$:

\begin{align*}
    \lp_{\wos}^{\bA}(\stgp_{\ag}) =& \Pr[\sum_{\sag = 1}^{\ag}\xrv_{\sag}^{\ag} \ge \Thd\cdot \ag \mid \wos]\\
    =& \Pr[\sum_{\sag = 1}^{\ag}\xrv_{\sag}^{\ag} - \sum_{\sag = 1}^{\ag}E[\xrv_{\sag}^{\ag}\mid \wos] \ge \Thd\cdot \ag - \sum_{\sag = 1}^{\ag}E[\xrv_{\sag}^{\ag}\mid \wos] \mid \wos]\\
    \lp_{\wos}^{\bR}(\stgp_{\ag}) =& \Pr[\sum_{\sag = 1}^{\ag}(1-\xrv_{\sag}^{\ag}) > (1 - \Thd)\ag \mid\wos]\\
    =&\Pr[\sum_{\sag = 1}^{\ag}(1-\xrv_{\sag}^{\ag}) - \sum_{\sag = 1}^{\ag}E[1-\xrv_{\sag}^{\ag}\mid \wos] >(1 - \Thd)\ag  - \sum_{\sag = 1}^{\ag}E[1-\xrv_{\sag}^{\ag}\mid\wos] \mid \wos]
\end{align*}

Let the \exshare\ be 
\begin{align*}
    \hthd^\ag_{\wos\bA} =&  \frac{1}{\ag}\sum_{\sag = 1}^{\ag}E[\xrv_{\sag}^{\ag}\mid \wos] - \Thd.\\
    \hthd^\ag_{\wos\bR} =& \frac{1}{\ag}\sum_{\sag = 1}^{\ag}E[1-\xrv_{\sag}^{\ag}\mid \wos] - (1-\Thd). 
\end{align*}

Then, for every $\ag$, let $$\hthd^{\ag} = \min\left( \min_{\wos\in\Highset} (\hthd^\ag_{\wos\bA}), \min_{\wos\in\Lowset} (\hthd^\ag_{\wos\bR})\right).  $$

\begin{thm}
\label{thm:nb_arbitrary}
Given an arbitrary sequence of instances and arbitrary sequence of strategy profiles $\{\stgp_{\ag}\}_{\ag=1}^\infty$ let $\hthd^\ag$ is defined for every $\stgp^\ag$.
\begin{itemize}
    \item If $\liminf_{\ag\to\infty} \sqrt{\ag}\cdot \hthd^{\ag} = +\infty$, 
    $\lim_{\ag\to\infty} \acc(\stgp_\ag) = 1$. 
    \item If $\liminf_{\ag\to\infty} \sqrt{\ag}\cdot \hthd^{\ag} < 0$ (including $-\infty$), i.e., there exists a constant $\eta <0$ and an infinite set $\negT$, s.t. for every $\ag\in \negT$, $\sqrt{\ag}\cdot \hthd^{\ag} \le \eta$, then there exists a constant $\ag_{\eta} > 0$ such that for all $(\ag\in \negT) \wedge (\ag > \ag_{\eta})$, $\acc(\stgp_{\ag}) $ has a constant distance with 1. 
    \item If $\liminf_{\ag\to\infty} \sqrt{\ag}\cdot \hthd^{\ag} \ge 0$ (not including $+\infty$), i.e., there exists a constant $\eta \ge 0$ and an infinite set $\negT$, s.t. for every $\ag\in \negT$, $\sqrt{\ag}\cdot \hthd^{\ag} \le \eta$, and there exists a constant $\psi$ s.t. for every $\ag\in\negT$ and every $\wos\in \Wosset$, $Var(\sum_{\sag = 1}^{\ag} \xrv_{\sag}^{\ag} \mid \wos) \ge \psi\cdot \ag$, then there exists a constant $\ag_{\eta} > 0$ such that for all $(\ag\in \negT) \wedge (\ag > \ag_{\eta})$, $\acc(\stgp_{\ag}) $ has a constant distance with 1.
\end{itemize}

\end{thm}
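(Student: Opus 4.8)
The plan is to rewrite each winning probability as a recentered tail probability of a sum of conditionally independent indicators and then apply a concentration inequality. Concretely, for $\wos\in\Highset$ and each $\ag$ one has $\lp_\wos^\bA(\stgp_\ag)=\Pr[\sum_\sag\xrv_\sag^\ag-\sum_\sag E[\xrv_\sag^\ag\mid\wos]\ge-\hthd^\ag_{\wos\bA}\cdot\ag\mid\wos]$, and symmetrically $\lp_\wos^\bR(\stgp_\ag)$ for $\wos\in\Lowset$ after replacing $\xrv_\sag^\ag$ by $1-\xrv_\sag^\ag$ and $\hthd^\ag_{\wos\bA}$ by $\hthd^\ag_{\wos\bR}$ --- these are exactly the displays preceding the statement. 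Since the signals are i.i.d.\ conditioned on $\wos$ and each (possibly mixed) strategy uses only the agent's own signal together with independent private randomness, the $\xrv_\sag^\ag$ are independent conditioned on $\wos$; being $\{0,1\}$-valued they also satisfy $\operatorname{Var}(\xrv_\sag^\ag\mid\wos)\le\tfrac14$ and $E[|\xrv_\sag^\ag-E[\xrv_\sag^\ag\mid\wos]|^3\mid\wos]\le1$. These facts, together with $\hthd^\ag$ being the minimum over the relevant states and $P_\wos\ge\min_\wos P_\wos>0$, are all I need beyond Hoeffding and Berry--Esseen.

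For the first case I would fix $M>0$, take $\ag_M$ with $\sqrt\ag\,\hthd^\ag>M$ for $\ag>\ag_M$, note $\hthd^\ag_{\wos\bA}\ge\hthd^\ag>0$ for every $\wos\in\Highset$ (and $\hthd^\ag_{\wos\bR}\ge\hthd^\ag>0$ for $\wos\in\Lowset$), and apply Hoeffding with deviation $\hthd^\ag_{\wos\bA}\cdot\ag$ to get $\lp_\wos^\bA(\stgp_\ag)\ge1-e^{-2(\hthd^\ag)^2\ag}\ge1-e^{-2M^2}$ uniformly, hence $\acc(\stgp_\ag)\ge1-e^{-2M^2}$ for $\ag>\ag_M$, and then let $M\to\infty$. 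For the second case, for each $\ag\in\negT$ the minimum defining $\hthd^\ag$ is attained at some state; assume it is $\hthd^\ag_{\wos^\ast\bA}=\hthd^\ag\le\eta/\sqrt\ag<0$ with $\wos^\ast\in\Highset$ (the $\Lowset$ sub-case is identical via $1-\xrv_\sag^\ag$). Then $\lp_{\wos^\ast}^\bA(\stgp_\ag)$ becomes the \emph{upper}-tail probability $\Pr[\sum_\sag\xrv_\sag^\ag-E[\cdot\mid\wos^\ast]\ge|\hthd^\ag|\ag\mid\wos^\ast]$, so Hoeffding gives $\lp_{\wos^\ast}^\bA(\stgp_\ag)\le e^{-2(\hthd^\ag)^2\ag}\le e^{-2\eta^2}<1$; bounding the other terms of $\acc$ by $1$ yields $\acc(\stgp_\ag)\le1-(\min_\wos P_\wos)(1-e^{-2\eta^2})$ for all $\ag\in\negT$.

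The third case is where I expect to do real work. Again I pick, for $\ag\in\negT$, a state $\wos^\ast$ attaining the minimum, say $\wos^\ast\in\Highset$ with $\hthd^\ag_{\wos^\ast\bA}=\hthd^\ag\le\eta/\sqrt\ag$ (the other sub-case is symmetric, the variance hypothesis transferring since $1-\xrv_\sag^\ag$ has the same variance as $\xrv_\sag^\ag$). I set $\sigma_\ag^2=\operatorname{Var}(\sum_\sag\xrv_\sag^\ag\mid\wos^\ast)$ so that $\sqrt{\psi\ag}\le\sigma_\ag\le\tfrac12\sqrt\ag$, and normalize $Z_\ag=\sigma_\ag^{-1}(\sum_\sag\xrv_\sag^\ag-E[\cdot\mid\wos^\ast])$, giving $\lp_{\wos^\ast}^\bA(\stgp_\ag)=\Pr[Z_\ag\ge\theta_\ag]$ with $\theta_\ag=-\hthd^\ag\ag/\sigma_\ag$. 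From $\hthd^\ag\le\eta/\sqrt\ag$ and $\sigma_\ag\ge\sqrt{\psi\ag}$ I obtain $\theta_\ag\ge-\eta/\sqrt\psi=:-\delta_0$ (and $\theta_\ag\ge0$ when $\hthd^\ag<0$), so $\theta_\ag$ stays in a bounded window. Then Berry--Esseen (applied to the genuinely random $\xrv_\sag^\ag$, the deterministic ones being folded into the mean) gives $\sup_x|\Pr[Z_\ag\le x]-\Phi(x)|\le C_0\ag/\sigma_\ag^3\le C_0\psi^{-3/2}\ag^{-1/2}\to0$, so for all large $\ag\in\negT$, $\Pr[Z_\ag<\theta_\ag]\ge\Pr[Z_\ag\le-\delta_0-1]\ge\tfrac12\Phi(-\delta_0-1)=:c_0>0$, whence $\lp_{\wos^\ast}^\bA(\stgp_\ag)\le1-c_0$ and, as before, $\acc(\stgp_\ag)\le1-(\min_\wos P_\wos)c_0$ for all $\ag\in\negT$ past some $\ag_\eta$.

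The main obstacle is exactly this third case: the tail probability there is of constant order rather than exponentially small, so Hoeffding is useless and only a Central-Limit-type statement can pin it away from $1$; the delicate point is keeping the normalized threshold $\theta_\ag$ bounded, which is precisely where both hypotheses enter --- finiteness of $\liminf\sqrt\ag\,\hthd^\ag$ (through $\sigma_\ag\ge\sqrt{\psi\ag}$) prevents $\theta_\ag\to-\infty$, while the variance lower bound simultaneously forces the Berry--Esseen error to be $o(1)$. A routine but pervasive chore in all three cases is handling the $\bA$/$\bR$ (equivalently $\Highset$/$\Lowset$) symmetry and the discrete strict-versus-weak inequalities in the tail events, which I have glossed over above.
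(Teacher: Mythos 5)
Your proposal is correct and follows essentially the same route as the paper's proof: Hoeffding bounds on the recentered vote-share sums for the first two cases (lower-bounding fidelity when the excess share is positive, upper-bounding the winning probability at the argmin state when it is negative), and Berry--Esseen in the third case, where the variance lower bound $\operatorname{Var}(\sum_\sag \xrv_\sag^\ag \mid \wos)\ge\psi\ag$ simultaneously keeps the normalized threshold bounded (at $-\eta/\sqrt{\psi}$) and makes the Berry--Esseen error $O(\ag^{-1/2})$, exactly as in the paper. The only differences are cosmetic: you phrase case 1 via a fixed level $M$ and let $M\to\infty$, and in case 3 you insert a buffer at $-\delta_0-1$ where the paper instead halves $\Phi(-\eta/\sqrt{\psi})$ for large $\ag$.
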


\begin{proof}
\noindent {\bf Case 1: $\liminf_{\ag\to\infty} \sqrt{\ag}\cdot \hthd^{\ag} = +\infty$}. In this case we use the Hoeffding Inequality to give each $\lp_{\wos}^{\bA}(\stgp_{\ag})$ (or $\lp_{\wos}^{\bR}(\stgp_{\ag})$) a lower bound. For all $\wos\in\Highset$, according to the definition of $\hthd^{\ag}$:
\begin{align*}
    \lp_{\wos}^{\bA}(\stgp_{\ag})
    =& \Pr[\sum_{\sag = 1}^{\ag}\xrv_{\sag}^{\ag} - \sum_{\sag = 1}^{\ag}E[\xrv_{\sag}^{\ag}\mid \wos] \ge -\hthd^\ag_{\wos \bA}\cdot \ag \mid \wos]\\
    \ge & \Pr[\sum_{\sag = 1}^{\ag}\xrv_{\sag}^{\ag} - \sum_{\sag = 1}^{\ag}E[\xrv_{\sag}^{\ag}\mid \wos] \ge -\hthd^{\ag}\cdot \ag \mid \wos]. 
\end{align*}
Let $\ag_0$ satisfies that for all $\ag > \ag_0, \sqrt{\ag}\cdot \hthd^\ag > 0$. 
Then for all $\ag > \ag_0$, we can directly apply the Hoeffding Inequality:
\begin{align*}
    \lp_{\wos}^{\bA}(\stgp_{\ag})
    \ge & \Pr[\sum_{\sag = 1}^{\ag}\xrv_{\sag}^{\ag} - \sum_{\sag = 1}^{\ag}E[\xrv_{\sag}^{\ag}\mid \wos] \ge -\hthd^{\ag}\cdot \ag \mid \wos]\\
    \ge & 1 - 2\exp(-2(\hthd^{\ag})^2\ag). 
\end{align*}

Similarly, for all $\wos\in\Lowset$, we have 
\begin{align*}
    \lp_{\wos}^{\bR}(\stgp_{\ag})
    =& \Pr[\sum_{\sag = 1}^{\ag}(1-\xrv_{\sag}^{\ag}) - \sum_{\sag = 1}^{\ag}E[1-\xrv_{\sag}^{\ag}\mid \wos] \ge -\hthd^\ag_{\wos \bR} \cdot \ag\mid \wos]\\
    \ge & \Pr[\sum_{\sag = 1}^{\ag}(1-\xrv_{\sag}^{\ag}) - \sum_{\sag = 1}^{\ag}E[1-\xrv_{\sag}^{\ag}\mid \wos] \ge -\hthd^{\ag} \cdot \ag\mid \wos]\\
    \ge &1 - 2\exp(-2(\hthd^{\ag})^2\ag). 
\end{align*}
 Therefore, the fidelity of $\stgp_{\ag}$ satisfies
 \begin{align}
     \acc(\stgp_\ag) = & \sum_{\wos\in \Lowset}P_{\wos}\cdot \lp_{\wos}^{\bR}(\stgp_\ag) + \sum_{\wos\in \Highset}P_{\wos} \cdot \lp_{\wos}^{\bA}(\stgp_\ag)\\
     \ge & \sum_{\wos\in\Wosset} ( 1 - 2\exp(-2(\hthd^{\ag})^2\ag)\\
      =& 1 - 2\exp(-2(\hthd^{\ag})^2\ag).\label{eq:nb_acc_1} 
 \end{align}

\noindent{\bf Case 2: } $\liminf_{\ag\to\infty} \sqrt{\ag}\cdot \hthd^{\ag} <0$. In this case, there exists a constant $\eta <0$ and an infinite set $\negT$, s.t. for every $\ag\in \negT$, $\sqrt{\ag}\cdot \hthd^{\ag} \le \eta$. For every $\ag$, we define $\wos_{\ag}$ as follows:
$$\wos_{\ag} = \arg\min_\wos\left( \min_{\wos\in\Highset} (\hthd^\ag_{\wos\bA}), \min_{\wos\in\Lowset} (\hthd^\ag_{\wos\bR})\right).  $$
That is, $\wos_{\ag}$ is the world state whose $\hthd^\ag_{\wos_{\ag}, \bA}$ (if $\wos_\in\Highset$) or $\hthd^\ag_{\wos_{\ag}, \bR}$ (if $\wos_{\ag}\in\Lowset$) reach the minimum $\hthd^{\ag}$.  Note that for different $\ag$, $\wos_{\ag}$ may be different, and some of them will be in $\Highset$ while others may be in $\Lowset$. However, the reasoning for different $\wos_{\ag}$ will be the same. Consider the fidelity of $\stgp_{\ag}$ For some $\ag\in \negT$ when world state $\Wosrv = \wos_{\ag}$, and w.l.o.g suppose $\wos_{\ag}\in \Highset$:
\begin{align*}
    \lp_{\wos_{\ag}}^{\bA}(\stgp_{\ag})
    =& \Pr[\sum_{\sag = 1}^{\ag}\xrv_{\sag}^{\ag} - \sum_{\sag = 1}^{\ag}E[\xrv_{\sag}^{\ag}\mid \wos_{\ag}] \ge -\hthd^\ag_{\wos_{\ag} \bA} \cdot \ag\mid \wos_{\ag}]\\
    \le & \Pr[\sum_{\sag = 1}^{\ag}\xrv_{\sag}^{\ag} - \sum_{\sag = 1}^{\ag}E[\xrv_{\sag}^{\ag}\mid \wos_{\ag}] \ge -\eta \sqrt{\ag} \mid \wos_{\ag}]. 
\end{align*}
In this case we have $\eta<0$, thus $-\eta \sqrt{\ag}>0$. Therefore, we can directly apply the Hoeffding Inequality and get 
\begin{align*}
    \lp_{\wos_{\ag}}^{\bA}(\stgp_{\ag})
    \le & \Pr[\sum_{\sag = 1}^{\ag}\xrv_{\sag}^{\ag} - \sum_{\sag = 1}^{\ag}E[\xrv_{\sag}^{\ag}\mid \wos_{\ag}] \ge -\eta \sqrt{\ag} \mid \wos_{\ag}]\\
    \le & \exp(-2\eta^2).
\end{align*}
For every $\ag\in \negT$, we can prove $\lp_{\wos_{\ag}}^{\bA}(\stgp_{\ag}) \le  \exp(-2\eta^2)$ if $\wos_{\ag}\in \Highset$, or $\lp_{\wos_{\ag}}^{\bR}(\stgp_{\ag}) \le  \exp(-2\eta^2)$ if $\wos_{\ag}\in \Lowset$. Therefore, for every $\ag\in \negT$, the fidelity 
\begin{align*}
    \acc(\stgp_{\ag}) \le & 1 - P_{\wos_{\ag}} (1 - \exp(-2\eta^2))\\
    \le & 1 -  (1 - \exp(-2\eta^2)) \min_{\wos\in\Wosset}(P_{\wos}).
\end{align*}

\noindent{\bf Case 3: }$\liminf_{\ag\to\infty} \sqrt{\ag}\cdot \hthd^{\ag} \ge 0$. In this case, we apply the Berry-Esseen Theorem, which bounds the difference between the sum of random variables and normal distribution.
We define $\mathcal{N}$ and $\wos_{\ag}$ similarly as in Case 2. Suppose $\wos_{\ag} \in \Highset$. 
First we rewrite the form of $ \lambda_{\wos_{\ag}}^{\bA}(\stgp_{\ag})$. Let $\yrv_{\sag}^{\ag} = \xrv_{\sag}^{\ag} - E[\xrv_{\sag}^{\ag}\mid \wos_{\ag}]$, and $\vart = \sqrt{\sum_{\sag=1}^{\ag} Var(\yrv_{\sag}^{\ag}\mid \wos_{\ag})} = \sqrt{\sum_{\sag=1}^{\ag} Var(\xrv_{\sag}^{\ag}\mid\wos_{\ag})}$. We have
\begin{align*}
    \lp_{\wos_{\ag}}^{\bA}(\stgp_{\ag})
    \le & \Pr[\sum_{\sag = 1}^{\ag}\xrv_{\sag}^{\ag} - \sum_{\sag = 1}^{\ag}E[\xrv_{\sag}^{\ag}\mid \wos_{\ag}] \ge -\eta\sqrt{\ag} \mid \wos_{\ag}]
\end{align*}
Let $\ag'$ be the number of agents whose $Var(\xrv_{\sag}^{\ag} \mid \wos_{\ag}) > 0$. W.l.o.g, let $Var(\xrv_{\sag}^{\ag} \mid \wos_{\ag}) > 0$ for $\sag =1,2,\cdots, \ag'$, and $Var(\xrv_{\sag}^{\ag} \mid \wos_{\ag}) = 0$ for $\sag = \ag'+1, \ag'+2, \cdots, \ag$. From the assumption $\sum_{\sag = 1}^{\ag} Var(\xrv_{\sag}^{\ag}\mid \wos_{\ag}) \ge \psi\cdot \ag$ we know that $\ag' \ge \psi\cdot \ag$ (because $Var(\xrv_{\sag}^{\ag} \mid \wos_{\ag}) \le 1$). Since $Var(\xrv_{\sag}^{\ag} \mid \wos_{\ag}) = 0$ for $\sag = \ag'+1, \ag'+2, \cdots, \ag$, we have $\xrv_{\sag}^{\ag} = E[\xrv_{\sag}^{\ag}\mid \wos_{\ag}]$ condition on $\wos_{\ag}$ for these $\sag$. Therefore, we have 
\begin{align*}
    \lp_{\wos_{\ag}}^{\bA}(\stgp_{\ag})
    \le & \Pr[\sum_{\sag = 1}^{\ag}\xrv_{\sag}^{\ag} - \sum_{\sag = 1}^{\ag}E[\xrv_{\sag}^{\ag}\mid \wos_{\ag}] \ge -\eta\sqrt{\ag} \mid \wos_{\ag}]\\
    = & \Pr[\sum_{\sag = 1}^{\ag'}\xrv_{\sag}^{\ag} - \sum_{\sag = 1}^{\ag'}E[\xrv_{\sag}^{\ag}\mid \wos_{\ag}] \ge -\eta\sqrt{\ag} \mid \wos_{\ag}]
\end{align*}

Then we rewrite the formula in $\yrv_{\sag}^{\ag}$ and $\vart$: 
\begin{align*}
   \lp_{\wos_{\ag}}^{\bA}(\stgp_{\ag})
    \le & \Pr[\sum_{\sag = 1}^{\ag'}\xrv_{\sag}^{\ag} - \sum_{\sag = 1}^{\ag'}E[\xrv_{\sag}^{\ag}\mid \wos_{\ag}] \ge -\eta\sqrt{\ag} \mid \wos_{\ag}]\\
    = & \Pr\left[\left.\frac{\sum_{\sag = 1}^{\ag'}\yrv_{\sag}^{\ag}}{\vart} \ge -\frac{\eta\sqrt{\ag} }{\vart}  \right| \wos_{\ag}\right].
\end{align*}
Then we apply the Berry-Esseen Theorem and get
\begin{align*}
    \lambda_{\wos_{\ag}}^{\bA}(\stgp_{\ag}) \le 1 - \left(\Phi\left(-\frac{\eta\sqrt{\ag} }{\vart}\right) - C_0\cdot \frac{\sum_{\sag = 1}^{\ag'} E[|\yrv_{\sag}^{\ag}|^3\mid \wos_{\ag}]}{\left(\sum_{\sag = 1}^{\ag'}Var(Y_{\sag}^{\ag}\mid \wos_{\ag})\right)^{3/2}}\right),
\end{align*}
where $\Phi$ is the CDF of standard normal distribution, and $C_0 < 1$ is a constant from the theorem.
We deal with this inequality term by term to show that $\lambda_{\wos_{\ag}}^{\bA}(\stgp_{\ag})$ has a constant difference with 1. First, we show 
\begin{equation*}
    \Phi\left(\frac{\eta\sqrt{\ag} }{\vart}\right) \ge \Phi\left(-\frac{\eta\sqrt{\ag}}{\sqrt{\psi\ag}}\right) = \Phi\left(-\frac{\eta}{\sqrt{\psi}}\right).
\end{equation*}
This is because $\vart^2 = \sum_{\sag = 1}^{\ag} Var(\yrv_{\sag}^{\ag}\mid \wos_{\ag}) \ge \psi\cdot \ag$. The following table reveals  how things works. 
\begin{table}[htbp]
\centering
\begin{tabular}{@{}lll@{}}
\toprule
$\vart$                                                       & $\ge $ & $\sqrt{\psi\cdot \ag}$                                          \\ \midrule
$\frac{\eta\sqrt{\ag}}{\vart}$                   & $\le $ & $\frac{\eta\sqrt{\ag}}{\sqrt{\psi\ag}}$                   \\ \midrule
$-\frac{\eta\sqrt{\ag}}{\vart}$                  & $\ge $ & $-\frac{\eta\sqrt{\ag}}{\sqrt{\psi\ag}}$                  \\ \midrule
$\Phi\left(-\frac{\eta\sqrt{\ag}}{\vart}\right)$ & $\ge $ & $\Phi\left(-\frac{\eta\sqrt{\ag}}{\sqrt{\psi\ag}}\right)$ \\ \bottomrule
\end{tabular}
\caption{Comparison on the left and the right side.\label{tab:phi}}
\end{table}

Secondly, we start to deal with $$\frac{\sum_{\sag = 1}^{\ag'} E[|\yrv_{\sag}^{\ag}|^3\mid\wos_{\ag}]}{\left(\sum_{\sag = 1}^{\ag'}Var(Y_{\sag}^{\ag}\mid \wos_{\ag})\right)^{3/2}}.$$ Firstly, since $-1\le Y_{\sag} \le 1$, we have $E[|\yrv_{\sag}^{\ag}|^3\mid\wos_{\ag}]\le 1$. Therefore, 
\begin{equation*}
    \frac{\sum_{\sag = 1}^{\ag'} E[|\yrv_{\sag}^{\ag}|^3\mid\wos_{\ag}]}{\left(\sum_{\sag = 1}^{\ag'}Var(Y_{\sag}^{\ag}\mid \wos_{\ag})\right)^{3/2}} \le \frac{\ag'}{\left(\sum_{\sag = 1}^{\ag'}Var(Y_{\sag}^{\ag}\mid \wos_{\ag})\right)^{3/2}}.
\end{equation*}
Then, notice that since $\sigma^2_{\sag} = 0$ for $\sag = \ag'+1, \ag'+2, \cdots, \ag$, we have $\sum_{\sag = 1}^{\ag'}Var(Y_{\sag}^{\ag}\mid \wos_{\ag}) = \sum_{\sag = 1}^{\ag}Var(Y_{\sag}^{\ag}\mid \wos_{\ag}) \ge \psi\cdot \ag$. Therefore, 
\begin{equation*}
    \frac{\sum_{\sag = 1}^{\ag'} E[|\yrv_{\sag}^{\ag}|^3\mid\wos_{\ag}]}{\left(\sum_{\sag = 1}^{\ag'}Var(Y_{\sag}^{\ag}\mid \wos_{\ag})\right)^{3/2}} \le \frac{\ag'}{\left(\sum_{\sag = 1}^{\ag'}Var(Y_{\sag}^{\ag}\mid \wos_{\ag})\right)^{3/2}}\le \frac{\ag'}{(\psi\cdot \ag) ^{3/2}}. 
\end{equation*}
Finally, since $\ag'\le \ag$, and $\tc \ge \agc \ag -1$,
\begin{equation*}
    \frac{\sum_{\sag = 1}^{\ag'} E[|\yrv_{\sag}^{\ag}|^3\mid\wos_{\ag}]}{\left(\sum_{\sag = 1}^{\tc'}Var(Y_{\sag}^{\ag}\mid \wos_{\ag})\right)^{3/2}} \le \frac{\ag'}{(\psi\cdot \ag) ^{3/2}}\le \frac{1}{\sqrt{\psi^3\cdot \ag}}
\end{equation*}
Now we are ready to wrap things up: 
\begin{align*}
    \lambda_{\wos_{\ag}}^{\bA}(\stgp_{\ag}) \le& 1 - \left(\Phi\left(-\frac{\eta\sqrt{\ag} }{\vart}\right) - C_0\cdot \frac{\sum_{\sag = 1}^{\ag'} E[|\yrv_{\sag}^{\ag}|^3\mid \wos_{\ag}]}{\left(\sum_{\sag = 1}^{\ag'}Var(Y_{\sag}^{\ag}\mid \wos_{\ag})\right)^{3/2}}\right)\\
    \le & 1 - \left( \Phi\left(-\frac{\eta}{\sqrt{\psi}}\right) - C_0\cdot \frac{1}{\sqrt{\psi^3\cdot \ag}}  \right)\\
    \le & 1 - \left( \Phi\left(-\frac{\eta}{\sqrt{\psi}}\right) - \frac{1}{\sqrt{\psi^3\cdot \ag}}  \right)\\
\end{align*}
The last inequality comes from that $C_0 \le 0.5600$~\cite{Shevtsova10Improvement}.


The same reasoning also works for other $\ag\in\negT$. Therefore, let $\phi = \Phi\left(-\frac{\eta}{\sqrt{\psi}}\right)$, then there must exists a $\ag_\phi$, s.t. for all $(\ag > \ag_\phi) \wedge (\ag\in\negT)$, $\lambda_{\wos_{\ag}}^{\bA}(\stgp_{\ag}) \le 1 - \frac12 \phi$. Therefore, $\acc(\stgp_{\ag}) \le 1 - \frac12P_{\wos_{\ag}}\cdot \phi \le 1 - \frac12 \min_{\wos\in\Wosset} (P_{\wos})\cdot \phi$.
\end{proof}

Similarly to the binary setting, if we apply Theorem~\ref{thm:nb_acc} to each case of Theorem~\ref{thm:nb_arbitrary}, we get a criterion for judging whether a profile sequence is an equilibrium.

\begin{coro}
\label{coro:nb_arbeq}
Given an arbitrary sequence of instances and arbitrary sequence of regular strategy profiles $\{\stgp_{\ag}\}_{\ag=1}^\infty$ let $\hthd^\ag$ is defined for every $\stgp^\ag$.
\begin{itemize}
    \item If $\liminf_{\ag\to\infty} \sqrt{\ag}\cdot \hthd^{\ag} = +\infty$, then
    for all sufficiently large $\ag$, $\stgp_{\ag}$ is an $\varepsilon$-strong BNE with $\varepsilon = o(1)$.
    \item If $\liminf_{\ag\to\infty} \sqrt{\ag}\cdot \hthd^{\ag} < 0$ (including $-\infty$), there exists infinitely many $\ag$ such that $\stgp_\ag$ is not an $\varepsilon$-strong BNE with some constant $\varepsilon$. 
    \item If $\liminf_{\ag\to\infty} \sqrt{\ag}\cdot \hthd^{\ag} \ge 0$ (not including $+\infty$), and there exists a constant $\psi$ s.t. for every $\ag\in\negT$ and every $\wos\in \Wosset$, $Var(\sum_{\sag = 1}^{\ag} \xrv_{\sag}^{\ag} \mid \wos) \ge \psi\cdot \ag$, there exists infinitely many $\ag$ such that $\stgp_\ag$ is not an $\varepsilon$-strong BNE with some constant $\varepsilon$. 
\end{itemize}

\end{coro}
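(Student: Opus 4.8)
The plan is to obtain Corollary~\ref{coro:nb_arbeq} purely by composing two results already established in the excerpt: Theorem~\ref{thm:nb_arbitrary}, which translates the asymptotic behavior of $\sqrt{\ag}\cdot\hthd^{\ag}$ into the asymptotic behavior of the fidelity $\acc(\stgp_{\ag})$, and Theorem~\ref{thm:nb_acc}, which (for a \emph{regular} profile sequence) translates the behavior of the fidelity into a statement about $\varepsilon$-strong BNE. Since the hypothesis of the corollary already pins down $\{\stgp_{\ag}\}$ to be regular, Theorem~\ref{thm:nb_acc} applies verbatim, and the proof amounts to tracking quantifiers across the three cases in order.

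For the first case, assume $\liminf_{\ag\to\infty}\sqrt{\ag}\cdot\hthd^{\ag}=+\infty$. The first bullet of Theorem~\ref{thm:nb_arbitrary} gives $\lim_{\ag\to\infty}\acc(\stgp_{\ag})=1$. Plugging this into the first bullet of Theorem~\ref{thm:nb_acc} yields that for every $\ag>\ag_\mu$ the profile $\stgp_{\ag}$ is an $\varepsilon$-strong BNE with $\varepsilon=o(1)$; in particular this holds for all sufficiently large $\ag$, which is exactly the claimed conclusion. Quantitatively one may record $\varepsilon=\Wos B((\Wos-1)B+1)(1-\acc(\stgp_{\ag}))$ from Lemma~\ref{lem:nb_nash}, which tends to $0$.

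For the second and third cases, the key observation is that each stated hypothesis is precisely the hypothesis under which Theorem~\ref{thm:nb_arbitrary} (its second and third cases, respectively) concludes that $\acc(\stgp_{\ag})$ does \emph{not} converge to $1$ — in fact it produces an infinite set $\negT$ and a constant $c>0$ with $\acc(\stgp_{\ag})\le 1-c$ for all large $\ag\in\negT$. In case~3 the extra variance requirement $Var\!\left(\sum_{\sag=1}^{\ag}\xrv_{\sag}^{\ag}\mid\wos\right)\ge\psi\cdot\ag$ is exactly the one needed by the third case of Theorem~\ref{thm:nb_arbitrary}, so it passes through unchanged. Once we know that $\lim_{\ag\to\infty}\acc(\stgp_{\ag})=1$ fails, the second bullet of Theorem~\ref{thm:nb_acc} immediately gives infinitely many $\ag$ for which $\stgp_{\ag}$ is not an $\varepsilon$-strong BNE for some constant $\varepsilon$ (one can take $\varepsilon=\tfrac12\,c\cdot\min_{\wos}P_{\wos}$, following the proof of Theorem~\ref{thm:nb_acc}).

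There is no genuine obstacle: the corollary is a formal consequence of the two cited theorems. The only points that require a moment of care are (i) checking that ``$\acc(\stgp_{\ag})$ does not converge to $1$'' is literally the negation used as the hypothesis of the second bullet of Theorem~\ref{thm:nb_acc}, and (ii) respecting the threshold $\ag>\ag_\mu$ appearing in the non-binary equilibrium results, absorbed into the ``sufficiently large $\ag$'' phrasing. If anything is mildly delicate, it is only confirming that the infinite set $\negT$ produced by Theorem~\ref{thm:nb_arbitrary} in cases~2 and~3 is usable by Theorem~\ref{thm:nb_acc}; but that theorem's proof only needs $\acc$ to be bounded away from $1$ along \emph{some} infinite set, so any such $\negT$ suffices.
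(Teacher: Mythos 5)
Your proposal is correct and is essentially the paper's own argument: the paper obtains Corollary~\ref{coro:nb_arbeq} exactly by applying Theorem~\ref{thm:nb_acc} (valid since the profiles are regular) to each of the three cases of Theorem~\ref{thm:nb_arbitrary}, just as you do. Your quantifier bookkeeping (the $\ag>\ag_\mu$ threshold, the infinite set $\negT$, and the explicit constant $\varepsilon$) only makes explicit what the paper leaves implicit.
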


\subsection{Corollary~\ref{coro:nb_sym} (Corollary~\ref{coro:sym} for binary setting)}
\label{apx:nb_sym}
Finally, we can directly get a dichotomy for symmetric strategy profiles from Theorem~\ref{thm:nb_arbitrary}, by showing that every case of symmetric profiles falls into some case. 

Given arbitrary $\ag$ and a symmetric strategy profile induced by $\stg = (\bp_1, \bp_2. \cdots, \bp_\Sig)$ the \exshare{} is
\begin{align*}
    \hthd^{\ag}_{\wos\bA}
    = & \frac{1}{N}\sum_{\sag = 1}^{\ag} (\sum_{\sig =1}^{\Sig} P_{\sig \wos}\cdot \bp_{\sig}) -\Thd = \sum_{\sig =1}^{\Sig} P_{\sig \wos}\cdot \bp_{\sig} -\Thd,\\
    \hthd^{\ag}_{\wos\bR}
    = &\frac{1}{N}\sum_{\sag = 1}^{\ag}(\sum_{\sig =1}^{\Sig} P_{\sig \wos}\cdot (1-\bp_{\sig})) -(1-\Thd) = \sum_{\sig =1}^{\Sig} P_{\sig \wos}\cdot (1-\bp_{\sig}) -(1-\Thd), \\
    \hthd^{\ag} = & \min \left( \min_{\wos \in \Highset} \hthd^{\ag}_{\wos\bA}, \min_{\wos \in \Lowset} \hthd^{\ag}_{\wos\bR}\right). 
\end{align*}

An interesting observation is that the \exshare{} of symmetric profiles is independent of $N$. Therefore, for simplicity, we use $\hthd_{\wos\bA}, \hthd_{\wos\bR}$, and $\hthd$ to denote the \exshare{} for symmetric profiles. 

\begin{coro}
\label{coro:nb_sym}
    For an arbitrary strategy $\stg = (\bp_1, \bp_2. \cdots, \bp_\Sig)$ and an arbitrary sequence of instances, let $\{\stgp_{\ag}\}$ be the sequence of strategy profile $\stgp_{\ag}$ induced by $\stg$, and $\hthd$ be the \exshare{} of the strategy profiles. 
\begin{itemize}
    \item If $\hthd > 0$, then there exists a constant $\ag_0 >0$, s.t. for all $\ag > \ag_0$,  $\acc(\stgp_{\ag}) \ge 1 - 2\exp(-\frac12\hthd^2\ag)$.
    \item If $\hthd \le 0$, then there exist constants $\ag_0>0$ and $\eta' > 0$, s.t. for every $\ag > \ag_0$. $\acc(\stgp_{\ag})\le 1 - \eta'$. 
\end{itemize}
\end{coro}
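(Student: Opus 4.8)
The plan is to reduce the whole statement to Theorem~\ref{thm:nb_arbitrary} applied to the symmetric sequence $\{\stgp_\ag\}$, exploiting the fact that a symmetric profile's excess expected vote share does not depend on $\ag$. Write $q_\wos := \sum_{\sig=1}^{\Sig} P_{\sig\wos}\,\bp_\sig$ for the probability that an agent playing $\stg$ votes for $\bA$ in world state $\wos$. Conditioned on $\wos$, the votes $\xrv_1^\ag,\dots,\xrv_\ag^\ag$ are i.i.d.\ Bernoulli$(q_\wos)$, so $\hthd^\ag_{\wos\bA}=q_\wos-\Thd$ and $\hthd^\ag_{\wos\bR}=(1-q_\wos)-(1-\Thd)$ for every $\ag$, whence $\hthd^\ag=\hthd$ and $\sqrt{\ag}\cdot\hthd^\ag=\sqrt{\ag}\cdot\hthd$. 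The sign of $\hthd$ then determines which of the three regimes of Theorem~\ref{thm:nb_arbitrary} we are in.

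If $\hthd>0$, then $\liminf_{\ag\to\infty}\sqrt{\ag}\cdot\hthd^\ag=+\infty$, so Case~1 of Theorem~\ref{thm:nb_arbitrary} applies; substituting $\hthd^\ag=\hthd$ into the Hoeffding estimate proved there gives $\acc(\stgp_\ag)\ge 1-2\exp(-2\hthd^2\ag)\ge 1-2\exp(-\tfrac12\hthd^2\ag)$ for every $\ag$, which is the first bullet. If $\hthd<0$, then $\sqrt{\ag}\cdot\hthd^\ag=\sqrt{\ag}\cdot\hthd\le\hthd<0$ for all $\ag\ge 1$, so Case~2 of Theorem~\ref{thm:nb_arbitrary} applies with $\negT=\mathbb{N}$ and $\eta=\hthd$, yielding $\acc(\stgp_\ag)\le 1-(1-\exp(-2\hthd^2))\min_\wos P_\wos$ for all sufficiently large $\ag$; thus the second bullet holds with $\eta'=(1-\exp(-2\hthd^2))\min_\wos P_\wos>0$.

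The remaining case $\hthd=0$ gives $\liminf_{\ag\to\infty}\sqrt{\ag}\cdot\hthd^\ag=0$, which is non-negative and finite, so Case~3 of Theorem~\ref{thm:nb_arbitrary} is the one to use; its extra hypothesis is a constant $\psi>0$ with $Var(\sum_{\sag=1}^{\ag}\xrv_\sag^\ag\mid\wos)\ge\psi\ag$. Since those votes are i.i.d.\ Bernoulli$(q_\wos)$ conditioned on $\wos$, this variance equals $\ag\,q_\wos(1-q_\wos)$, so it suffices to bound $q_\wos$ away from $0$ and $1$ at the world state $\wos_0$ realizing the minimum in the definition of $\hthd$ --- which is the only state the proof of Case~3 actually uses, and which is independent of $\ag$ here. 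But $\hthd=0$ forces the excess at $\wos_0$ to be $0$, i.e.\ $q_{\wos_0}=\Thd\in(0,1)$, so the hypothesis holds with $\psi=\Thd(1-\Thd)$ and Case~3 yields $\acc(\stgp_\ag)\le 1-\tfrac12\Phi(0)\min_\wos P_\wos=1-\tfrac14\min_\wos P_\wos$ for all large $\ag$, i.e.\ the second bullet with $\eta'=\tfrac14\min_\wos P_\wos$. (Equivalently, one may finish directly: conditioned on $\wos_0$ the number of $\bA$-votes is $\mathrm{Binom}(\ag,\Thd)$, and Berry--Esseen gives $\lp_{\wos_0}^{\bA}(\stgp_\ag)\to\tfrac12$, or $\lp_{\wos_0}^{\bR}$ if $\wos_0\in\Lowset$, which stays a constant distance below $1$.) The only delicate point --- and the main, though minor, obstacle --- is exactly this variance check: Theorem~\ref{thm:nb_arbitrary} quantifies the variance hypothesis over all world states, whereas we only verify it at the argmin state $\wos_0$, where the assumption $\hthd=0$ makes it transparent; all other steps are direct substitutions of $\hthd^\ag=\hthd$ into already-established results.
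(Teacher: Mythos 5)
Your proof is correct, and its overall skeleton is the same as the paper's: use the $\ag$-independence of the excess expected vote share for a symmetric profile to reduce each sign case of $\hthd$ to the corresponding case of Theorem~\ref{thm:nb_arbitrary} (the $\hthd>0$ and $\hthd<0$ bullets are verbatim the paper's argument, including the weakening $1-2\exp(-2\hthd^2\ag)\ge 1-2\exp(-\tfrac12\hthd^2\ag)$ and the choice of $\eta'$). The only genuine divergence is the $\hthd=0$ case. The paper verifies the variance hypothesis of Case~3 at \emph{every} world state, setting $\psi=\min_{\wos}q_\wos(1-q_\wos)$ with $q_\wos=\sum_\sig P_{\sig\wos}\bp_\sig$ and arguing by contradiction that $\psi>0$; you instead verify it only at the argmin state $\wos_0$, where $\hthd=0$ forces $q_{\wos_0}=\Thd\in(0,1)$, and observe that the proof of Case~3 only uses the variance at that state (offering a direct Berry--Esseen/CLT finish, $\lp_{\wos_0}^{\bA}\to\tfrac12$, as a self-contained alternative). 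Two remarks. First, as you acknowledge, invoking Theorem~\ref{thm:nb_arbitrary} as a black box requires the all-states variance bound, so for full rigor you should either use your parenthetical direct argument or explicitly restate Case~3 with the hypothesis imposed only at the minimizing states; this is a presentational fix, not a gap in the mathematics. Second, your single-state check is in fact the more robust route: the paper's claim that $\hthd=0$ forces $q_\wos(1-q_\wos)>0$ at \emph{all} states is not quite right as argued (its ``w.l.o.g.'' overlooks that a state $\wos\in\Lowset$ with $q_\wos=0$, or $\wos\in\Highset$ with $q_\wos=1$, gives a strictly positive excess there and is therefore compatible with $\hthd=0$), whereas the variance at the argmin state, which is all the Berry--Esseen step needs, is exactly $\Thd(1-\Thd)\ag>0$ as you say.
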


\begin{proof}
    For $\hthd > 0$, we have $\liminf{\ag\to \infty} \sqrt{\ag} \cdot \hthd = + \infty$. Therefore, we can apply Case 1 of Theorem~\ref{thm:nb_acc} (Inequality~\ref{eq:nb_acc_1}), and have 
    \begin{equation*}
        \acc(\stgp_\ag) \ge 1 - 2\exp{(-2 \hthd^2 \ag)}. 
    \end{equation*}

    For $\hthd < 0$, we have $\liminf{\ag\to \infty} \sqrt{\ag} \cdot \hthd = - \infty$. Therefore, we can apply Case 2 of Theorem~\ref{thm:nb_acc} with any constant $\eta < 0$. Then there exists a $\ag_{\eta}$ such that for all $\ag > \ag_{\eta}$, we have $\acc(\stgp_\ag) \le 1 - (1 - \exp(-2\eta^2))\min_{\wos\in\Wosset}(P_{\wos})$. Therefore, let $\eta' = (1 - \exp(-2\eta^2))\min_{\wos\in\Wosset}(P_{\wos})$, we have $\acc(\stgp_\ag) \le 1 - \eta'$.  

    For $\hthd = 0$, we have $\liminf{\ag\to \infty} \sqrt{\ag} \cdot \hthd =0$. Therefore, we need to consider the variance.  $\sum_{\ag=1}^{\ag} Var(\xrv_{\sag}^{\ag}\mid\wos)$. For a single contingent agent $\sag$, given $\Wosrv = \wos$, we have 
\begin{align*}
    \Pr[\xrv_{\sag}^{\ag} =0\mid\wos] =&\sum_{\sig =1}^{\Sig} P_{\sig \wos}\cdot (1-\bp_{\sig})\\
    \Pr[\xrv_{\sag}^{\ag} = 1\mid\wos] =& \sum_{\sig =1}^{\Sig} P_{\sig \wos}\cdot \bp_{\sig}. 
\end{align*}
Therefore, we have $$Var(\xrv_{\sag}^{\ag} \mid\wos) = \left(\sum_{\sig =1}^{\Sig} P_{\sig \wos}\cdot (1-\bp_{\sig})\right)\left(\sum_{\sig =1}^{\Sig} P_{\sig \wos}\cdot \bp_{\sig}\right)$$
is a constant. Let $\psi = \min_{\wos\in\Wosset} \left(\sum_{\sig =1}^{\Sig} P_{\sig \wos}\cdot (1-\bp_{\sig})\right)\left(\sum_{\sig =1}^{\Sig} P_{\sig \wos}\cdot \bp_{\sig}\right)$, and we get for every $\wos\in\Wosset$, $\sum_{\ag=1}^{\ag} Var(\xrv_{\sag}^{\ag}\mid\wos)\ge \psi\cdot\ag.$ 

We show by contradiction that $\psi = 0$ will not happen. Suppose it is not the case and $\psi = 0$. Let $\wos_{\psi} = \arg\min_{\wos\in\Wosset} \left(\sum_{\sig =1}^{\Sig} P_{\sig \wos}\cdot (1-\bp_{\sig})\right)\left(\sum_{\sig =1}^{\Sig} P_{\sig \wos}\cdot \bp_{\sig}\right)$. Then we have  $$\left(\sum_{\sig =1}^{\Sig} P_{\sig \wos_\psi}\cdot (1-\bp_{\sig})\right)\left(\sum_{\sig =1}^{\Sig} P_{\sig \wos_\psi}\cdot \bp_{\sig}\right) = 0.$$ W.l.o.g, assume $\left(\sum_{\sig =1}^{\Sig} P_{\sig \wos_\psi}\cdot \bp_{\sig}\right) = 0$. Then we consider $\hthd_{\wos_\psi \bA}$, and have 
\begin{equation*}
        \hthd_{\wos_\psi \bA} = \agc\sum_{\sig =1}^{\Sig} P_{\sig \wos_\psi}\cdot \bp_{\sig} - \Thd = - \Thd < 0,
    \end{equation*}
which contradict with $\hthd =0$. Therefore, $\psi=0$ will not happen. 

Since we have guaranteed that $\psi > 0$, we can apply Theorem~\ref{thm:nb_arbitrary}. Let $\eta>0$ be any positive constant, We can apply Case 3 of Theorem~\ref{thm:nb_arbitrary} and get that there exists a $\ag_{\eta}$, s.t. for all $ \ag > \ag_{\eta}$, $\acc(\stgp_{\ag}) \le 1 - \frac12\min_{\wos\in\Wosset}(P_{\wos})^2 \cdot \Phi\left(-\frac{\eta}{\sqrt{\psi}}\right)$. Therefore, let $\eta' = 1 - \frac12\min_{\wos\in\Wosset}(P_{\wos})^2 \cdot \Phi\left(-\frac{\eta}{\sqrt{\psi}}\right)$, we have $\acc(\stgp_{\ag}) \le \eta'$ for all $\ag > \ag_{\eta}$. 
\end{proof}

}

\end{document}